\definecolor{darkolivegreen}{rgb}{0.33, 0.42, 0.18}
\definecolor{darkblue}{rgb}{0.0, 0.0, 0.55}
\definecolor{falured}{rgb}{0.5, 0.09, 0.09}	
\definecolor{tan}{rgb}{0.82, 0.71, 0.55}
\definecolor{turquoise}{rgb}{0.19, 0.84, 0.78}
\definecolor{lightgray}{rgb}{0.83, 0.83, 0.83}
\definecolor{babyblue}{rgb}{0.54, 0.81, 0.94}
\definecolor{applegreen}{rgb}{0.55, 0.71, 0.0}
\definecolor{amber}{rgb}{1.0, 0.75, 0.0}
\definecolor{atomictangerine}{rgb}{1.0, 0.6, 0.4}
\NewDocumentCommand\define{mo}
{\IfNoValueTF {#2}
{{\label{#1}\textbf{#1}}}
{{\label{#1}\textbf{#2}}}}
\NewDocumentCommand{\term}{mo}
{\IfNoValueTF {#2}
{{\hyperref[#1]{#1}}}
{\hyperref[#1]{#2}}}
\newcommand\altdefine[2]{\label{#1}\textbf{#2}}
\NewDocumentCommand{\altterm}{mm}{\term{#1}[#2]}
\newcommand{\isf}{\ensuremath{\mathsf{IS4}}}
\newcommand{\sfi}{\ensuremath{\mathsf{S4I}}}
\newcommand{\csf}{\ensuremath{\mathsf{CS4}}}
\newcommand{\gsf}{\ensuremath{\mathsf{GS4}}}
\newcommand{\gsfc}{\ensuremath{\mathsf{GS4^c}}}
\newcommand{\csff}{$\csf$ frame}
\newcommand{\rcm}[2]{#1_{#2}}
\newcommand{\bfrm}[1]{#1^\nec}
\newcommand{\dfrm}[1]{#1^{\centernot{\ps}}}
\newcommand{\rel}{\sqsubseteq}
\newcommand{\ler}{\sqsupseteq}
\newcommand\lb{\left\llbracket}
\newcommand\rb{\right\rrbracket}
\newcommand{\val}[1]{\lb #1 \rb}
\newcommand{\iiff}{\leftrightarrow}
\newcommand{\ineg}{\neg}
\newcommand{\imp}{\mathbin \rightarrow}
\newcommand{\lanfull}{{\mathcal L}}
\newcommand{\seq}{\succcurlyeq}
\newcommand{\lgt}[1]{|#1|}
\newcommand{\nec}{\Box}
\newcommand{\ps}{\Diamond}
\newcommand{\peq}{\preccurlyeq}
\newcommand{\from}{\colon}
\newcommand{\fallible}[1]{#1^\bot}
\newcommand{\mlabel}[2]{#2\def\@currentlabel{#2}\label{#1}}
\newcommand\ignore[1]{}
\DeclareSymbolFont{AMSb}{U}{msb}{m}{n}
\DeclareMathSymbol{\N}{\mathbin}{AMSb}{"4E}
\DeclareMathSymbol{\Z}{\mathbin}{AMSb}{"5A}
\DeclareMathSymbol{\R}{\mathbin}{AMSb}{"52}
\DeclareMathSymbol{\Q}{\mathbin}{AMSb}{"51}
\DeclareMathSymbol{\I}{\mathbin}{AMSb}{"49}
\newtheorem{remark}{Remark}
\newtheorem{definition}{Definition}
\newtheorem{example}{Example}
\newtheorem{lemma}{Lemma}
\newtheorem{theorem}{Theorem}
\newtheorem{proposition}{Proposition}
\newtheorem{corollary}{Corollary}
\title{Constructive S4 modal logics with the finite birelational frame property}
\author[1]{Philippe Balbiani. \footnote{\href{mailto:philippe.balbiani@irit.fr}{\tt philippe.balbiani@irit.fr}}}
\author[2]{Mart\'in Di\'eguez \footnote{\href{mailto:martin.dieguezlodeiro@univ-angers.fr}{\tt martin.dieguezlodeiro@univ-angers.fr}}}
\author[3]{David Fern\'andez-Duque \footnote{\href{mailto:fernandez-duque@ub.edu}{\tt fernandez-duque@ub.edu}}}
\author[4]{Brett McLean \footnote{\href{mailto:brett.mclean@ugent.be}{\tt brett.mclean@ugent.be}}}
\affil[1]{IRIT, Toulouse University. Toulouse, France}
\affil[2]{LERIA, University of Angers, Angers, France}
\affil[3]{Department of Philosophy, University of Barcelona, Barcelona, Spain}
\affil[4]{Department of Mathematics, Ghent University. Gent, Belgium}
\date{~}
\begin{document}

\maketitle

\begin{abstract}
The logics $\csf$ and $\isf$ are the two leading intuitionistic variants of the modal logic $\mathsf{S4}$.
Whether the finite model property holds for each of these logics have been long-standing open problems. It was recently shown that $\isf$ has the finite frame property and thus the finite model property. In this paper, we prove that $\csf$ also enjoys the finite frame property.

 Additionally, we investigate the following three logics closely related to $\isf$. The logic $\gsf$ is obtained by adding the G\"odel--Dummett axiom to $\isf$; it is both a superintuitionistic and a fuzzy logic and has previously been given a real-valued semantics. We provide an alternative birelational semantics and prove strong completeness with respect to this semantics. The extension $\gsfc$ of $\gsf$ corresponds to requiring a crisp accessibility relation on the real-valued semantics. We give $\gsfc$ a birelational semantics corresponding to an extra confluence condition on the $\gsf$ birelational semantics and prove strong completeness.  Neither of these two logics have the finite model property with respect to their real-valued semantics, but we prove that they have the finite frame property for their birelational semantics. Establishing the finite birelational frame property immediately establishes decidability, which was previously open for these two logics. Our proofs yield NEXPTIME upper bounds for $\csf$, $\gsf$, and $\gsfc$. The logic $\sfi$ is obtained from $\isf$ by reversing the roles of the modal and intuitionistic relations in the birelational semantics. We also prove the finite frame property, and thereby decidability, for $\sfi$, although our proof does not yield an elementary complexity bound.

In summary, we prove that $\csf$, $\sfi$, $\gsf$, and $\gsfc$ all have the finite birelational frame property, with the latter two having a NEXPTIME validity problem.
\end{abstract}

%\begin{keyword}
%	Intuitionistic modal logic \sep G\"odel modal logic \sep \term{transitive} modal logic \sep finite frame property \sep decidable
%\end{keyword}

\maketitle
\section{Introduction}

{
Intuitionistic logic, with its roots in constructive reasoning, has found important and profound applications in computer science, most notably through the Curry--Howard correspondence. This correspondence, often termed the ``propositions-as-types'' principle, establishes a deep connection between logic and the type theory of programming languages. 

In the context of intuitionistic \emph{modal} logic, this correspondence has led to the development of expressive and powerful frameworks for reasoning about computation and program behavior. For example, modal types can represent computations that are guaranteed to terminate or those that might diverge. %\david{This used to say ``terminate (necessity) or those that might diverge (possibility).'' But this is not quite right: if anything, termination is closer to $\ps halt$ and divergence to $\nec active$.
%Perhaps more appropriate is $\nec \ps halt$ and $\ps \nec active$, but it still seems simplistic to refer to them in terms of their outermost modality. Anyway I think it's easiest to just leave these unspecified as I have done. Alternately, the modal language could be used with a caveat such as `using the language we introduce later'} 
More generally, modal type theory allows modalities to capture program invariants, preconditions, and postconditions, allowing for a more fine-grained and expressive specification of program behavior.

Davis and Pfenning~\cite{davpfe01} extended the Curry--Howard correspondence to the case of a constructive version of the well known modal logic $\sf S4$ in order to achieve a modal account of \emph{staged computation}.\footnote{A staged computation is a computation that proceeds as a sequence of multiple stages, where each stage produces the input for the next. These stages can be carried out on different machines by different users---think, for example, of the (various) compilation and execution stages that are necessary to arrive at the final output of a program.} The resulting system, $\mathsf{Mini}\text{-}\mathsf{ML}^\square$, can be used to introduce the notion of partially evaluated functions in functional programming.
Similarly, Choudhury and Krishnaswami~\cite{choudhury2020recovering} proposed extending the constructive $\sf S4$ typed calculus to obtain a modal type system %\brett{from `extending the constructive $\sf S4$ typed calculus to obtain a modal type system' it's not clear what they did. It sounds like what they extended was already a modal type system, so what was the nature of the extension?}
 for impure functional programming languages that can distinguish between expressions with and without side effects. %We say that a (functional) language is pure if, in addition to the main features of functional languages, it implements several features of imperative programming, such as input/output operations, in such a way that no program transformation that changes the order in which sub-expressions are evaluated is, in general, sound
%~\cite{choudhury2020recovering}
 To do so, the authors use modalities to capture important notions such as \emph{safety} or \emph{purity} in a comonadic type framework.

Within the context of computer security, Miyamoto and Igarashi~\cite{miyamoto2004modal} used a constructive $\sf S4$ to propose a typed $\lambda$-calculus to capture typed-based information flow analysis. In their approach, security levels are mapped to Kripke worlds and the intuitionistic accessibility relation simulates the heritage between the security levels, i.e.~any information available at a lower level is also available at higher levels.

Another important field of application for constructive $\sf S4$ is the so-called Fitch-style modal $\lambda$-calculus~\cite{borghuis1994coming,Martini1996}, where Fitch-style calculi~\cite{fitch} are extended with two functions, `later' and `constant', %\brett{I could not find `later' or `constant' in the cited papers}
 which can be captured in terms of the $\nec$ modality in intuitionistic modal logic. In~\cite{ranald18}, Clouston studies a variation of the modal $\mathsf{S4}$ Fitch-style calculus where the $\nec$ modality is idempotent (i.e. $\nec \varphi$ and $\nec \nec \varphi$ are not only logical equivalent but also identical). 
%\color{red}
%It is also important to mention that intuitionistic modal logic  has played a significant role in the development of formal proof assistants, providing a foundation for expressing and reasoning about modalities, which are crucial for capturing various aforementioned aspects of formalised computational and mathematical reasoning.\david{This paragraph has no references and almost no detail so it should be either removed or extended. BRETT: Agree, reference or remove}
%\color{black}
Most of the applications we have mentioned focus on the $\ps$-free fragment of constructive modal logic, but following on previous work of Fitch~\cite{fitch1948intuitionistic} and Wijesekera~\cite{wijesekera1990constructive}, constructive and intuitionistic modal logic may be enriched with the $\ps$ `possiblity' operator.
Unlike in the classical setting, the modalities $\nec$ and $\ps$ are not inter-definable, and whereas $\nec$ allows us to reason about properties that are ensured throughout a computation, $\ps$ instead indicates outcomes that {\em may} happen, including, for example, the possibility of a crash or of finding a solution to an \textsc{np}-complete problem.

Regarding  $\mathsf{S4}$ specifically, there are (at least) three prominent constructive/intuitionistic variants in the literature.
This paper investigates one of them, known as \altterm{def:csf}{$\csf$} \cite{AlechinaMPR01}, along with some variants of a second, known as \altterm{def:isf}{$\isf$} \cite{Simpson94}.
Each of these logics has a natural axiomatisation and is sound and \term{strongly complete} for a class of Kripke structures based on two preorders: one preorder $\peq$ for the intuitionistic implication, and another preorder $\rel$ for the modalities $\ps$ and $\nec$ (so the structures are particular types of \emph{\term{birelational}} frames).
However, the questions of whether the logics \altterm{def:csf}{$\csf$} and \altterm{def:isf}{$\isf$} enjoy the \emph{finite model property} remained open for twenty years or more: in the case of \altterm{def:csf}{$\csf$} since at least 2001 \cite{AlechinaMPR01}, and of \altterm{def:isf}{$\isf$} at least since 1994 \cite{Simpson94}. The question for \altterm{def:isf}{$\isf$} was recently resolved in the positive in \cite{DBLP:conf/lics/GirlandoKMMS23}. In this paper we do the same for \altterm{def:csf}{$\csf$}, solving the other main open problem concerning non-classical variants of $\mathsf{S4}$.
The third preexisting constructive/intuitionistic variant of $\mathsf{S4}$, which we do not investigate in this paper, is the logic $\mathsf{IntS4}$ studied (along with related variants of other modal logics) by Wolter and Zakharyaschev~\cite{Wolter1997,Wolter1999}.
In contrast to \altterm{def:csf}{$\csf$} and \altterm{def:isf}{$\isf$}, the logic $\mathsf{IntS4}$ was already known to have the finite model property, although the semantics are quite different, in particular employing different binary relations for each of $\ps$ and $\nec$.

Along with \altterm{def:csf}{$\csf$}, we also study three variants of \altterm{def:isf}{$\isf$} and show that they too all enjoy the finite model property.\footnote{For all of these results, including for \altterm{def:csf}{$\csf$}, we prove the stronger \emph{finite frame property}.}
The first two variants are modal extensions of \emph{G\"odel(--Dummett) logic,} a type of fuzzy logic. According to Caicedo and Ram\'irez~\cite{Caicedo2010StandardGM}, such fuzzy versions of modal logic may be used to model situations where modal notions may be applied to vague predicates.
One may consider propositions such as~\emph{``The weather is always warm in Tulum''} or \emph{``Alice knows that Bob is trustworthy'',} where to be `warm' or `trustworthy' are treated as vague (they can be true to varying degrees), and `always' and `knows' are treated as modalities.

Among many approaches for dealing with vagueness,  G\"odel logics~\cite{Godel32} are particularly appealing versions of fuzzy logic, because they can alternatively be viewed as superintuitionistic logics~\cite{heyting30}.
Indeed, propositional G\"odel logic lies strictly between intuitionistic propositional logic and classical propositional logic, and is characterised by a semantics that assigns to each proposition a truth value in the interval $[0,1]$.
If we further assume that modalities are interpreted using a \emph{crisp} accessibility relation (that is, expressions of the form $x\mathrel R y$ take values in $\{0,1\}$), then G\"odel modal logic may be seen as a specialisation of intuitionistic modal logic~\cite{servi1977modal,Simpson94}, where semantically the intuitionistic partial order validates an \term{upward linear}[upward linearity] condition. %\brett{Doesn't this description also apply to the non-crisp logic? What does intuitionistic modal logic mean here? IK?} 

Some modal extensions of G\"odel logic have already been studied in the literature, including combinations of G\"odel logic with $\mathsf K$~\cite{MO11,MetcalfeO09,CaicedoMRR13}, $\mathsf{S4}$~\cite{Caicedo2010StandardGM}, $\mathsf{S5}$~[\citealp{Caicedo2010StandardGM},~\citealp{Hajek1998}, Chap\-ter~8], and with linear temporal logic~\cite{AguileraDFM25}.
In the specific case of G\"odel $\mathsf K$, it has been proved that the complexity of the validity problem is \textsc{pspace}-complete for the box fragment~\cite{MetcalfeO09} (where the logics of the \term{accessibility-crisp} and accessibility-fuzzy frames coincide) and also the diamond fragment~\cite{MO11} (for both \term{accessibility-crisp} frames and accessibility-fuzzy frames). Modal logics over accessibility-fuzzy frames have been axiomatised by Caicedo and Rodr\'iguez~\cite{Caicedo2010StandardGM,CaicedoBiModal}, and over \term{accessibility-crisp} frames by Rodr\'iguez and Vidal \cite{RodriguezV21}. 

Our first \altterm{def:isf}{$\isf$} variant, \altterm{def:gsf}{$\gsf$}, which we prove to be equivalent to G\"odel $\mathsf{S4}$, has semantics defined over a subclass of \altterm{def:isf}{$\isf$} frames: those where the intuitionistic relation is \emph{\term{upward linear}}, so that the logic validates the G\"odel--Dummett axiom $(p\imp q)\vee(q \imp p)$. A second variant, \altterm{def:gsfc}{$\gsfc$}, is defined by imposing a \altterm{forth--down confluent}{`forth--down' confluence} condition on \altterm{def:gsf}{$\gsf$} frames, and \altterm{def:gsfc}{$\gsfc$} is equivalent to \term{accessibility-crisp}} G\"odel $\mathsf{S4}$.

Our third \altterm{def:isf}{$\isf$} variant, \altterm{def:sfi}{$\sfi$}, is defined over the same class of frames as \altterm{def:isf}{$\isf$}, except that the roles of the intuitionistic and the modal preorders are interchanged.
These frame conditions for \altterm{def:sfi}{$\sfi$} are natural from a technical point of view, as they are the minimal conditions required so that both modalities $\ps$ and $\nec$ can be evaluated `classically', i.e.~using (there exists an accessible world\dots) and (for all accessible worlds\dots).

Our finite frame property proofs involve first showing that each logic is sound and complete for a class of birelational frames. 
We do this via canonical model constructions. However, after this the proofs diverge, and we use three different methods to obtain the finite frame property for our four logics, choosing the method that give the simplest argument and best complexity bound for each logic.

In the case of \altterm{def:csf}{$\csf$}, the canonical model may be \emph{relativised} to a finite set of formulas, directly yielding the finite frame property.
In the case for the two G\"odel logics \altterm{def:gsf}{$\gsf$} and \altterm{def:gsfc}{$\gsfc$}, the canonical model cannot be relativised in this way.
However, the associated birelational semantics is restrictive enough that we can obtain finite models directly by taking an appropriate \emph{bisimulation quotient} of a model.
The case of \altterm{def:sfi}{$\sfi$} is somewhat more involved, as we can neither relativise the canonical model nor apply a bisimulation quotient directly.
Instead, we first show that it is sufficient that the logic enjoys the \emph{\term{shallow} frame property}, meaning that any non-\term{valid} formula may be falsified on a frame where the length of any $\prec$-chain is bounded (as usual, $w\prec v$ means that $w\peq v$ but $v\not\peq w$).%
\footnote{The descriptor \emph{\term{shallow}} has been used in a similar way in the context of classical modal logic \cite{SP08}.}
While \term{shallow} models may in principle be infinite, their quotients modulo bisimulation %with respect to $\peq$
 are always finite, thus reducing the problem of proving the finite frame property to that of proving the \term{shallow} frame property, which we then proceed to do.

\ignore{First, for each $\Lambda\in \{\isf, \gsf, \gsfc, \sfi, \csf\}$, we construct a canonical model $\mathcal M^\Lambda_{\mathrm c}  = (W_{\mathrm c} ,{\peq_{\mathrm c} },{\rel_{\mathrm c} },V_{\mathrm c} )$ using fairly standard techniques as found in for example~\cite{Simpson94,AlechinaMPR01}.
Based on this canonical model, we fix a finite set of formulas $\Sigma$ and construct a \term{shallow} model $\mathcal M^\Lambda_\Sigma = (W_\Sigma,{\peq_\Sigma},{\rel_\Sigma},V_\Sigma)$.
The details of the construction vary for each of the three logics we consider, but with the general theme that $w\prec_\Sigma v$ may only hold if there is some $\varphi\in \Sigma$ that holds on $v$ but not on $w$.
Having placed this restriction, it is readily seen that any chain
\[w_0 \prec_\Sigma w_1 \prec_\Sigma  \ldots \prec_\Sigma w_n\]
gives rise to distinct formulas $\varphi_0,\ldots,\varphi_{n-1}$ of $\Sigma$ with the property that $\varphi_i$ holds on $w_{i+1}$ but not on $w_n$.
It follows that the length of the chain is bounded by $|\Sigma|+ 1$.

}

\paragraph*{\bf{Structure of paper}}\

\smallskip
\noindent\underline{\Cref{SecBasic}}: We present the five logics \altterm{def:csf}{$\csf$}, \altterm{def:isf}{$\isf$}, \altterm{def:gsf}{$\gsf$}, \altterm{def:gsfc}{$\gsfc$}, and \altterm{def:sfi}{$\sfi$} studied in this paper. We first give syntactic definitions using Hilbert-style deductive calculi, and then give semantics in terms of classes of birelational structures. We also present the alternative real-valued semantics for \altterm{def:gsf}{$\gsf$} and \altterm{def:gsfc}{$\gsfc$}.
\smallskip

\noindent\underline{\Cref{secSound}}: We prove the soundness of the deductive systems with respect to the associated birelational semantics. We also verify the relationships between the five logics, in particular confirming that they are all distinct.

\smallskip

\noindent\underline{\Cref{secCompCS4}}: We prove the \term{strong completeness}[strong completeness] and finite frame property (hence \term{decidability}[decidability]) of \altterm{def:csf}{$\csf$}. Both proofs are obtained via a canonical model argument, where the strong completeness proof (\Cref{cs4_completeness}) uses a standard canonical model construction, while while the finite frame property (\Cref{thmCS4fin}) is obtained via a \term{relativised canonical model}[relativisation of the canonical model] to a finite set of formulas.
The finite frames are exponentially bounded in size, which implies decidability in \textsc{nexptime} (\Cref{thmCS4dec}).  

\smallskip

\noindent\underline{\Cref{secCompGS4}}: We present \term{strong completeness}[strong completeness] of the logics \altterm{def:isf}{$\isf$}, \altterm{def:sfi}{$\sfi$}, \altterm{def:gsf}{$\gsf$}, and \term{def:gsfc}[${\gsfc}$], by constructing canonical models for them in a uniform way.%first for \altterm{def:isf}{$\isf$}, \altterm{def:gsf}{$\gsf$}, \altterm{def:gsfc}{$\gsfc$}, and \altterm{def:sfi}{$\sfi$}.

%results  and then for  (\Cref{secCompCS4}).\david{Update with changed structure.}
\smallskip

%The second part of this paper
\noindent\underline{\Cref{sec:fmp}}: We prove the finite frame property for the three logics \altterm{def:gsf}{$\gsf$}, \altterm{def:gsfc}{$\gsfc$}, and \altterm{def:sfi}{$\sfi$} (that is, for all the logics we consider except for \altterm{def:csf}{$\csf$}, whose finite frame property we have already established, and  \altterm{def:isf}{$\isf$}, whose finite frame property is studied in~\cite{DBLP:conf/lics/GirlandoKMMS23}).
In \Cref{secSBisim} we introduce notions used in all three finite frame property proofs: two notions of bisimulation and quotients with respect to such bisimulations. In \Cref{sec:godelfinite} we prove the finite frame property for \altterm{def:gsf}{$\gsf$} (\Cref{theorem:g4s_finite_model}) by taking bisimulation quotients of falsifying \altterm{birelational model}{birelational models} from the corresponding class, relying on the soundness and completeness results of the previous two sections. In \Cref{sec:godelcfinite} we prove the finite frame property for \altterm{def:gsfc}{$\gsfc$} (\Cref{thm:gs4c_finite_frame}) in a similar way, although some technical modifications are required. In \Cref{sShallow} we show that, to prove the finite frame property for \altterm{def:sfi}{$\sfi$}, it is sufficient to prove a \emph{\term{shallow}} frame property. Then in \Cref{secFMPS4I} we prove the finite frame property for \altterm{def:sfi}{$\sfi$} (\Cref{thm:s4i_finite_frame}), by showing how to construct \term{shallow} falsifying models by modifying the canonical model.  
\smallskip

\noindent\underline{\Cref{sec:conclusions}}: We finish the paper by drawing conclusions and discussing potential future lines of research. 

\medskip
This paper is an extended version of the two conference papers~\cite{bdd21} and~\cite{DF23}. In this version, we present new, more direct proofs of the finite frame properties for \altterm{def:csf}{$\csf$}, \altterm{def:gsf}{$\gsf$}, and \altterm{def:gsfc}{$\gsfc$}. These new proofs demonstrate that the three logics are in \textsc{nexptime} (Theorems~\ref{thmCS4dec}, \ref{theorem:decide}, and \ref{theorem:g4sc_nexptime}), which did not follow previously.

\section{Syntax and semantics}\label{SecBasic}

In this section we first introduce the propositional modal language shared by all the logics we are interested in. Then we define the logics themselves. Finally, we present the various birelational or real-valued semantics for the logics. 

Fix a \emph{countably infinite} set $\mathbb P$ of propositional variables.\footnote{The restriction to countable $\mathbb P$ can be a genuine restriction for questions about entailment (see, for example, \cite[Proposition~3.1]{CaicedoBiModal}), but not for validity, since formulas are finite.} Then the \define{intuitionistic modal language} $\lanfull$ is defined by the grammar (in Backus--Naur form)
\[\varphi,\psi \coloneqq  \   p \  | \   \bot  \ |  \ \left(\varphi\wedge\psi\right) \  |  \ \left(\varphi\vee\psi\right)  \ |  \ \left(\varphi\imp \psi\right)    \  | \  \ps\varphi \  |  \ \nec\varphi   \]
where $p\in \mathbb P$.
We also use $\ineg\varphi$ as shorthand for $\varphi\imp \bot$ and $\varphi\iiff \psi$ as shorthand for $(\varphi\imp \psi) \wedge (\psi\imp\varphi)$. As usual, the unary modalities bind tighter than the binary connectives; we also assume that $\wedge$ and $\vee$ bind tighter than $\imp$.

In this paper a \define{logic} means a set of formulas closed under substitution (by formulas of the underlying language, which will always be $\lanfull$).

\subsection{Deductive calculi}

We define the logics we are interested in syntactically, via Hilbert-style deductive calculi. %We now give the details.

\begin{definition}
The logic \altdefine{def:csf}{$\bm{\csf}$} is the smallest logic containing all intuitionistic tautologies and closed under the following axioms and rules.
\smallskip

\noindent\fbox{\begin{minipage}{\textwidth}
\begin{enumerate}[itemsep=2pt]
%\item All intuitionistic tautologies.
\item[\mlabel{ax:k:box}{\ensuremath{\bm{ \mathrm K_{\nec}}}}] $\nec (p \imp q) \imp(\nec p \imp \nec q)$
\item[\mlabel{ax:k:dia}{\ensuremath{\bm{\mathrm K_{\ps}}}}] $\nec (p \imp q) \imp( \ps p \imp \ps q)$
\end{enumerate}
\begin{tasks}[label={}, label-width=3em,label-align = {right}](3)
\task[\mlabel{rl:mp}{\ensuremath{\bm{\mathrm{MP}}}}] $\dfrac{\varphi \imp \psi \hspace{10pt} \varphi}{\psi}$
\task[\mlabel{ax:ref:box}{\ensuremath{\bm{\mathrm T_{\nec}}}}] $\nec p \imp p$
\task[\mlabel{ax:trans:dia}{\ensuremath{\bm{\mathrm 4_{\ps}}}}] $\ps \ps p \imp \ps p$
\task[\mlabel{rl:nec}{\ensuremath{\bm{\mathrm{Nec}}}}] $\dfrac{\varphi}{\nec \varphi}$
\task[\mlabel{ax:ref:dia}{\ensuremath{\bm{\mathrm T_{\ps}}}}] $p \imp \ps p$
\task[\mlabel{ax:trans:box}{\ensuremath{\bm{\mathrm 4_{\nec}}}}] $\nec p \imp \nec \nec p$
\end{tasks}
\end{minipage}}
\end{definition}

 We now define the additional axioms

%\begin{multicols}{2}

\begin{tasks}[label={}, label-width=3em,label-align = {right}](2)
%\begin{itemize}
\task[\mlabel{ax:dp}{\ensuremath{\bm{\mathrm{DP}}}}] $\,\ps(p \vee q) \imp \ps p \vee \ps q$;
\task[\mlabel{ax:fs}{\ensuremath{\bm{\mathrm{FS2}}}}] $\,(\ps p \imp \nec q ) \rightarrow \nec (p \imp q )$;
\task[\mlabel{ax:cd}{\ensuremath{\bm{\mathrm{CD}}}}] $\,\nec(p \vee q) \imp \nec p \vee \ps q$;
\task[\mlabel{ax:null}{\ensuremath{\bm{\mathrm{N}}}}] $\,\neg \ps \bot$;

%\end{itemize}
%\vspace{-12pt}
%\begin{itemize}
\task[\mlabel{ax:g}{\ensuremath{\bm{\mathrm{GD}}}}] $\,(p \imp q) \vee (q \imp p)$.
%\end{itemize}
\end{tasks}
Here \ref{ax:dp} stands for `disjunctive possibility', \ref{ax:fs} for `Fischer Servi 2'~\cite{servi1984axiomatizations}, \ref{ax:cd} for `constant domain', \ref{ax:null} for `nullary', and~\ref{ax:g} for `G\"odel--Dummett'.

\begin{definition}\label{definition:additional_axioms}
The logics \label{def:isf}{$\bm{\isf}$}, \label{def:sfi}{$\bm{\sfi}$} \label{def:gsf}{$\bm{\gsf}$}, and \label{def:gsfc}{$\bm{\gsfc}$} are defined similarly to $\csf$, with additional axioms as indicated in \Cref{fig:logs}.
\begin{figure}[h!]
\begin{minipage}[l]{0.49\textwidth}
\begin{align*}
\boldsymbol{\isf}& = {\csf} + \ref{ax:dp} + \ref{ax:null} + \ref{ax:fs} \\[0pt]
\boldsymbol{\sfi}& = {\csf} + \ref{ax:dp} +\ref{ax:null}  + \ref{ax:cd} \\[0pt]
\boldsymbol{\gsf} & = {\isf} + \ref{ax:g}\\[0pt]
\boldsymbol{\gsfc} & = {\gsf} + \ref{ax:cd} = {\sfi} + \ref{ax:fs} + \ref{ax:g}
\end{align*}
\end{minipage}
\begin{minipage}{.5\textwidth}\centering\begin{tikzpicture}
\draw[black] (0,0) node[anchor=north]{\altterm{def:csf}{$\csf$}} -- (-.87, .5)node[anchor=east]{\altterm{def:isf}{$\isf$}} -- (-.87, 1.5)node[anchor=east]{\altterm{def:gsf}{$\gsf$}} -- (0, 2)node[anchor=south]{\altterm{def:gsfc}{$\gsfc$}} -- (.87, 1)node[anchor=west]{\altterm{def:sfi}{$\sfi$}} -- (0,0);
\filldraw (0,0) circle (1pt);
\filldraw (-.87, .5) circle (1pt);
\filldraw (-.87, 1.5) circle (1pt);
\filldraw (0, 2) circle (1pt);
\filldraw (.87, 1) circle (1pt);
\end{tikzpicture}
\end{minipage}
\caption{Definition of logics considered and Hasse diagram for these logics}
\label{fig:logs}
\end{figure}
\end{definition}

Thus \altterm{def:csf}{$\csf$} will serve as the `minimal' logic for the purpose of this paper, and the remaining logics we consider are extensions. We will see in \Cref{secSound} that only the inclusions obtained directly from the definitions (and displayed in \Cref{fig:logs}) hold.

We use the standard Gentzen-style notation that defines $\Gamma \vdash_\Lambda \Delta$ to mean $\bigwedge \Gamma' \imp \bigvee \Delta ' \in \Lambda$ for some finite $\Gamma'\subseteq \Gamma$ and $\Delta'\subseteq\Delta$. We call $\vdash_\Lambda$ the \define{syntactic consequence} relation.
The logic $\Lambda$ will usually be clear from context, in which case we will not reflect it in the notation.
When working with a turnstile, we will follow the usual proof-theoretic conventions of writing $\Gamma,\Delta$ instead of $\Gamma \cup \Delta$ and $\varphi$ instead of $\{\varphi\}$.

\subsection{Semantics}\label{sec:semantics}

We will consider several semantics for our intuitionistic variants of $\mathsf{S4}$, some of which are intuitionistic semantics (that is, featuring \altterm{intuitionistic frame}{intuitionistic frames}) and some of which are real-valued semantics. For the various intuitionistic semantics, it will be convenient to introduce a general class of structures that includes each of these semantics as special cases.

\begin{definition}\label{DefSem}
An \define{intuitionistic frame} is a triple $\mathcal F=(W,\fallible W ,{\peq} )$, where $W$ is a set, $\peq $ is a preorder (that is, a reflexive and transitive binary relation) on $W$, and $\fallible{ W} \subseteq W$ is closed under $\peq $, that is,~whenever $w \in \fallible{W}$ and $w \peq v$, we also have that $v \in \fallible{W}$~\cite{ArisakaDS15}.
We say that $\mathcal F$ is \define{upward linear} if $w\peq u$ and $w\peq v$ implies that $u\peq v$ or $v\peq u$.%\david{I suppose this footnote is making reference to some other paper where `locally linear' is used, I suggest adding the reference. Otherwise, I think it's strange to randomly bring up topology and the comment can just be removed.}%\footnote{With respect to the topology whose opens are the up-closed sets, \term{upward linear} is equivalent to \emph{locally linear} (each point has a linear neighbourhood), but this paper is not about topological semantics, making \emph{\term{upward linear}} the more transparent term.}

A \define{birelational frame} is a quadruple $\mathcal F=(W,\fallible W ,{\peq} ,R)$, where $ (W,\fallible W ,\allowbreak{\peq} )$ is an \term{intuitionistic frame} and $ (W,\fallible W ,R)$ is a Kripke frame in which $\fallible W$ is closed under $R$. It is a \define{bi-preorder} if both $ (W,\fallible W ,{\peq} )$ and $ (W,\fallible W ,R)$ are \altterm{intuitionistic frame}{intuitionistic frames} (i.e., if $R$ is also a preorder), in which case we will write $\rel $ instead of $R$.
We call the \term{birelational frame} $\mathcal F$ upward linear if $(W,\fallible W ,\allowbreak{\peq})$ is \term{upward linear}.
\end{definition}

The set $\fallible W$ is called the set of \define{fallible worlds}, as in~\cite{AlechinaMPR01,ArisakaDS15,Veldman76,ilik2010}.
Note that in a \term{birelational frame}, $\fallible W$ is closed under both $\peq$ and $R$.
When $\fallible W=\varnothing$ we omit it, and view $\mathcal F$ as a triple $(W,{\peq} ,R)$.
In this case, we say that $\mathcal F$ is \define{infallible}.

\begin{definition}
Given a \term{birelational frame} $\mathcal F = (W,\fallible W ,{\peq} ,R)$, a \define{valuation} on $\mathcal F$ is a function $V\from \mathbb P \to 2^{W}$ that is \emph{monotone} in the sense that each $V(p)$ is upward closed with respect to $\peq$ and includes $\fallible W$ (to ensure the validity of $\bot\imp \varphi$).
%We will further extend the notation $V$ to define $V(\bot) = \fallible W$.

We define the satisfaction relation $\models$ recursively by
\begin{itemize}	
\item $(\mathcal M,w) \models p \in\mathbb P$ if $w\in V(p)$;
\item $(\mathcal M,w) \models \bot $ if $w \in \fallible{W}$; 
\item $(\mathcal M,w) \models \varphi \wedge \psi$ if $(\mathcal M,w) \models \varphi $ and $(\mathcal M,w) \models \psi $;
\item $(\mathcal M,w) \models \varphi \vee \psi$ if $(\mathcal M,w) \models \varphi $ or $(\mathcal M,w) \models \psi $;
\item $(\mathcal M,w) \models \varphi \imp \psi$ if for all $v\seq w$, $(\mathcal M,v) \models \varphi $ implies $(\mathcal M,v) \models \psi $;
\item $(\mathcal M,w) \models \ps \varphi  $ if for all $u\seq w$ there exists $v$ such that $u \mathrel R v$ and $(\mathcal M,v) \models \varphi $;
\item $(\mathcal M,w) \models \nec \varphi $ if for all $u,v$ such that $w\peq u \mathrel R v$, $(\mathcal M,v) \models \varphi $.
%\item $(\mathcal M,w) \models \bot $ if $w \in \fallible{W}$ 
%\item $(\mathcal M,w) \models \varphi $ if $w \in \fallible{W}$
\end{itemize}
\end{definition}

It can be easily proved by induction on $\varphi$ that \emph{persistence} (or \emph{monotonicity}) holds: for all $w,v \in W$, if $w \peq v$ and $(\mathcal M,w)\models \varphi$ then $(\mathcal M,v)\models \varphi$.

A \define{birelational model} is a \term{birelational frame} equipped with a \term{valuation}; if the frame is a \term{bi-preorder}, then we speak of a \define{bi-preordered model}.
If $\mathcal M=(W,\fallible W,{\peq},R,V)$ is any model and $\varphi$ any formula, we write $\mathcal M\models\varphi$ if $(\mathcal M,w)\models \varphi$ for every $w\in W$.

Validity is defined as one would expect, but it is worth being precise about the surrounding terminology and notation.

\begin{definition}
Let $\mathfrak C$ be any class of models or class of frames. Let $\Gamma$ be a set of formulas and $\varphi$ a formula. We write $\Gamma \models_\mathfrak C \varphi$  and say that $\varphi$ is a \define{local birelational semantic consequence} of $\Gamma$ if, for each model $\mathcal M =(W,\fallible W,{\peq},R,V)$ from $\mathfrak C$ and each $w \in W$, we have \[\forall \psi \in \Gamma,(\mathcal M, w) \models \psi \implies (\mathcal M, w) \models \varphi.\]

We say that $\varphi$ is \define{valid} on $\mathfrak C$ if $\models_\mathfrak C \varphi$ (that is, if $\varnothing \models_\mathfrak C \varphi$; we may also write $\mathfrak C \models \varphi$), and \define{falsifiable} otherwise. We say that $\varphi$ is \define{satisfiable} on $\mathfrak C$ if $\varphi \not\models_\mathfrak C \bot$. This terminology extends also to single models or frames.
\end{definition}

\begin{definition}
	Let $\mathfrak C$ be a class of frames or class of models. A logic $\Lambda$ is said to be \define{strongly complete} with respect to $\mathfrak C$ if for any set of formulas $\Gamma$ and for any formula $\varphi$, if $\Gamma \models_{\mathfrak C} \varphi$ then $\Gamma \vdash_{\Lambda} \varphi$. %That is, if $\Gamma$ semantically entails $\varphi$ on $\mathsf{S}$ then $\varphi$ is $\Lambda$-deductible from $\Gamma$.
\end{definition}

%\comment{B: add definition of \term{strong}ly complete somewhere? (There are already precise statements in the theorems, later.)}
%\comment{M: Added}

Since this paper is only concerned with constructive analogues of $\mathsf{S4}$, we will only be working with \altterm{birelational frame}{birelational frames} that are \altterm{bi-preorder}{bi-preorders}. Note, however, that even the class of \altterm{bi-preorder}{bi-preorders} does not validate some $\mathsf{S4}$ axioms, as the following example shows.

\begin{example}
Consider the \term{bi-preordered model}\footnote{Recall that $\rel$ indicates that the modal relation is a preorder.} $\mathcal M=(W,{\peq} ,{\rel} , V )$, whose corresponding frame is displayed in \Cref{fig:gen-frame}, with $V (p) = \lbrace x, y, z, t \rbrace$.
Note that we omit $\fallible W$: by convention, this means that $\fallible W =\varnothing$.%		\brett{Render this diagramin the same format as the others. MARTIN: done}

\begin{figure}[h!]\centering
	
	\begin{tikzpicture}[node distance=1.5cm,auto, minimum width=2mm, text width=2mm]
	\node[state,fill=lightgray] (x) {$x$};
	\node[state,fill=lightgray] (y) [above of = x] {$y$};
	\node[state,fill=lightgray] (z) [right of = y] {$z$};
	\node[state,fill=lightgray] (t) [above of = z] {$t$};
	\node[state,fill=lightgray] (w) [right of = t] {$w$};
		
	\path[->,dashed] 
%	(x) edge[loop left] node{$\rel$, $\peq$}(x)
%	(y) edge[loop above] node{$\rel$, $\peq$}(y)
%	(z) edge[loop below] node{$\rel$, $\peq$}(z)
%	(t) edge[loop below] node{$\rel$, $\peq$}(t)
%	(w) edge[loop below] node{$\rel$, $\peq$}(w)
	(x) edge[dashed] (y)
	(z) edge[dashed] (t);
\path[->]
	(y) edge[] (z)
	(t) edge[] (w);
	\end{tikzpicture}
	\caption{A \term{bi-preorder}. Dashed arrow for $\peq$; arrows for $\rel$. (Reflexive lines and arrows not displayed)}
	\label{fig:gen-frame}
\end{figure}		

Since $p$ is satisfied at $x$, $y$, and $z$, we have $(\mathcal M, x)  \models \nec p $. But since $p$ is not satisfied at $w$, we have $(\mathcal M, z)  \not \models \nec p$, and thus $(\mathcal M, x)  \not \models \nec\nec p $. Hence $(\mathcal M, x) \not \models \nec p \imp \nec \nec p$, and therefore $\ref{ax:trans:box}$ is not \term{valid} on the class of all \altterm{bi-preorder}{bi-preorders}.
\end{example}

In order to make~$\ref{ax:trans:box}$ \term{valid}, it is necessary to enforce additional constraints governing the interaction between $\peq$ and $\rel$.
There are various properties that have been used to this end.

\begin{definition}
Let $\mathcal F=(W,\fallible W,{\peq})$ be an \term{intuitionistic frame} and $R\subseteq W\times W$. We say that $R$ is:
\begin{enumerate}[label=(\roman*)]

\item \define{forth--up confluent} (for $\mathcal F$) if, whenever $w\peq w'$ and $w \mathrel R v$, there is $v'$ such that $v\peq v'$ and $w' \mathrel R v'$;

\item \define{back--up confluent} (for $\mathcal F$) if, whenever $w \mathrel R v \peq v'$, there is $w'$ such that $w \peq w' \mathrel R v'$;

\item \define{forth--down confluent} (for $\mathcal F$) if, whenever $w \peq v \mathrel R v'$, there is $w'$ such that $w \mathrel R w' \peq v'$.

\end{enumerate}
\begin{figure}[h!]
\centering
\begin{tikzpicture}
\node at (1,1)[align=center]{forth--up};
\draw[line width=.35mm,-Straight Barb] (0,2) -- (0,0) -- (2,0);
\draw[line width=.35mm,dashed] (2,0) -- (2,2);
\draw[line width=.35mm,-Straight Barb, dashed] (0,2) -- (2,2);
\node at (0,0)[circle, fill, inner sep = .8pt]{};
\node at (0,2)[circle, fill, inner sep = .8pt]{};
\node at (2,0)[circle, fill, inner sep = .8pt]{};
\node at (2,2)[circle, fill, inner sep = .8pt]{};
\node at (0,1)[anchor=east]{$\peq$};
\node at (2,1)[anchor=west]{$\peq$};
\node at (1,0)[anchor=north, align=center,minimum size=3em]{$R$};%\\forth--up};
\node at (1,2)[anchor=south]{$R$};
\end{tikzpicture}
\hspace{.3cm}
\begin{tikzpicture}
\node at (1,1)[align=center]{back--up};
\draw[line width=.35mm,-Straight Barb] (0,0) -- (2,0);
\draw[line width=.35mm](2,0) -- (2,2);
\draw[line width=.35mm,dashed,-Straight Barb] (0,0) -- (0,2)  -- (2,2);
\node at (0,0)[circle, fill, inner sep = .8pt]{};
\node at (0,2)[circle, fill, inner sep = .8pt]{};
\node at (2,0)[circle, fill, inner sep = .8pt]{};
\node at (2,2)[circle, fill, inner sep = .8pt]{};
\node at (0,1)[anchor=east]{$\peq$};
\node at (2,1)[anchor=west]{$\peq$};
\node at (1,0)[anchor=north, align=center,minimum size=3em]{$R$};%\\back--up};
\node at (1,2)[anchor=south]{$R$};
\end{tikzpicture}
\hspace{.3cm}
\begin{tikzpicture}
\node at (1,1)[align=center]{forth--down};
\draw[line width=.35mm,-Straight Barb] (0,0) -- (0,2) -- (2,2);
\draw[line width=.35mm,dashed,-Straight Barb] (0,0) -- (2,0);
\draw[line width=.35mm,dashed](2,0)  -- (2,2);
\node at (0,0)[circle, fill, inner sep = .8pt]{};
\node at (0,2)[circle, fill, inner sep = .8pt]{};
\node at (2,0)[circle, fill, inner sep = .8pt]{};
\node at (2,2)[circle, fill, inner sep = .8pt]{};
\node at (0,1)[anchor=east]{$\peq$};
\node at (2,1)[anchor=west]{$\peq$};
\node at (1,0)[anchor=north, align=center,minimum size=3em]{$R$};%\\forth--down};
\node at (1,2)[anchor=south]{$R$};
\end{tikzpicture}
\caption{Confluence conditions}\label{figure:confluence}
\end{figure}
\Cref{figure:confluence} depicts the confluence conditions. A \term{bi-preorder} $\mathcal F=(W,\allowbreak\fallible W,\allowbreak{\peq},{\rel})$ is \term{forth--up confluent} (respectively, \term{back--up confluent}, \term{forth--down confluent}) if $\rel$ is \term{forth--up confluent} (respectively, \term{back--up confluent}, \term{forth--down confluent}) for $(W,\fallible W,{\peq})$.
\end{definition}

The notions of \altterm{forth--up confluent}{forth--up} and \altterm{forth--down confluent}{forth--down confluence} allow us to simplify the semantic clauses for $\ps$ and $\nec$, respectively.

\begin{lemma}\label{lemClassical}
Let $\mathcal M = (W,\fallible W,{\peq},{\rel},V)$ be any \term{bi-preordered model}, $w\in W$ and $\varphi\in\lanfull$.
\begin{enumerate}

\item\label{classical:one} If $\mathcal M$ is \term{forth--up confluent}, then $(\mathcal M,w) \models \ps\varphi$ if and only if $\exists v\ler w$ such that $(\mathcal M,v) \models \varphi$.

\item\label{classical:two} If $\mathcal M$ is \term{forth--down confluent}, then $(\mathcal M,w) \models \nec \varphi$ if and only if $\forall v\ler w$ we have $(\mathcal M,v) \models \varphi$.
\end{enumerate}
\end{lemma}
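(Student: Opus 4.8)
The plan is to prove each biconditional by unfolding the satisfaction clauses for $\ps$ and $\nec$ and invoking the relevant confluence condition together with the persistence property established just above the lemma (namely that $w\peq v$ and $(\mathcal M,w)\models\varphi$ imply $(\mathcal M,v)\models\varphi$). Throughout I will use that $v\ler w$ unfolds to $w\rel v$ and that $u\seq w$ unfolds to $w\peq u$.

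For part~\ref{classical:one}, the right-to-left direction is the one that uses forth--up confluence. Suppose there is $v\ler w$ with $(\mathcal M,v)\models\varphi$, i.e.\ $w\rel v$. To show $(\mathcal M,w)\models\ps\varphi$, fix an arbitrary $u$ with $w\peq u$. Applying forth--up confluence to $w\peq u$ and $w\rel v$ yields $v'$ with $v\peq v'$ and $u\rel v'$; persistence then gives $(\mathcal M,v')\models\varphi$, which is exactly the witness required by the clause for $\ps$ at $u$. Since $u$ was arbitrary, $(\mathcal M,w)\models\ps\varphi$. The left-to-right direction is immediate and needs no confluence: instantiate the clause for $\ps\varphi$ at $u=w$, using reflexivity of $\peq$ to get $w\seq w$, obtaining some $v$ with $w\rel v$ and $(\mathcal M,v)\models\varphi$, i.e.\ $v\ler w$.

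For part~\ref{classical:two}, the left-to-right direction also does not need the confluence hypothesis: if $(\mathcal M,w)\models\nec\varphi$ and $v\ler w$, then $w\peq w\rel v$ by reflexivity of $\peq$, so the clause for $\nec$ gives $(\mathcal M,v)\models\varphi$. For the right-to-left direction, assume $(\mathcal M,v)\models\varphi$ for every $v\ler w$, and fix $u,v$ with $w\peq u\rel v$. Forth--down confluence applied to $w\peq u\rel v$ produces $w'$ with $w\rel w'$ and $w'\peq v$; then $(\mathcal M,w')\models\varphi$ by assumption (as $w'\ler w$), and persistence along $w'\peq v$ gives $(\mathcal M,v)\models\varphi$. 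Hence $(\mathcal M,w)\models\nec\varphi$.

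Both arguments are routine; the only point requiring care is orienting each confluence condition correctly---forth--up pushes an $\rel$-successor of $w$ forward along a $\peq$-step out of $w$, while forth--down pulls an $\rel$-successor of a $\peq$-successor of $w$ back to an $\rel$-successor of $w$---and remembering to close off each direction with persistence. I do not anticipate a genuine obstacle.
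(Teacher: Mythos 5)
Your proof is correct and follows essentially the same argument as the paper: the paper proves part~\ref{classical:two} exactly as you do (reflexivity for the easy direction, forth--down confluence plus persistence for the converse) and dismisses part~\ref{classical:one} as the dual, which you simply spell out explicitly with forth--up confluence. No gaps.
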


\begin{proof}
We prove the second claim; the first is proven similarly by dualising.
It follows immediately from the semantics of $\nec$ (and reflexivity of $\peq$) that if $(\mathcal M,w) \models \nec \varphi$ and $ v\ler w$, then $(\mathcal M,v) \models \varphi$.
Conversely, suppose that $\forall v\ler w$, we have $(\mathcal M,v) \models \varphi$.
Let $u\seq w$ and $u'\ler u$.
By \altterm{forth--down confluent}{forth--down confluence}, there is $w'\ler w$ with $w'\peq u'$.
By our assumption, $(\mathcal M,w' ) \models\varphi$.
By monotonicity of the satisfaction relation, $(\mathcal M,u' ) \models\varphi$.
Since $u$ and $u'$ were arbitrary subject to $u\seq w$ and $u'\ler u$, we conclude that $(\mathcal M,w ) \models \nec \varphi$.
\end{proof}

\begin{definition}\label{definition:frame_classes}
We define the following classes of \altterm{birelational frame}{birelational frames}.%\comment{B: I have reservations about using this type of terminology. To me it makes more sense that it is used precisely when a corresponding frame definability result holds.}\david{I think that this is the most practical terminology, but one compromise could be to use a different font for the semantic classes. Definability in intuitionistic modal logic is messier than in the classical case.}
\begin{enumerate}\label{testlabel}

\item The class of \altdefine{def:csff}{$\boldsymbol{\csf}$ frames} is the class of \term{back--up confluent} \altterm{bi-preorder}{bi-preorders}.

\item The class of \altdefine{def:isff}{$\boldsymbol{\isf}$ frames} is the class of \term{forth--up confluent}, \term{infallible} \altterm{def:csff}{{\csff}s}

\item The class of \altdefine{def:s4i}{$\boldsymbol{\sfi}$ frames} is the class of \altterm{forth--up confluent}{forth--up} and \term{forth--down confluent}, \term{infallible} \altterm{bi-preorder}{bi-preorders}.

\item The class of \altdefine{def:gs4}{$\boldsymbol{\gsf}$ frames} is the class of \altterm{upward linear}{upward-linear} \altterm{def:isf}{$\isf$} frames. 

\item   The class of \altdefine{def:gs4c}{$\boldsymbol{\gsfc}$ frames} is the class of \term{forth--down confluent} \altterm{def:gsf}{$\gsf$} frames. 
\end{enumerate}
\end{definition}

In the following sections, we will see that these names are appropriate, in the sense that the \term{valid} formulas of each class are precisely the logics they are named after.

\begin{remark}
Even though \altterm{def:csf}{$\csf$} and \altterm{def:isf}{$\isf$} share the same $\ps$-free axioms, $\neg\neg\nec p\to\nec\neg\neg p$ is derivable in \altterm{def:isf}{$\isf$} but not in \altterm{def:csf}{$\csf$} \cite{DasMDiamonds}.
%This formula is not derivable in \altterm{def:sfi}{$\sfi$} either, as can be seen by a model $\mathcal M=(W, {\peq},\rel,V)$, where $W = \{w,w',v,v',v''\}$,   $w\peq w'$, $v\peq v'$, $v\peq v''$, $w\rel v$, $w'\rel v'$ (with both relations being reflexive), and $V(p) := \{w',v'\}$ (see Figure~\ref{figS4Iremark}).
Meanwhile, the $\ps$-free fragments of \altterm{def:csf}{$\csf$} and \altterm{def:sfi}{$\sfi$} coincide, since their $\ps$-free axioms are sound and complete for a class of \term{bi-preorder}[bi-preorders] that are both \term{back--up confluent}[back--up] and \term{forth--down confluent}; in fact, they may be assumed to satisfy the stronger property ${\rel} = {\peq\mathbin{;}\rel\mathbin{;}\peq}$, where $;$ is relational composition~\cite{Wolter1999}.
Figure~\ref{figS4Iremark} provides an example of an \altterm{def:sfi}{$\sfi$} model falsifying $\neg\neg\nec p\to\nec\neg\neg p$.
\end{remark}

\begin{figure}[h!]\centering
		\begin{tikzpicture}[node distance=1.5cm,auto, text width=2mm]
				\node[state,fill=lightgray,line width=0.4mm] (w)  {};
				\node[state,fill=tan,above of=w] (wp)  {$p$};
				\node[state,right of =w,fill=lightgray] (v)  {};
				\node[state,fill=tan,above of=v] (vp)  {$p$};
				\node[state,right of=vp,fill=lightgray] (vs)  {};
					
				\path[->,dashed] (w) edge[] node[] {} (wp);
				\path[->,dashed] (v) edge[] node[] {} (vp);
				\path[->,dashed] (v) edge[] node[] {} (vs);

				\path[->] (w) edge[] node[] {} (v);
				\path[->] (wp) edge[] node[] {} (vp);
	\end{tikzpicture}	
	\caption{\altterm{def:sfi}{$\sfi$} model falsifying $\neg\neg\nec p\to\nec\neg\neg p$ at the evaluation point (indicated with bold outline). Dashed lines represent the $\peq$ relation while solid ones represent the $\rel$ relation, with reflexive arrows omitted. Worlds where $p$ is true are labelled and filled with tan while those were $p$ is false are empty and in light grey. }\label{figS4Iremark}
\end{figure}

The prime interest in G\"odel logics is arguably as fuzzy, rather than superintuitionistic, logics, as can be seen by their alternative real-valued semantics.

\begin{definition}
 A \altdefine{real-valued frame}{real-valued (Kripke) frame} is a pair $\mathcal F = (W,R)$,
	where $ W $ is a set and ${R} \from W \times W \mapsto [0,1]$  is a function. It is \define{reflexive} if for all $w \in W$ we have $R(w,w) = 1$. It is \define{transitive} if for all $u,v,w\in W$ we have $R(u, w) \geq \min\{R(u, v),R(v, w)\}$. 
	If ${R}\from W \times W \mapsto \lbrace 0,1\rbrace$ we will say the frame is \define{accessibility-crisp}[accessibility crisp].

	A \altdefine{g_valuation}{G\"odel valuation} on $\mathcal F$ is a function $V_\cdot(\cdot) \from W\times \lanfull  \to \mathbb R$ such that, for all $w\in  W$,
	\begin{align*}
	V_w(\bot)&=0,\\
	V_w(\varphi\wedge\psi ) &= \min \{ V_w(\varphi  ), V_w( \psi ) \}, \\
	V_w(\varphi \vee \psi ) &= \max \{ V_w(\varphi ) , V_w( \psi ) \},\\
		V_w(\varphi \imp \psi )& =\begin{cases}
			1 & \text{if }V_w(\varphi  ) \le V_w(\psi)\\
			V_w(\psi  )&  \text{otherwise}
		\end{cases}\\
		V_w(\ps\varphi )&=\sup_{v \in W}\lbrace \min\lbrace R(w, v) ,  V_v(\varphi )\rbrace \rbrace,\\
		V_w(\nec\varphi )&=\inf_{v \in W} \lbrace    1 \hbox{ if } R(w, v) \le V_v(\varphi);  V_v(\varphi) \hbox{ otherwise}   \rbrace .
	\end{align*}
	A real-valued Kripke frame equipped with a \altterm{g_valuation}{G\"odel valuation} is a \altdefine{gk_model}{G\"odel--Kripke model}, and an \term{accessibility-crisp} frame equipped with a \altterm{g_valuation}{G\"odel valuation} is an \define{accessibility-crisp model}.
\end{definition}

\begin{definition}
Let $\mathfrak C$ be any class of \altterm{gk_model}{G\"odel--Kripke models} or class of \altterm{real-valued frame}{real-valued frames}. Let $\Gamma$ be a set of formulas and $\varphi$ a formula. We write $\Gamma \models_\mathfrak C \varphi$  and say that $\varphi$ is a \define{local real-valued semantic consequence} of $\Gamma$ if, for each model $\mathcal M$ from $\mathfrak C$ and each $w \in \mathcal M$, we have \begin{equation}\label{lsc}\forall \psi \in \Gamma,V_w(\psi) = 1  \implies V_w(\varphi) = 1.\end{equation}

We say that $\varphi$ is \define{valid} on $\mathfrak C$ if $\models_\mathfrak C \varphi$ (that is, $\varnothing \models_\mathfrak C \varphi$), and \define{falsifiable} otherwise. This terminology extends also to single models or frames.
\end{definition}

The definition \eqref{lsc} of \term{local real-valued semantic consequence} is taken from \cite{CaicedoBiModal}. By a simple `truncate the truth values' argument, over classes of \emph{frames} this definition is equivalent to the \emph{a priori} stronger condition $V_w(\varphi) \geq \inf \{V_w(\psi) \mid \psi \in \Gamma\}$.

The logics \altterm{def:gsf}{$\gsf$} and \altterm{def:gsfc}{$\gsfc$} are known to be \term{strongly complete} with respect to \term{local real-valued semantic consequence} on classes of \altterm{real-valued frame}{real-valued frames}.

\begin{theorem}[strong completeness  for real-valuded semantics]\label{unit_completeness}\ \\*
\noindent(Caicedo and Rodr\'iguez). Let $\models_{\mathbb R\mathsf{S4}}$ denote \term{local real-valued semantic consequence} on the class of \term{reflexive} \term{transitive} \altterm{real-valued frame}{real-valued frames}, and $\vdash_\gsf$ denote the \term{syntactic consequence} relation for the logic \altterm{def:gsf}{$\gsf$}. Then
\begin{equation}\tag{\cite[Theorem~5.1(ii)]{CaicedoBiModal}}\Gamma \models_{\mathbb R\mathsf{S4}} \varphi \iff \Gamma \vdash_\gsf \varphi.\end{equation}

\noindent(Rodr\'iguez and Vidal). Let $\models_{\mathsf c\mathbb R\mathsf{S4}}$ denote \term{local real-valued semantic consequence} on the class of \term{reflexive} \term{transitive} \term{accessibility-crisp} frames, and $\vdash_\gsfc$ denote the \term{syntactic consequence} relation for the logic \altterm{def:gsfc}{$\gsfc$}. Then
\begin{equation}\tag{\cite[Theorem~5.4]{RodriguezV21}}\Gamma \models_{\mathsf c\mathbb R\mathsf{S4}} \varphi \iff \Gamma \vdash_\gsfc \varphi.\end{equation}
\end{theorem}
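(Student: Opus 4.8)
The plan is essentially a pointer: the two biconditionals are exactly \cite[Theorem~5.1(ii)]{CaicedoBiModal} and \cite[Theorem~5.4]{RodriguezV21}, so in the body of the paper one simply invokes them. If one wanted to reprove them, each equivalence splits into \emph{soundness} (if $\Gamma\vdash\varphi$ then $\Gamma\models\varphi$) and \emph{completeness} (the converse), and soundness is the routine half. First I would check that every axiom of $\mathsf{GS4}$ (respectively $\mathsf{GS4^c}$) receives the constant value $1$ on every reflexive transitive real-valued (respectively accessibility-crisp) frame, and that $\ref{rl:mp}$ and $\ref{rl:nec}$ preserve validity: $\ref{rl:mp}$ because $V_w(\varphi\imp\psi)=1$ forces $V_w(\varphi)\le V_w(\psi)$, and $\ref{rl:nec}$ because $V_w(\nec\varphi)=1$ whenever $V_v(\varphi)=1$ for all $v$. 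The intuitionistic tautologies hold because $[0,1]$ with $\min$, $\max$, the G\"odel implication and $0$ is a complete Heyting algebra; $\ref{ax:g}$ holds by linearity of $\le$ on $[0,1]$; $\ref{ax:ref:box}$ and $\ref{ax:ref:dia}$ use $R(w,w)=1$; $\ref{ax:trans:box}$ and $\ref{ax:trans:dia}$ use transitivity of $R$; $\ref{ax:null}$ holds since $V_w(\ps\bot)=\sup_v\min\{R(w,v),0\}=0$; and $\ref{ax:k:box}$, $\ref{ax:k:dia}$, $\ref{ax:dp}$, $\ref{ax:fs}$ follow by direct manipulation of the $\sup$/$\inf$ clauses (in the crisp case, where $R$ is $\{0,1\}$-valued, those clauses collapse to the familiar $\sup$/$\inf$ over $R$-successors).

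For completeness I would proceed in two stages. Stage one: show that $\Gamma\not\vdash_{\mathsf{GS4}}\varphi$ implies $\Gamma$ is satisfiable while $\varphi$ is refuted at a common world of some \emph{birelational} model whose underlying frame is an upward-linear $\mathsf{IS4}$ frame (for $\mathsf{GS4^c}$, additionally forth--down confluent); this is a Lindenbaum-style canonical model construction over prime theories, of the kind carried out for the birelational semantics elsewhere in this paper, with $\ref{ax:g}$ supplying upward linearity of the canonical frame. Stage two: transfer such a birelational model into a reflexive transitive real-valued model. The key point is upward linearity: the $\peq$-up-sets of the frame are nested, so one can assign to each world $w$ and each variable $p$ a value $V_w(p)\in[0,1]$ recording how far up the $\peq$-chain from $w$ one must climb before $p$ holds ($1$ if $p$ already holds at $w$, $0$ if $p$ fails everywhere above $w$), and read off $R(w,v)\in[0,1]$ from how the $\rel$-successors of $w$ sit, via $\peq$, relative to $v$; one then checks by induction on $\varphi$ that $V_w(\varphi)=1$ iff $(\mathcal M,w)\models\varphi$, the intermediate real values tracking the $\peq$-stratification, with reflexivity and transitivity of the new $R$ inherited from $\rel$ together with the confluence conditions. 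Applying the same recipe to a forth--down confluent model yields the $\{0,1\}$-valued $R$ needed for $\mathsf{GS4^c}$.

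The main obstacle is stage two in the \emph{strong} (arbitrary $\Gamma$) setting: when the birelational model is infinite, the $\peq$-chains above a world may have awkward order types, and the embedding into $[0,1]$ must be chosen so that the suprema and infima in the real-valued $\ps$- and $\nec$-clauses land on the correct values --- these extrema need not be attained, so one must first pass to a sufficiently tree-like or countable model, or build the scale densely enough to accommodate the $\rel$-image structure. Verifying soundness of $\ref{ax:fs}$ against the $\sup$/$\inf$ clauses is also the fussiest of the axiom checks. None of this is conceptually deep given the cited theorems, which is why we take the equivalences as established and proceed.
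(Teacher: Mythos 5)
Your proposal matches the paper exactly: Theorem~\ref{unit_completeness} is stated purely as an imported result, with the two equivalences attributed to \cite[Theorem~5.1(ii)]{CaicedoBiModal} and \cite[Theorem~5.4]{RodriguezV21}, and no proof is given in the paper itself. Your additional sketch of how one might reprove the equivalences goes beyond what the paper does, but since your primary move is simply to invoke the cited theorems, the approach is the same and nothing further is required.
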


Note that \Cref{unit_completeness} would not be true if the set of propositional variables were uncountable, even for the modality-free fragment \cite[Proposition~3.1]{CaicedoBiModal}.

\subsubsection*{History of the logics} The logic \altterm{def:csf}{$\csf$} was introduced in \cite{AlechinaMPR01} motivated by syntactic considerations. In the same paper, the class of \altterm{def:csff}{{\csff}s} was defined in order to provide a semantics. A proof of the \term{strongly complete}[strong completeness] of \altterm{def:csf}{$\csf$} with respect to the \altterm{def:csff}{{\csff}s} is sketched in \cite{AlechinaMPR01}. We provide a full proof of this fact in \Cref{secCompCS4}. The logic \altterm{def:isf}{$\isf$} was introduced by Fischer Servi in \cite{servi1984axiomatizations}, where the logic was proven complete with respect to \altterm{def:isf}{$\isf$} frames. The logics \altterm{def:gsf}{$\gsf$} 
 and \altterm{def:gsfc}{$\gsfc$} were introduced in \cite{CaicedoBiModal}\footnote{The axioms and rules in \cite[Theorem~5.1(ii)]{CaicedoBiModal} are slightly different to those we use to define \altterm{def:gsf}{$\gsf$}, but are proven equivalent in \cite[Lemma~2.1]{CaicedoBiModal}.} and \cite{RodriguezV21}, respectively, in order to obtain the completeness results of \Cref{unit_completeness}.
%The logics \altterm{def:sfi}{$\sfi$}, \altterm{def:gsf}{$\gsf$}, and \altterm{def:gsfc}{$\gsfc$} seem to be new, but there is sound motivation for all of them.
%Modal logics based on G\"odel logic, including $\mathsf{GS5}$, have already been studied \cite{CaicedoMRR13}, so it is natural to consider \altterm{def:gsf}{$\gsf$}. 
%Moreover, we are basing this logic on a sub-class of Simpson's models for \altterm{def:isf}{$\isf$} \cite{Simpson94}.

Regarding \altterm{def:sfi}{$\sfi$}, it is easy to check that $(W,{\peq},{\rel})$ is an \altterm{def:sfi}{$\sfi$} frame if and only if $(W,{\rel},{\peq})$ is an \altterm{def:isf}{$\isf$} frame, so this could be viewed as `commuting' the roles of $\mathsf{S4}$ and intuitionistic logic (which we may denote $\mathsf I$).
In fact, in the context of expanding products of modal logics, \altterm{def:isf}{$\isf$} frames are similar to $\mathsf I \times^e \mathsf{S4}$ frames, where $\times^e$ is the `expanding product' as defined in \cite{pml}, and similarly, \altterm{def:sfi}{$\sfi$} frames can be regarded as $\mathsf{S4} \times^e \mathsf I $ frames. Additionally, in view of \Cref{lemClassical}, the \altterm{def:sfi}{$\sfi$} conditions give us the desirable property of being able to evaluate $\ps$ and $\nec$ classically.

\section{Soundness}\label{secSound}

In this section we establish the soundness of the logics \altterm{def:csf}{$\csf$}, \altterm{def:isf}{$\isf$}, \altterm{def:sfi}{$\sfi$}, \altterm{def:gsf}{$\gsf$}, and \altterm{def:gsfc}{$\gsfc$}, with respect to the corresponding classes of \altterm{bi-preorder}{bi-preorders}. Using the soundness results, we also note that these five logics are all distinct and related exactly as shown in \Cref{fig:logs}.
%We begin by recalling that the Gödel--Dummett axiom~\ref{ax:g} is true over the class of \altterm{upward linear}{upward-linear} frames.

It will be useful distinguish the following type of frame. %\david{The terminology can be changed.}
We call a frame \altdefine{regular}{$\ps$-regular} if it is \term{forth--up confluent} and \term{infallible}. Accordingly, a logic is \altdefine{regularlogic}{$\ps$-regular} if it includes \altterm{def:csf}{$\csf$} and contains Axioms \ref{ax:dp} and \ref{ax:null}.
Essentially, on \altterm{regular}{$\ps$-regular} frames, the modality $\ps$ works as one would expect from classical modal logic. The only logic we consider that is not \altterm{regularlogic}{$\ps$-regular} is \altterm{def:csf}{$\csf$}.

We first list some formulas that are \term{valid} over frame classes that need not be \altterm{regular}{$\ps$-regular}.
Of course the validities of any class of \altterm{birelational frame}{birelational frames} are closed under substitution.

\begin{proposition}\label{prop:val-ity0}\leavevmode
\begin{enumerate}[label=(\arabic*)]

\item\label{it:val-ity0a} Axioms~\ref{ax:k:box} and~\ref{ax:k:dia} are \term{valid}, and the inference rules~\ref{rl:mp} and~\ref{rl:nec} preserve validity, over any class of \altterm{birelational frame}{birelational frames}.

\item\label{it:val-ity0b} Axioms~\ref{ax:ref:box},~\ref{ax:ref:dia}, and~\ref{ax:trans:dia} are \term{valid} over the class of \altterm{bi-preorder}{bi-preorders}.

\item\label{it:val-ity1}
Axiom~\ref{ax:trans:box} is \term{valid} over any \term{bi-preorder} that is either \term{back--up confluent} or \term{forth--down confluent}.

\item\label{it:val-itygd}  Axiom~\ref{ax:g} is \term{valid} over the class of \altterm{upward linear}{upward-linear} frames.
\end{enumerate}

\end{proposition}

\begin{proof}
Items \ref{it:val-ity0a} and \ref{it:val-ity0b} are standard (and see for example~\cite{Simpson94}).
We check only~\ref{ax:trans:dia}. Let $\mathcal M = (W,\fallible W,{\peq},{\rel},V)$ be any \term{bi-preordered model}.
Suppose that $(\mathcal M,w) \models \ps\ps p$ and let $v\seq w$.
Then there exists $u\ler v$ such that $(\mathcal M,u) \models \ps p $.
Then since $u\seq u$, there exists $u'\ler u$ such that $(\mathcal M,u') \models p $.
By transitivity, $u'\ler v$, and as $v$ was arbitrary subject to $v\seq w$, we conclude that $(\mathcal M,w) \models \ps p$.

For item \ref{it:val-ity1}, the case for \term{back--up confluent}[back--up confluence] is known, as such a frame is a \altterm{def:csff}{\csff}~\cite{AlechinaMPR01}.
If instead $\mathcal M = (W,\fallible W,{\peq},{\rel},V)$ is \term{forth--down confluent}, we have by \Cref{lemClassical} that for any $w\in W$, $(\mathcal M,w) \models\nec p$ if and only if $\forall v\ler w$, $(\mathcal M,v) \models p$.
Using this characterisation, we may reason as in the classical case to deduce $\mathcal M\models\nec p \imp\nec\nec p$ from the transitivity of $\rel$.

Item \ref{it:val-itygd} is also well known%(see e.g.~\cite{Baaz2011}) B: I could not find this in the reference; I don't think a reference is needed anyway
, but we provide a proof.
Assume for a contradiction that $\ref{ax:g}$ is not \term{valid} on the class of \term{def:gs4}[${\gsf}$ frames].
		This means that  $(\mathcal M,w) \not \models p \imp q$ and  $(\mathcal M,w)\not  \models  p \imp q$ for some  model $\mathcal M$ based on a \term{def:gs4}[${\gsf}$ frame] and some $w\in W$.
		From the former assumption it follows that there is $v\seq w$ such that $(\mathcal M,v)  \models p$ and  $(\mathcal M,v) \not \models q$.
		From the latter assumption it follows that there is $v'\seq w$ such that $(\mathcal M,v')  \models q$ and  $(\mathcal M,v') \not \models p$.
		Since ${\mathcal M}$ is \term{upward linear}, we need to consider two cases: if $v \peq v'$ we get that  $(\mathcal M,v')  \models p$; if $v'\peq v$ we conclude that  $(\mathcal M,v)  \models q$. In either case we reach a contradiction.
\end{proof}	

Note that with \ref{it:val-ity0a}, \ref{it:val-ity0b}, and the \term{back--up confluent}[back--up] part of \ref{it:val-ity1}, we have confirmed the soundness of the logic \altterm{def:csf}{$\csf$} with respect to the class of \altterm{def:csff}{{\csff}s} (and thus with respect to the subclasses of \term{def:isff}[$\isf$ frames], \term{def:gs4}[$\gsf$ frames], and \term{def:gs4c}[$\gsfc$ frames]). With \ref{it:val-ity0a}, \ref{it:val-ity0b}, and the \term{forth--down confluent}[forth--down] part of \ref{it:val-ity1}, we have \emph{also} shown the soundness of  \term{def:csf}[$\csf$] with respect to the class of \term{def:isff}[$\sfi$ frames], even though this class is not a subclass of the \altterm{def:csff}{{\csff}s}. In summary, our minimal logic \term{def:csf}[$\csf$] is sound with respect to all five of the frame classes of \Cref{definition:frame_classes}.

Next we focus our attention on \altterm{regular}{$\ps$-regular} classes.

\begin{proposition}\label{prop:val-ity1}
Let $\mathfrak C$ be any class of \altterm{regular}{$\ps$-regular} \altterm{bi-preorder}{bi-preorders}.

\begin{enumerate}[label=(\alph*)]

\item\label{it:val-ityn} Axiom~\ref{ax:null} is \term{valid} on $\mathfrak C$.

\item\label{it:val-itydp} Axiom~\ref{ax:dp} is \term{valid} on $\mathfrak C$.

		\item\label{itOne}  If the frames of $\mathfrak C$ are \term{back--up confluent}, then Axiom \ref{ax:fs} is \term{valid} on $\mathfrak C$.

		\item\label{itFour}   If the frames of $\mathfrak C$ are \term{forth--down confluent}, then Axiom $\ref{ax:cd}$ is \term{valid} on $\mathfrak C$.

\end{enumerate}

\end{proposition}

\begin{proof}
Item \ref{it:val-ityn} is a straightforward consequence of infallibility. Item \ref{it:val-itydp} is easily derived from \Cref{lemClassical}.\ref{classical:one}, since satisfaction of $\ps (p\vee q)$ requires a single witness, which is then sufficient to witness either $\ps p$ or $\ps q$. Item \ref{itOne} is proven in \cite{Simpson94}.

For \ref{itFour}, assume that $\mathcal M$ is \altterm{forth--up confluent}{forth--up} and \term{forth--down confluent}, and that $(\mathcal M,w) \models \nec (\varphi \vee \psi)$.
If $(\mathcal M,w)  \models \nec \varphi$, there is nothing to prove, so assume that  $(\mathcal M,w) \not \models \nec \varphi$.
		From \Cref{lemClassical}.\ref{classical:two} it follows that there is $v \ler w$ such that $(\mathcal M,v) \not \models \varphi$.
		But $(\mathcal M,v)  \models \varphi \vee\psi$, so $(\mathcal M,v)  \models  \psi$, and from \Cref{lemClassical}.\ref{classical:one} we obtain $(\mathcal M,w)  \models \ps \psi$.
%	\end{itemize}
\end{proof}

Given the soundness of \altterm{def:csf}{$\csf$} (with respect to any of our five frame classes), to verify the soundness of the remaining logics with respect to their corresponding frame classes, it suffices to check the remaining axioms in the definition of the logics. Examining \Cref{definition:additional_axioms}, we see that we have soundness of:
\begin{description}
\item[$\isf\phantom{^c}$]by \ref{it:val-ityn}, \ref{it:val-itydp}, and \ref{itOne},
\item[$\sfi\phantom{^c}$]by \ref{it:val-ityn}, \ref{it:val-itydp}, and \ref{itFour},
\end{description}
then building on the soundness of \altterm{def:isf}{$\isf$}:
\begin{description}
\item[$\gsf\phantom{^c}$]by \Cref{prop:val-ity0}\ref{it:val-itygd},
\end{description}
then
\begin{description}
\item[\altterm{def:gsfc}{$\gsfc$}]by \ref{itFour}.
\end{description}% It follows that each of the logics we consider is sound for its respective class from \Cref{definition:frame_classes}; for example, \altterm{def:gsfc}{$\gsfc$} frames are \term{upward linear} \altterm{def:isf}{$\isf$} frames that are also \altterm{def:sfi}{$\sfi$} frames (hence \term{forth--up confluent}), therefore \term{valid}ate Axioms \ref{ax:g}, \ref{ax:dp}, and \ref{ax:cd}.
Thus we obtain the following.

\begin{corollary}[soundness]\label{cor:sound}
Each of the logics \altterm{def:csf}{$\csf$}, \altterm{def:isf}{$\isf$}, \altterm{def:sfi}{$\sfi$}, \altterm{def:gsf}{$\gsf$}, and \altterm{def:gsfc}{$\gsfc$} is sound for its respective class of frames, as specified in \Cref{definition:frame_classes}.
\end{corollary}

To end this section, we now verify that the inclusions among our logics are exactly as depicted by the Hasse diagram in \Cref{fig:logs}. With the exception of comparisons involving  \altterm{def:sfi}{$\sfi$}, these relationships are all well known. First we show that the marked inclusions are proper. In each case, we exhibit a formula contained in the larger logic but falsified on the frame class corresponding to smaller logic. By soundness of the smaller logic, it does not contain the formula.
{\parindent=0pt\paragraph{\underline{$\isf \not\subseteq \csf$ and $\sfi \not\subseteq \csf$}}
By definition, $\ref{ax:null} \in \isf$ and $\ref{ax:null} \in \sfi$. Consider the \term{bi-preorder} $(\{i,f\}, \{f\}, \{(i,i),\allowbreak (f,f)\},\allowbreak \{(i,i), (f,f), (i,f)\})$ depicted on the top left of \Cref{fig:false_n}. This frame is \term{back--up confluent} and thus a \altterm{def:csff}{\csff}. The \term{infallible} world $i$ satisfies $\ps\bot$ (under any \term{valuation}), and so falsifies $\neg\ps\bot$. 

\paragraph{\underline{$\gsf \not\subseteq \isf$ and $\gsfc \not\subseteq \sfi$}}
By definition, $\ref{ax:g} \in \gsf$ and $\ref{ax:g} \in \gsfc$. Consider the \term{infallible} \term{bi-preordered model} depicted on the top right of \Cref{fig:false_n}, where $p$ and $q$ are propositional variables. As $\rel$ is trivial, the frame validates all confluence conditions and in particular is both an \altterm{def:isf}{$\isf$} frame and an \altterm{def:sfi}{$\sfi$} frame. The world at the bottom falsifies $(p \imp q) \vee (q \imp p)$.

\paragraph{\underline{$\gsfc \not\subseteq \gsf$}}
By definition, $\ref{ax:cd} \in \gsfc$. Consider the \term{infallible} \term{bi-preordered model} depicted on the bottom left of \Cref{fig:false_n}. It is easy to check that the frame is \term{back--up confluent}, \term{forth--up confluent}, and \term{upward linear}, and thus is a \altterm{def:gsf}{$\gsf$} frame. The world at the bottom satisfies $\nec(p \wedge q)$ but neither $\nec p$ nor $\ps q$, and thus falsifies $\ref{ax:cd}$.
}
\medskip

Finally we verify the incomparability of \altterm{def:sfi}{$\sfi$} with either \altterm{def:isf}{$\isf$} or \altterm{def:gsf}{$\gsf$}.

{\parindent=0pt\paragraph{\underline{$\sfi \not\subseteq \gsf$}} By definition, $\ref{ax:cd} \in \sfi$. As we have just seen, $\ref{ax:cd} \not\in \gsf$.

\paragraph{\underline{$\isf \not\subseteq \sfi$}} By definition, $\ref{ax:fs} \in \isf$. Consider the \term{infallible} \term{bi-preordered model} depicted on the bottom right of \Cref{fig:false_n}. It is easy to check that the frame is \term{back--up confluent} and \term{forth--up confluent}, and thus is a \altterm{def:sfi}{$\sfi$} frame. The bottom left world falsifies $\ps p$ and thus satisfies $\ps p \imp \nec q$. But the same world does not satisfy $\nec (p \imp q)$ and thus falsifies \ref{ax:fs}.
}

\begin{figure}
\centering
\begin{tikzpicture}
\draw (0,0)node[anchor=north]{$i$} -- (2,0)node[anchor=north]{$f$};
\node at (0,0)[circle, fill, inner sep = 1pt]{};
\node at (2,0)[circle, fill, inner sep = 1pt]{};
\node at (1,0)[anchor=north]{$\rel$};
\draw[dashed] (1.5,.5) -- (2.5,.5) -- (2.5,-.5) -- (1.5,-.5) -- cycle;
\node at (2.4,-.4)[anchor=north west]{$\fallible W$};
\end{tikzpicture}
\hspace{1cm}
\begin{tikzpicture}
\draw (-1,2)node[anchor=east]{$p$} -- (0,0) -- (1,2)node[anchor=west]{$q$};
\node at (0,0)[circle, fill, inner sep = 1pt]{};
\node at (-1,2)[circle, fill, inner sep = 1pt]{};
\node at (1,2)[circle, fill, inner sep = 1pt]{};
\node at (-.5,1)[anchor= east]{$\peq$};
\node at (.5,1)[anchor= west]{$\peq$};
\end{tikzpicture}

\vspace{1cm}
\begin{tikzpicture}
\draw (0,0)node[anchor=east]{$p$} --(0,2) node[anchor=east]{$p,q$} -- (2,2)node[anchor=west]{$q$};
\node at (0,0)[circle, fill, inner sep = 1pt]{};
\node at (0,2)[circle, fill, inner sep = 1pt]{};
\node at (2,2)[circle, fill, inner sep = 1pt]{};
\node at (0,1)[anchor= east]{$\peq$};
\node at (1,2)[anchor= north]{$\rel$};
\end{tikzpicture}
\hspace{1.5cm}
\begin{tikzpicture}
\draw (0,0)node[anchor=east]{} --(2,0) node[anchor=east]{} -- (2,2)node[anchor=west]{$p$};
\node at (0,0)[circle, fill, inner sep = 1pt]{};
\node at (2,0)[circle, fill, inner sep = 1pt]{};
\node at (2,2)[circle, fill, inner sep = 1pt]{};
\node at (1,0)[anchor= south]{$\rel$};
\node at (2,1)[anchor= east]{$\peq$};
\end{tikzpicture}
\caption{Frames and models separating the logics. (Reflexive arrows not displayed)}\label{fig:false_n}
\end{figure}

\section{Strong completeness and finite model property for \texorpdfstring{{$\csf$}}{CS4}}\label{secCompCS4}

In this section, we prove that \altterm{def:csf}{$\csf$} is strongly complete with respect to its birelational semantics and also has the finite frame property. The strong completeness proof (\Cref{cs4_completeness}) uses a canonical model construction, while the finite frame property (\Cref{thmCS4fin}) uses the same type of construction, but \term{relativised canonical model}[relativised] to a finite set of formulas.
The finite frames are exponentially bounded in size, which implies decidability in \textsc{nexptime}.

It will be convenient to observe that the following formulas are  contained in \altterm{def:csf}{$\csf$} (and thus also its extensions  \altterm{def:isf}{$\isf$}, \altterm{def:sfi}{$\sfi$}, \altterm{def:gsf}{$\gsf$}, and \altterm{def:gsfc}{$\gsfc$}).
We leave the proofs to the reader.
\begin{proposition}\label{auxiliary}
The following formulas 
belong to \term{def:csf}[$\csf$].
\begin{enumerate}[label=(\arabic*)]
	\item\label{der:ax1} $\ps {\left( p \rightarrow q\right)} \rightarrow (\nec p \rightarrow \ps q)$,\footnote{This is the first of Fischer Servi's two \emph{connecting axioms} \cite{servi1984axiomatizations}.}	
	
	\item \label{der:ax3} $\ps ( p \wedge q) \rightarrow \ps p \wedge \ps q$,

	\item\label{der:box_dist_and} \mbox{$\nec p \wedge \nec q  \rightarrow \nec (p \wedge q)$},
	
	\item\label{der:ax2} \mbox{$\nec p \vee \nec q  \rightarrow \nec (p \vee q)$}.
\end{enumerate}
%\end{multicols}
%\vspace{-13pt}
%\begin{enumerate}[start=3,itemsep=0pt,label=(\arabic*)]
%
%\end{enumerate}
\end{proposition}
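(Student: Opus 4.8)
The plan is to derive each of the four formulas from an appropriate intuitionistic tautology by applying the necessitation rule~\ref{rl:nec} together with the distribution axioms~\ref{ax:k:box} and~\ref{ax:k:dia}, and then to close off the argument by intuitionistic propositional reasoning, which is available since $\mathsf{CS4}$ contains all intuitionistic tautologies and is closed under~\ref{rl:mp}. Notably, none of the four formulas need the reflexivity or transitivity axioms, so the derivations are short and uniform; throughout I freely use that the theorems of $\mathsf{CS4}$ are closed under substitution, under~\ref{rl:mp}, and hence under intuitionistic propositional consequence (in particular, under antecedent currying/uncurrying and disjunctive case analysis).

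For~\ref{der:box_dist_and}, I would start from the intuitionistic tautology $p\imp(q\imp p\wedge q)$, necessitate, and apply~\ref{ax:k:box} twice (each time with~\ref{rl:mp}) to obtain $\nec p\imp(\nec q\imp\nec(p\wedge q))$, which uncurries to the desired formula. For~\ref{der:ax2}, I would instead start from $p\imp p\vee q$ and $q\imp p\vee q$, necessitate each, and apply~\ref{ax:k:box} to get $\nec p\imp\nec(p\vee q)$ and $\nec q\imp\nec(p\vee q)$; a propositional case split then yields $\nec p\vee\nec q\imp\nec(p\vee q)$. For~\ref{der:ax3}, I would start from the projections $p\wedge q\imp p$ and $p\wedge q\imp q$, necessitate, and apply~\ref{ax:k:dia} to get $\ps(p\wedge q)\imp\ps p$ and $\ps(p\wedge q)\imp\ps q$, whose conjunction is the required formula.

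The only one mixing both modalities is~\ref{der:ax1}. Here I would begin from the intuitionistic tautology $p\imp((p\imp q)\imp q)$, necessitate, and apply~\ref{ax:k:box} to obtain $\nec p\imp\nec((p\imp q)\imp q)$. Applying the substitution instance of~\ref{ax:k:dia} with $p\imp q$ in place of $p$ gives $\nec((p\imp q)\imp q)\imp(\ps(p\imp q)\imp\ps q)$. Chaining these two implications (transitivity of $\imp$, intuitionistically) yields $\nec p\imp(\ps(p\imp q)\imp\ps q)$, and permuting the two antecedents gives $\ps(p\imp q)\imp(\nec p\imp\ps q)$, as desired.

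There is no real obstacle here beyond bookkeeping: the whole argument is a routine exercise in normal-modal-logic manipulation over an intuitionistic propositional base, and the only thing to be careful about is keeping track of which substitution instances of~\ref{ax:k:box} and~\ref{ax:k:dia} are used at each step. This is presumably why the authors leave the details to the reader.
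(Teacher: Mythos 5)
Your derivations are correct: each of the four formulas follows from an intuitionistic tautology by \ref{rl:nec}, the appropriate instances of \ref{ax:k:box} or \ref{ax:k:dia}, \ref{rl:mp}, and intuitionistic propositional reasoning, and none requires the T or 4 axioms. The paper explicitly leaves these proofs to the reader, and yours is exactly the routine argument intended.
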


\subsection{Prime theories}

We first introduce the notions that will be useful not only for \altterm{def:csf}{$\csf$}, but also for the other logics considered later.
Fix a logic $\Lambda$. At the moment we only need to assume that $\Lambda$ is a logic (in the \term{intuitionistic modal language}[modal language] $\lanfull$) that includes intuitionistic logic.

Let $\Sigma\subseteq \lanfull$ be a set of formulas closed under subformulas.
A set $\Gamma\subseteq \Sigma $ is called a $\Lambda$-\define{theory over} $\Sigma$ if it is closed
under \term{syntactic consequence} within $\Sigma$ ($\Gamma \vdash \varphi \in \Sigma$ implies $\varphi \in  \Gamma$).
If $\Lambda$ is clear from context or if $\Sigma = \lanfull$ we may omit either of them, so that e.g.~a \define{theory} is a $\Lambda$-theory over $\lanfull$ (where the choice of $\Lambda$ is clear or irrelevant).
Further, we say that $\Gamma$ is \define{prime} if $\varphi \vee \psi \in  \Gamma$ implies that either $\varphi \in  \Gamma$ or $ \psi \in  \Gamma$.\footnote{Note that we do not require a prime theory to be a \emph{proper} subset of $\lanfull$, so $\lanfull$ is itself a prime theory.}

\begin{definition}[]\label{def:deduction} 
	%Let $\Lambda$ be a logic over $\lanfull$.	
	Given sets $\Phi\subseteq \Sigma$ and $\Xi$ of formulas, we say that $\Phi$ is $\Xi$\define{-consistent} if 		$\Phi  \not \vdash   \Xi$.
	If 	$\Sigma\subseteq \lanfull$ is closed under subformulas, we say that $\Phi$ is \altdefine{maximal consistent}{$\Sigma$-maximal $\Xi$-consistent} if moreover $\Phi\subseteq\Sigma$, and whenever $\Phi\subsetneq \Phi' \subseteq\Sigma$, it follows that $\Phi'$ is not $\Xi$-consistent.

	We omit mention of $\Sigma$ when $\Sigma=\lanfull$ and mention of $\Xi$ if $\Xi = \varnothing$ (with the understanding that $\bigvee\varnothing \equiv \bot$), so that e.g.~$\Phi$ is maximal consistent if it is $\lanfull$-maximal $\varnothing$-consistent.
		If $\Xi$ is a singleton $\{\xi\}$, we write $\xi$\term{-consistent} instead of $\{\xi\}$\term{-consistent}.
		%Maximal consistency and maximal $\xi$-consistency are defined analogously.
%A \term{theory} is \define{maximally \term{consistent}} if for every formula $\varphi$ either $\varphi \in \Phi^+$ or $\varphi \not \in \Phi^+$.	
\end{definition}

Note that if $\Phi$ is $\Xi$\term{-consistent}, then necessarily $\bot \not\in\Phi $. It is \term{prime}, rather than \term{maximal consistent}[maximal], \term{theory}[theories] that we are ultimately interested in, but \term{maximal consistent} sets give us a way of obtaining \term{prime} \term{theory}[theories].%The $\Sigma$-\term{maximal} $\Xi$\term{-consistent} sets have properties that we would expect.

\begin{lemma}\label{lem:prime_theory}
Let $\Sigma \subseteq \lanfull$ be closed under subformulas, and let $\Phi$ be a \term{maximal consistent}[$\Sigma$-maximal $\Xi$-consistent] set. Then \begin{enumerate}
	\item\label{one_theory} $\Phi$ is a theory;
	\item\label{two_prime} $\Phi$ is \term{prime}.
\end{enumerate}
\end{lemma}

\begin{proof}
	\eqref{one_theory}: Using some intuitionistic reasoning, it is easy to see that $\Phi$ is closed under $\vdash$ and is thus a \term{theory}. 
	
	\eqref{two_prime}: To see that $\Phi$ is \term{prime}, suppose that $\varphi \vee \psi \in \Phi $.
	We cannot have that $   \Phi  \cup \{ \varphi \}  $ and $   \Phi \cup \{\psi\}$ are both $\Xi$-in\term{consistent}, since by left disjunction introduction (formalisable as a theorem of intuitionistic logic) we would obtain that $   \Phi , \varphi \vee \psi $ is $\Xi$-in\term{consistent}.
	However, $   \Phi , \varphi \vee \psi $ is just $   \Phi  $, contrary to assumption.
	Thus either $   \Phi  , \varphi   $ or $   \Phi ,\psi$ is $\Xi$\term{-consistent}; say, $   \Phi , \varphi $.
	But then by maximality of $\Phi $ and the assumption that $\Sigma$ is closed under subformulas, we must already have $\varphi\in \Phi $, as required.
\end{proof}

	We say that $\Psi$ \define{extends} $\Phi$ if $\Phi \subseteq\Psi $.
	
\begin{lemma}[Lindenbaum lemma]\label{lem:lindenbaum}
Let $\Sigma \subseteq \lanfull$ be closed under subformulas.
Any $\Xi$\term{-consistent} set $\Phi\subseteq\Sigma$ of formulas can be \term{extends}[extended] to a ($\Sigma$\term{-maximal}) $\Xi$-\term{consistent} \term{prime} \term{theory} $ \Phi _* $.
\end{lemma}	
	
	\begin{proof}
	By enumerating all formulas in $\Sigma$ as $(\sigma_i)_{i < \alpha}$ (where $\alpha\leq\omega$) and successively adding $\sigma_i$ to $\Phi$ if the result is $\Xi$\term{-consistent}, one obtains a  set of formulas $ \Phi  _* \supseteq \Phi $. This $ \Phi  _*$ is $\Xi$-\term{consistent} due to the finitary definition of $\vdash$, and $ \Phi  _*$ is $\Sigma$-\term{maximal consistent}[maximal] by construction. %such that $  \Phi _*  $ is \altterm{maximal consistent}{$\Sigma$-maximal $\Xi$-consistent}.
	Since  $ \Phi  _*$ is \term{maximal consistent}[$\Sigma$-maximal $\Xi$-consistent], \Cref{lem:prime_theory} says that $ \Phi  _*$ is also a \term{prime} \term{theory}.
		\end{proof}

%\comment{B: this lemma is not used}
%The proof of the following saturation lemma is standard~\cite{AlechinaMPR01}, and we omit it.

%\begin{lemma}\label{lem:\term{theory}}
%Every \term{prime} \term{theory} $\Phi $ satisfies the following properties:
%\begin{enumerate}[label=(\arabic*)]
%	\item  $\varphi\wedge\psi\in \Phi  $ if and only if $\varphi \in \Phi $ and $\psi\in \Phi $, \label{cond:tbl:posconj}
%	\item $\varphi\vee\psi\in \Phi  $ if and only if $\varphi \in \Phi $ or $\psi\in \Phi $, \label{cond:tbl:posdisj}
%	\item\label{\term{theory}:three} if $\varphi\imp\psi\in \Phi $, then $\varphi \in \Phi $ implies $\psi \in\Phi $, \label{cond:tbl:implication}
%	\item if $\nec\varphi\in \Phi $, then $\varphi \in \Phi $.
%\end{enumerate}
%\end{lemma}

If $\Phi$ is a \term{theory} then we define 
\begin{align*}\bfrm\Phi &\coloneqq \{ \varphi \in \lanfull \mid \nec\varphi\in \Phi  \},\\ \dfrm\Phi &\coloneqq \{ \varphi \in \lanfull \mid \ps\varphi\notin \Phi  \}.
\end{align*}
These sets will be useful in defining the canonical models below.

\subsection{Augmented theories}

In order to prove that \altterm{def:csf}{$\csf$} has the finite model property, we generalize the completeness proof sketched in~\cite{AlechinaMPR01} to function relative to a set $\Sigma$ of formulas.
The proof proceeds by constructing a canonical model falsifying any formula that is not derivable.
If $\Sigma$ is the full language, we obtain strong completeness for \altterm{def:csf}{$\csf$}, but if we choose it to be finite and closed under subformulas then the resulting relativised canonical model is finite.
The main challenge in the construction is dealing with the `non-classical' behaviour of $\ps$ for \altterm{def:csf}{$\csf$}, which requires an additional technical treatment that will not be needed for $\ps$-regular logics which we consider later.

Fix a set of formulas $\Sigma$ closed under subformulas and containing $\nec\bot$.
The set $\Sigma$ could be infinite or finite, with the latter case leading directly to a finite model property proof.
Rather than working with standard \term{theory}[theories], we must work with \define{augmented theories over} $\Sigma$ (or simply \define{augmented theories} when $\Sigma$ is clear), which are ordered pairs of sets of formulas  $\Phi = (\Phi^+; \Phi^{\centernot\ps})$, where $\Phi^+$ is a \term{theory over} $\Sigma$ and $\Phi^{\centernot\ps}\subseteq\Sigma$.
We say that $\Phi$ is \define{prime2}[prime] if $\Phi^+$ is \term{prime}.
The intuition behind this definition is the following: formulas in $\Phi^+$ are the ones satisfied by the \term{theory}, and the formulas $\varphi$ in $\Phi^{\centernot\ps}$ are those  such that $\ps\varphi$ is falsified \emph{directly}, by $\rel$ not having any witnesses for $\varphi$ (as opposed to being falsified via a $ \peq  $-accessible world).
Any standard \term{theory over} $\Sigma$ may be regarded as an \term{augmented theories over}[augmented theory over] $\Sigma$ if we identify $\Phi$ with $(\Phi ;\Phi^{\centernot\ps})$, where $\Phi^{\centernot\ps} \coloneqq \{\varphi\in \Sigma\mid\ps\varphi\notin \Phi\}$. %, as we have defined previously.
However, not all \term{augmented theories} are of this form, and in fact we often work with \term{augmented theories} of the form $(\Phi^+;\varnothing)$ (among others).

For convenience, the following sequence of definitions use the notation $(\Phi^+; \Phi^{\centernot\ps})$ used for \term{augmented theories}, even though they apply to arbitrary pairs of sets of formulas.

\begin{definition}[]\label{def:deduction2} 
	%Let $\Lambda$ be a logic over $\lanfull$.
	Given a set of formulas $\Xi$, we say that a pair $\Phi = (\Phi^+; \Phi^{\centernot\ps})$ of sets of formulas is \altdefine{-consistent}{$\Xi$-consistent} if for any finite set $\Delta \subseteq \Phi^{\centernot\ps}$,
		\begin{displaymath}
		 \Phi^+ \not \vdash   \Xi,  \ps\bigvee \Delta .
		\end{displaymath}
		We adopt the convention that $\ps {\bigvee \varnothing} \coloneqq  \bot$ (note that this convention overrides $\bigvee\varnothing = \bot$).
		We say that $\Phi$ is \define{consistent} if it is $\varnothing$\term{-consistent}, in other words, if for any finite set $\Delta \subseteq \Phi^{\centernot\ps}$,
		\begin{displaymath}
		 \Phi^+ \not \vdash   \ps\bigvee \Delta .
		\end{displaymath}
		If $\Xi$ is a singleton $\{\xi\}$, we write $\xi$\term{-consistent} instead of $\{\xi\}$\term{-consistent}.
%A \term{theory} is \define{maximally \term{consistent}} if for every formula $\varphi$ either $\varphi \in \Phi^+$ or $\varphi \not \in \Phi^+$.	
\end{definition}

Note that if $\Phi$ is a $\Xi$\term{-consistent} \term{augmented theories}[augmented \term{theory}], then necessarily $\bot \not\in\Phi^+$.

We say that a pair of sets of formulas $\Psi$ \define{extends} $\Phi$ if $\Phi^+\subseteq\Psi^+$ and $\Phi^{\centernot\ps}\subseteq \Psi^{\centernot\ps}$.
Pairs of sets of formulas satisfy their own version of Lindenbaum's lemma.

\begin{lemma}[Lindenbaum lemma, adapted from~\cite{AlechinaMPR01}]\label{lem:lindenbaumGen}
Let $\Sigma\subseteq\lanfull$ be closed under subformulas.
Any $\Xi$\term{-consistent} pair of sets of formulas $(\Phi^+; \Phi^{\centernot\ps})$ with $ \Phi^+, \Phi^{\centernot\ps}\subseteq \Sigma$ can be extended to a $\Xi$\term{-consistent} \term{prime} \term{augmented theories over}[augmented \term{theory}] $(\Phi^+_*; \Phi^{\centernot\ps})$ over $\Sigma$.% such that $\Phi^+ \subseteq \Phi^+_*$.
\end{lemma}	
	
	\begin{proof}
Apply \Cref{lem:lindenbaum} to $\Phi^+$ with $\Xi' = \Xi\cup \{\ps{\bigvee \Delta}\mid \Delta\subseteq_{\mathrm{fin}} \Phi^{\centernot\ps}\}$, where $\subseteq_{\mathrm{fin}}$ is the finite subset relation.
		\end{proof}

		\subsection{Relativised canonical models for \texorpdfstring{$\csf$}{CS4}}

We will use augmented theories to define canonical models for \term{def:csf}[$\csf$].
To be precise, each $\Sigma$ closed under subformulas gives rise to a distinct version of the canonical model. In particular, if $\Sigma$ is finite, then this model is of size exponential in $|\Sigma|$.
If $\Phi$ is an \term{augmented theories}[augmented \term{theory}] then we define $\bfrm\Phi \coloneqq \{ \varphi \in \lanfull \mid \nec\varphi\in \Phi^+ \}$.

\begin{definition}%[adapted from~\cite{AlechinaMPR01}\brett{do we still want to say this is adapted from the reference?}]
Let $\Sigma\subseteq\lanfull$ contain $\nec\bot$ and be closed under subformulas (so also $\bot \in \Sigma$). Let $\rcm {W^{\centernot\bot}}\Sigma$ be the set of all \term{prime} \term{augmented theories over}[augmented \altterm{def:csf}{$\csf$}-theories over] $\Sigma$. 
We define the $\Sigma$-\define{relativised canonical model} for \term{def:csf}[$\csf$] as $\rcm{\mathcal M }\Sigma ^{{\csf}}=(\rcm W\Sigma ,\rcm{\fallible W }\Sigma ,{\rcm\peq\Sigma} ,{\rcm\rel \Sigma }, \rcm V \Sigma )$, where~
\begin{itemize}

	\item $\rcm W\Sigma = \rcm {W^{\centernot\bot}}\Sigma \cup \rcm {\fallible W}\Sigma  $,
	
	\item $\rcm {\fallible W}\Sigma  = \lbrace (\Sigma;\varnothing)\rbrace$, %\Phi \in \rcm W\Sigma \mid \Phi \text{ is not \term{consistent}} \rbrace$,

	\item ${\rcm \peq \Sigma } \subseteq \rcm W\Sigma \times \rcm W\Sigma $ is defined by $\Phi \rcm\peq\Sigma \Psi$ if and only if $\Phi^+ \subseteq \Psi^+$,

	\item ${\rcm\rel\Sigma} \subseteq \rcm W\Sigma \times \rcm W\Sigma $ is defined by $\Phi \rcm\rel\Sigma \Psi$ if and only if $\bfrm\Phi\subseteq \bfrm\Psi $ and $\Phi^{\centernot\ps} \subseteq \Psi^{\centernot\ps}$, 
	%\david{Changed this from $\bfrm\Phi\subseteq  \Psi^+ $; this is the basic `trick' needed to get $\sf S4$ models with finite sets of formulas.}

	\item $\rcm V\Sigma $ is defined by $\rcm V\Sigma (p)  = \lbrace \Phi \in \rcm W\Sigma \mid p \in \Phi^+  \rbrace$.

\end{itemize}
\end{definition}

This construction is a variation of a standard canonical model, with two key distinctions.
First, we only work with formulas in $\Sigma$, which when $\Sigma = \lanfull$ will yield the usual construction but for finite $\Sigma$ will immediately produce finite structures.
Second, the constructive semantics of $\Diamond$ mean that we may have $ \Phi \rcm\rel\Sigma \Psi$ and $(\rcm{\mathcal M}\Sigma^{\csf},\Psi)\models \varphi$ but $(\rcm{\mathcal M}\Sigma^{\csf},\Phi) \not \models \varphi$.
In such a case, we need $\tilde\Phi\rcm\seq\Sigma\Phi$ witnessing this failure of $\ps\varphi$, and this $\tilde \Phi$ is obtained by modifying the `$\centernot\ps$-component'.
In this case, we may simply set $\tilde\Phi = (\Phi^+,\{\varphi\}) $.
This is the reason for working with \term{augmented theories}.

We now show that $\rcm {\mathcal M^{\csf}}\Sigma$ is indeed a \term{def:csff}[$\csf$ model], meaning the underlying frame is both a \term{bi-preorder} and \term{back--up confluent}.

\begin{lemma}\label{lemCS4IsModel}
If $\Sigma\subseteq\lanfull$ is closed under subformulas, the $\Sigma$-\term{relativised canonical model} $\rcm {\mathcal M^{\csf}}\Sigma$ is indeed a \term{bi-preordered model}.
\end{lemma}

\begin{proof}
	It is straightforward to check that the singleton $\rcm{\fallible W}\Sigma$ is closed under ${\rcm\peq\Sigma}$. To see that $\rcm{\fallible W}\Sigma$ is also closed under ${\rcm\rel \Sigma }$, note that if $(\Sigma;\varnothing) \rcm\peq\Sigma \Psi$, then $(\Sigma;\varnothing)^\nec \subseteq \Psi^\nec$, and since $\nec\bot \in \Sigma$ this gives $\nec\bot \in \Psi^+$. From Axiom~\ref{ax:ref:box} and $\nec\bot \in \Psi^+$ we obtain $\bot \in \Psi^+$, but then we must have $\Psi^+ = \Sigma$, so $\Psi = (\Sigma;\varnothing)$.   %For ${\rcm\peq\Sigma}$, if $\Phi$ is inconsistent and $\Psi$ is such that $\Phi^+ \subseteq \Psi^+$, it is clear that $\Psi$ is also inconsistent. For ${\rcm\rel\Sigma}$, if $\Phi$ is inconsistent and $\Psi$ is such that $\Phi^{\centernot\ps} \subseteq \Psi^{\centernot\ps}$ \brett{todo} 
%The two monotonicity conditions for $\rcm{\fallible W}\Sigma$ 
The monotonicity property of the valuation $\rcm V\Sigma $ is also easily verified. 
Since $\rcm\peq\Sigma$ is defined by $\Phi \rcm\peq\Sigma \Psi$ if and only if $\Phi^+ \subseteq \Psi^+$, and $\subseteq$ is itself a preorder, we see that $\rcm\peq\Sigma$ is a preorder.
Similarly, $\Phi \rcm\rel\Sigma \Psi$ if and only if $ \Phi^\nec\subseteq \Psi^\nec$ and $ \Phi^{\centernot\ps} \subseteq \Psi^{\centernot\ps}$, which is again a preorder since $\subseteq$ is.
\end{proof}

\begin{lemma}\label{prop:back--up}
For $\Sigma\subseteq\lanfull$ closed under subformulas, the $\Sigma$-\term{relativised canonical model} $\rcm {\mathcal M^{\csf}}\Sigma$ is  \term{back--up confluent}.
In particular, if $\Phi \rcm\rel\Sigma \Psi \rcm\peq\Sigma \Omega$, then $\Upsilon \coloneqq  (\Phi^+;\varnothing) $ satisfies $\Phi \rcm\peq\Sigma \Upsilon \rcm\rel\Sigma \Omega$.
\end{lemma}

\begin{proof}
Take $\Phi$, $\Psi$, and $\Omega$ in $\rcm W\Sigma $ such that $\Phi \rcm\rel\Sigma \Psi \rcm\peq\Sigma \Omega$, and let us define $\Upsilon = \left( \Phi^+; \varnothing \right)$.
Now, $\Phi^+$ is already \term{prime}, so $\Upsilon \in \rcm W\Sigma $. By definition $\Phi^+ \subseteq \Upsilon^+$; thus $\Phi \rcm\peq\Sigma \Upsilon$. 
Take $\nec \varphi \in \Upsilon^+$. By definition $\nec \varphi \in \Phi^+$. Since $\Phi \rcm\rel\Sigma \Psi \rcm\peq\Sigma \Omega$, it follows that $\nec \varphi \in \Omega^+$.
Since $\nec \varphi$ was chosen arbitrarily, we have $ \Upsilon^\nec \subseteq \Omega^\nec$.
Clearly also $\Upsilon^{\centernot\ps} = \varnothing \subseteq \Omega^{\centernot\ps}$, so $  \Upsilon \rcm\rel\Sigma \Omega$.
\end{proof}

We have established that $\rcm {\mathcal M^{\csf}}\Sigma$ is always a \term{def:csff}[$\csf$ model]. We now proceed towards a truth lemma for $\rcm {\mathcal M^{\csf}}\Sigma$. Here is where we really put the axioms and inference rules to work.

\begin{lemma}\label{lemDiamCS4}
Let $\Sigma\subseteq\lanfull $ be closed under subformulas.
If $\Gamma$ is a \term{prime} \term{augmented theories}[augmented \altterm{def:csf}{$\csf$}-theory over] $\Sigma$  and $\varphi \in \lanfull$, the following hold.
	\begin{enumerate}[label=(\alph*)]
		\item\label{item:a}  $\ps\varphi\in \Gamma^+$ if and only if $\ps\varphi\in \Sigma$ and for all $\Psi\seq_{\mathrm c}  \Gamma$ there is $\Delta\in \rcm W\Sigma $ such that $\Psi \rcm\rel\Sigma \Delta$ and $ \varphi\in \Delta^+$.
		
		\item\label{item:b} $\nec\varphi\in \Gamma^+$ if and only if  $\nec\varphi\in \Sigma$ and for all $\Psi$ and $\Delta$ such that $\Gamma \rcm\peq\Sigma \Psi\rcm\rel\Sigma \Delta$, we have $ \varphi\in \Delta^+$.
	\end{enumerate}	
\end{lemma}

\begin{proof}\emph{\ref{item:a}.} From left to right, assume that $\ps\varphi \in \Gamma^+$, and let $\Psi\seq_{\mathrm c} \Gamma$. We may assume that $\Psi$ is \term{-consistent}[consistent], otherwise $\Psi = (\Sigma;\varnothing)$ and we can just choose $\Delta$ to also be $(\Sigma;\varnothing)$. %We claim that there is $\Delta\ler_{\mathrm c}  \Psi$ such that $\varphi \in \Delta^+$. 
	Consider the pair $\Upsilon \coloneqq (  \nec \bfrm\Psi, \varphi ;  \Psi^{\centernot\ps})$ of sets of formulas (where $\nec \Theta  := \{\nec\theta:\theta\in \Theta\}$). 
	We show that $\Upsilon$ is \term{consistent}.
	If not, let $\chi_1,\ldots,\chi_n \in  \bfrm\Psi$ and $\psi_1,\ldots,\psi_m \subseteq \Psi^{\centernot\ps}$ be such that, for $\chi \coloneqq \bigwedge_i \chi_i$ and $\psi \coloneqq \bigvee_i \psi_i$, we have $  \chi \wedge \varphi \imp \ps \psi \in {\csf}$, so that $  \chi \imp\left( \varphi \imp \ps\psi\right) \in {\csf}$.
	By~\ref{rl:nec},~\ref{ax:k:box}, and~\ref{rl:mp}, it follows that $\nec \chi \imp \nec (\varphi \imp \ps\psi) \in \csf$. By \Cref{auxiliary}\ref{der:box_dist_and} (and induction up to $n$), we also know that $\bigwedge_i\nec\chi_i \imp \nec \chi \in \csf$, and hence $\bigwedge_i\nec\chi_i \imp \nec (\varphi \imp \ps\psi) \in {\csf}$, or otherwise stated, $\nec\chi_1, \dots,\nec\chi_n \vdash \nec (\varphi \imp \ps\psi)$. 
	We thus obtain $  \Psi^+ \vdash \nec (\varphi \allowbreak\imp \allowbreak\ps\psi)$, and then by \ref{ax:k:dia}, $  \Psi^+ \vdash \ps\varphi \imp \ps\ps\psi $.
	Since $\Gamma \rcm\peq\Sigma \Psi$ and $\ps \varphi \in \Gamma^+$, we also have $\ps \varphi \in \Psi^+$.
We know that $\ps \varphi \wedge (\ps\varphi \imp \ps\ps\psi) \imp \ps\ps\psi \in {\csf}$ by substitution on an intuitionistic theorem. Thus, $  \Psi^+ \vdash \ps \ps \psi$ and by \ref{ax:trans:dia}, $\Psi^+ \vdash \ps \psi$, which contradicts the consistency of $\Psi$. %\brett{we didn't assume $\Psi$ was consistent}
	We conclude that $\Upsilon$ is \term{consistent}.
	By the \term{lem:lindenbaumGen}[Lindenbaum \Cref{lem:lindenbaumGen}], $\Upsilon$ can be extended to a \term{consistent} \term{prime} \term{augmented theories}[augmented theory] $\Delta= (\Delta^+;\Delta^{\centernot\ps})$ such that $ \bfrm\Psi \subseteq \Delta^+$ and $\Psi^{\centernot\ps} = \Delta^{\centernot\ps}$ (therefore $\Psi \rcm\rel\Sigma \Delta$), and moreover $\varphi \in \Delta^+$, as needed.
	
	Conversely, let us assume that (for $\ps \varphi \in \Sigma$), we have $\ps \varphi \not\in \Gamma^+$ (so in particular $\Gamma$ is \term{consistent}) and let us define $\Psi = (\Gamma^+;\lbrace \varphi \rbrace)$.
 	It is easy to see that $\Psi$ is \term{consistent}, and since $\Gamma^+$ is \term{prime}, $\Psi \in \rcm W\Sigma $. 
%	By \Cref{lem:lindenbaum}, $\Upsilon$ can be extended to a maximal \term{consistent} \term{theory} $\Psi \in \rcm W\Sigma $ such that $\Gamma^+ \subseteq \Psi^+$. 
	Moreover, $\Gamma \rcm\peq\Sigma \Psi$.
	We claim that for all $\Delta \in \rcm W\Sigma $, if $\Psi \rcm\rel\Sigma \Delta$, then 
	$\varphi \not \in \Delta^+$.
	To prove this, let us take any $\Delta \in \rcm W\Sigma $ satisfying $\Psi \rcm\rel\Sigma \Delta$. 
	By definition, $\Psi^{\centernot\ps} \subseteq \Delta^{\centernot\ps}$, so $\varphi \in \Delta^{\centernot\ps}$.
	This implies $\Delta \neq(\Sigma; \varnothing)$, and hence $\Delta$ is consistent. %As $\Delta$ is a \term{prime} augmented \term{theory},  
	By consistency of the \term{augmented theories}[augmented theory] $\Delta$, and using  $\varphi \in \Delta^{\centernot\ps}$ again, we see that $\ps \varphi \not \in \Delta^+$. Then $\varphi \not \in \Delta^+$ because of Axiom~\ref{ax:ref:dia}, as needed.

\paragraph{\ref{item:b}}
From left to right, let $\nec\varphi \in \Gamma^+$, and let $\Psi$ and $\Delta$ be such that $\Gamma \rcm\peq\Sigma \Psi \rcm\rel\Sigma \Delta$.  %We claim that $\varphi \in \Delta^+$.
	From $\Gamma \rcm\peq\Sigma \Psi \rcm\rel\Sigma \Delta$, we obtain $\Gamma^+ \subseteq \Psi^+$ and $ \bfrm\Psi \subseteq \bfrm\Delta $. Since $\nec \varphi \in \bfrm \Gamma$, we have $\varphi \in \bfrm\Delta$, i.e.~$\nec \varphi \in \Delta^+$.
But $\Delta^+$ is closed under \term{syntactic consequence} within $\Sigma$, and $\Sigma$ is closed under subformulas, so by \ref{ax:ref:box}, $ \varphi\in \Delta^+$.

	Conversely, let us assume that $\nec \varphi \not \in \Gamma^+$ but $\nec\varphi\in\Sigma$, and let us define $\Psi  = (\Gamma^+;\varnothing)$. Clearly $\Psi$ is a \term{prime} \term{theory} and satisfies $\Gamma \rcm\peq\Sigma \Psi$.
	Let us take $\Upsilon = ( \nec \bfrm\Psi , \varnothing )$. $\Upsilon$ is $\varphi$\term{-consistent}, since otherwise \ref{rl:nec} and \ref{ax:k:box} would yield $\nec\bigwedge_i\psi_i \imp \nec \varphi \in \csf$ for some $\psi_1, \dots, \psi_n \in \bfrm\Psi$, from which we can obtain $\nec\psi_1, \dots, \nec\psi_n \vdash \nec\varphi$ and thus $\nec \varphi \in \Gamma^+$---a contradiction.
	By the Lindenbaum \Cref{lem:lindenbaum}, $\Upsilon$ can be extended to a $\varphi$\term{-consistent} \term{prime} augmented \term{theory} $\Delta = (\Delta^+;\varnothing)$ such that $\Upsilon^+ \subseteq \Delta^+$. By the definition of $\Upsilon$, we then have $\Psi \rcm\rel\Sigma \Delta$ and $\varphi \not \in \Delta^+$, as needed.  
\end{proof}

\begin{lemma}[truth lemma for $\rcm {\mathcal M^{\csf}}\Sigma$]\label{lem:truth-lemma}
Let $\Sigma\subseteq\lanfull$ be closed under subformulas.
	For each \term{prime} \term{augmented theories}[augmented \altterm{def:csf}{$\csf$}-theory] $\Phi = (\Phi^+; \Phi^{\centernot\ps})$ and each formula $\varphi \in \Sigma$,
\[\varphi \in \Phi^+ \iff ( \rcm {\mathcal M^{\csf}}\Sigma, \Phi) \models \varphi.\]\end{lemma}
\begin{proof}
By structural induction on $\varphi$.	
	The case of propositional variables holds by the definition of $\rcm V\Sigma $. For $\bot$ we need that $\bot \in \Phi^+ \iff \Phi \in W^\bot$. This holds because the only fallible world is $(\Sigma;\varnothing)$ (and we insisted $\bot \in \Sigma$), and conversely the infallible worlds are \term{consistent} \term{augmented theories} and so satisfy $\bot \notin \Phi^+$. 
	The cases of $\wedge$ and $\vee$ are straightforward.
	
	The case where $\varphi$ is of the form $\psi \imp \chi$ is  as follows.
	From left to right, let us assume towards a contradiction that $ ( \rcm {\mathcal M^{\csf}}\Sigma, \Phi) \not \models \varphi$.
		Therefore, there exists $\Psi\in W_\Sigma$ such that $\Phi\peq_\Sigma \Psi$ and $( \rcm {\mathcal M^{\csf}}\Sigma, \Psi) \models \psi$ but $( \rcm {\mathcal M^{\csf}}\Sigma, \Psi) \not \models \chi$.
		By induction, $\psi \in\Psi^+$ and $\chi \not \in \Psi^+$. However, since $\Phi\peq_\Sigma\Psi$, $\psi\to \chi\in \Psi^+$. By~\ref{rl:mp}, $\chi \in \Psi^+$, which is a contradiction.
		Conversely, let us assume that $\psi \to \chi \not \in \Phi^+$ and let us define $\Omega=(\Phi^+ \cup \lbrace \varphi\rbrace,\lbrace \chi\rbrace)$. 
		$\Omega$ must be consistent, otherwise, there should be $\gamma \in \Phi^+$ such that $\csf \vdash \gamma \wedge \psi \to \chi$. 
		By some reasoning in intuitionistic logic we get that $\csf \vdash \gamma \to \left( \psi \to \chi\right)$.
		From $\gamma \in \Phi^+$ and~\ref{rl:mp}, $\psi \to \chi \in \Phi^+$ and we would reach a contradiction. 
		By Lemma~\ref{lem:lindenbaum}, $\Omega$ can be extended to a consistent prime augmented theory $\Psi=(\Psi^+;\Psi^{\centernot\ps})$, such that $\Phi^+\subseteq \Psi^+$, so $\Phi \peq_\Sigma \Psi$, $\psi\in \Phi^+$ and $\chi \not \in \Psi^+$. 
		By induction, $( \rcm {\mathcal M^{\csf}}\Sigma, \Psi) \models \psi$ but $( \rcm {\mathcal M^{\csf}}\Sigma, \Psi) \not \models \chi$, so $( \rcm {\mathcal M^{\csf}}\Sigma, \Psi) \not \models \psi \to \chi$.

	The case where $\varphi$ is of the form $\ps \psi$ is as follows.
	Since $\ps\psi\in\Sigma$ by assumption, by \Cref{lemDiamCS4}\ref{item:a},  $\ps\psi \in \Phi^+$ if and only if whenever $\Phi \rcm\peq\Sigma \Psi$, there exists $\Psi \rcm\rel\Sigma \Omega$ such that $\psi \in \Omega^+$. By the inductive hypothesis, this happens if and only if whenever $\Phi \rcm\peq\Sigma \Psi$ there exists $\Psi \rcm\rel\Sigma \Omega$ such that $( \rcm {\mathcal M^{\csf}}\Sigma, \Omega ) \models \psi$, which is precisely the definition of $(\rcm {\mathcal M^{\csf}}\Sigma, \Phi) \models \ps \psi$.
	The case where $\varphi$ is of the form $\nec \psi$ is similar. 
By \Cref{lemDiamCS4}\ref{item:b},  $\nec \psi \in \Phi^+$ if and only if whenever $\Phi \rcm\peq\Sigma \Psi \rcm\rel\Sigma \Omega$, we have $\psi \in \Omega^+$. By the inductive hypothesis, this happens if and only if whenever $\Phi \rcm\peq\Sigma \Psi \rcm\rel\Sigma \Omega$, we have $(\rcm {\mathcal M^{\csf}}\Sigma, \Omega) \models \psi$, which is precisely the definition of $(\rcm {\mathcal M^{\csf}}\Sigma, \Phi) \models \nec \psi$.
\end{proof}

By varying our choice of $\Sigma$, we either obtain \term{strongly complete}[strong completeness] or the finite birelational frame property for \altterm{def:csf}{$\csf$} with respect to its class of models.

\begin{theorem}[Alechina, Mendler, De Paiva, and Ritter \cite{AlechinaMPR01}]\label{cs4_completeness}
Let $\models_\csf$ denote \term{local birelational semantic consequence} on the class of~\altterm{def:csff}{{\csff}s}, and $\vdash_\csf$ denote the \term{syntactic consequence} relation for the logic \altterm{def:csf}{$\csf$}. Then for any set of formulas $\Gamma \cup \{\varphi\}$,
\[\Gamma \models_\csf \varphi \iff \Gamma \vdash_\csf \varphi.\]
\end{theorem}

\begin{proof}
	Soundness is \Cref{cor:sound}. For completeness, suppose $\Gamma \not\vdash_\csf \varphi$. 
Set $\Sigma = \lanfull$ and apply \Cref{lem:lindenbaumGen} to find a $\varphi$\term{-consistent} \term{prime} \term{augmented theories over}[augmented \term{theory}] $\Gamma_* = (\Gamma^+_*; \varnothing)$ extending $(\Gamma,\varnothing)$, then apply \Cref{lem:truth-lemma} to see that $(\rcm{\mathcal M}\lanfull^{\csf},\Gamma_*) \models\Gamma$ but $(\rcm{\mathcal M}\lanfull^{\csf},\Gamma_*) \not \models\varphi$.
\end{proof}

\begin{theorem}[{\term{def:csf}[$\csf$]} finite frame property]\label{thmCS4fin}
	The logic \term{def:csf}[$\csf$] has the finite frame property. That is, if a formula $\varphi$ is \term{falsifiable} (respectively~\term{satisfiable}) on a \altterm{def:csff}{\csff}, then $\varphi$ is \term{falsifiable} (respectively \term{satisfiable}) on a finite \altterm{def:csff}{\csff} of size at most $2^{2
	|\varphi|+4}$.
\end{theorem}

\begin{proof}
Suppose that $\term{def:csf}[\csf]\not\vdash\varphi$ and let $\Sigma$ contain $\nec\bot$ and $\varphi$ and be closed under subformulas.
Apply \Cref{lem:lindenbaumGen} to find a $\varphi$\term{-consistent} \term{prime} \term{augmented theories over}[augmented \term{theory}] $\Gamma_* = (\Gamma^+_*; \varnothing)$ over $\Sigma$ extending $(\varnothing,\varnothing)$, then apply \Cref{lem:truth-lemma} to see that $(\rcm{\mathcal M}\Sigma^{\csf},\Gamma_*) \not \models\varphi$.
Moreover, there are at most $2^{|\varphi|+2}\cdot 2^{|\varphi|+2}$ pairs of sets of subsets of $\Sigma$, i.e.~$|\rcm W\Sigma| \leq 2^{2|\varphi|+4}$, as required.
\end{proof}

\begin{theorem}[{\term{def:csf}[$\csf$]} complexity]\label{thmCS4dec}
The logic \term{def:csf}[$\csf$] is decidable in \textsc{nexptime}.%\david{Upgraded to theorem for uniformity.}
\end{theorem}

\begin{proof}
%The following nondeterministic algorithm runs in time exponential in the length of an input formula $\varphi$ and checks whether a \altterm{def:gsf}{$\gsf$} frame falsifiying 
Let $n$ be the length of a formula $\varphi$; thus $n+2$ bounds the number of subformulas of $\nec\bot$ and $\varphi$. A nondeterministic algorithm runs as follows on input $\varphi$.
\begin{itemize}
\item
Nondeterministically choose a structure $\mathcal M=(W,{\peq},{\rel},V(p_1),\dots, V(p_m))$ of cardinality at most $  2^{ 2 n +4}$, where $p_1, \dots, p_m$ are the variables appearing in $\varphi$. The structure $\mathcal M$ can be stored with an exponential (in $n$) number of bits, and hence choosing $\mathcal M$ can be done in exponential time.
\item
Check that $\mathcal M$ is a \altterm{def:csf}{$\csf$} frame and reject if not. This (deterministic) check takes exponential time.

\item
Calculate the interpretations in $\mathcal M$ of each subformula of  $\varphi$. There are at most $n$ interpretations to calculate, and each calculation can be done deterministically in exponential time provided the interpretations of subformulas of lower structural complexity are calculated first.

\item
Check whether there is any world not in the calculated interpretation of $\varphi$.\qedhere
\end{itemize}
\end{proof}

\begin{remark}
To the best of our knowledge, the decidability of \term{def:csf}[$\csf$] has not previously been explicitly stated in the literature, although it is likely to be extractable from known results, as conjectured by Alechina et al.~\cite{AlechinaMPR01}.
The \textsc{nexptime} upper bound is new, and the exponential upper bound on the size of frames is  optimal, as this is already a lower bound for the modality-free fragment, intuitionistic propositional logic~\cite{zakharyaschev1979complexity,Zakharyaschev2001}.
\end{remark}

\section{Strong completeness of \texorpdfstring{$\ps$-regular}{diamond-regular} logics}\label{secCompGS4}

In this section we show that the \altterm{regularlogic}{$\ps$-regular} logics we consider (namely, \altterm{def:isf}{$\isf$}, \altterm{def:sfi}{$\sfi$}, \altterm{def:gsf}{$\gsf$}, and \altterm{def:gsfc}{$\gsfc$}) are \term{strongly complete} with respect to their corresponding class of \term{birelational frame}[birelational frames] as given in \Cref{definition:frame_classes}.

Recall from Lemma~\ref{lemClassical}\eqref{classical:one} that  a point $w$ in a  $\ps$\term{-regular} model $\mathcal M$ satisfies $\ps\varphi$ if and only if there is $v\ler w$ satisfying $\varphi$.
Now, let $\Sigma $ be the closure under subformulas of $\Sigma_0:=\{ \nec\neg(p\wedge q), \Diamond (p\vee q) , \neg\neg(\Box p\vee\Box q)\}$ and suppose that $w'\seq w \rel v$.
By \term{forth--up confluent}[forth--up confluence], there is $v'$ such that $w'\rel v' \seq v$.
Let us assume that these are all the worlds of our frame, as in Figure~\ref{figPsRegEx}.

We can label worlds of  $\mathcal M$ by letting $\ell(x) $ be the set of formulas of $\Sigma$ true on $\mathcal M$, and suppose that $\ell(w) = \Sigma_0\cup\{\neg(p\wedge q) \}$.
We have  two options for the labels of the other worlds: in the first, we set
\[\ell(w') = \ell(v) = \ell(v') := \ell(w )\cup \{\Box p,p\},\]
and in the second,
\[\ell(w') = \ell(v) = \ell(v'):= \ell(w )\cup \{\Box q,q\}.\]
If for $r\in \{p,q\}$ we put $x\in \val r$ if and only if $r\in \ell(x)$, both options lead to models where $w$ satisfies exactly those formulas of $\Sigma$ belonging to $\ell(w)$.
However, it is not possible to combine the two, as letting $\ell(w')  := \ell(w )\cup \{\Box p,p\}$ and $\ell(v)  := \ell(w )\cup \{\Box q,q\}$ forces us to have both $p$ and $q$ in $\ell(v')$, making $\nec\neg(p\wedge q)$ false on $w$.
This tells us that ensuring \term{forth--up confluent}[forth--up confluence] forces us to look at formulas outside $\Sigma$, since $\ell(w)$ alone does not tell us which of the two options we should choose.

\begin{figure}[h!]\centering
	\begin{tikzpicture}[node distance=1.5cm,auto]
		%,shift={($(pq.east)+(4cm,0)$)},
		\begin{scope}[node distance=1.5cm,text width=2mm, local bounding box=scope2 ]
			\node[state] (x1)  {$w'$};
			\node[state,below of=x1] (y1)  {$w$};
			\node[state,right of=x1] (x2)  {$v'$};
			\node[state,below of=x2] (y2)  {$v$};
			
			\path[->,dashed] (y1) edge[] node[] {} (x1);
			\path[->,dashed] (y2) edge[] node[] {} (x2);
			\path[->] (x1) edge[] node[] {} (x2);
			\path[->] (y1) edge[] node[] {} (y2);
		\end{scope}	 			
	\end{tikzpicture}	
	\caption{A simple $\ps$-regular frame. Reflexive arrows are not displayed.}\label{figPsRegEx}
\end{figure}

Thus, unlike in the completeness proof for \term{def:csf}[$\csf$],  we can no longer relativise the canonical models to obtain the finite model property directly, so this relativisation aspect is omitted.
On the other hand, the simplified semantics for $\ps$ allows for the canonical models for \altterm{regularlogic}{$\ps$-regular} logics to be a bit more straightforward than that for \term{def:csf}[$\csf$] and instead they are very similar to those of intuitionistic (rather than constructive) modal logics~\cite{Simpson94}.
Indeed, this subsection may be read as a review of the completeness proof for \altterm{def:isf}{$\isf$} up until \Cref{thm:completeness:is4s4i}, with only a few variations needed to obtain \term{strongly complete}[strong completeness] for \altterm{def:sfi}{$\sfi$} as well.
We remark that we will not use the completeness of \altterm{def:isf}{$\isf$}, as our techniques do not yield the finite model property for this logic (but see~\cite{DBLP:conf/lics/GirlandoKMMS23}), but the proof is a by-product of that for the closely related \altterm{def:gsf}{$\gsf$} and \altterm{def:gsfc}{$\gsfc$}, whose completeness is stated in~\cite{Caicedo2010StandardGM,RodriguezV21} for \altterm{gk_model}{G\"odel--Kripke models} and whose finite \term{birelational frame} property we do establish.
We begin by uniformly defining the canonical model for any \altterm{regularlogic}{$\ps$-regular} logic $\Lambda$.

\begin{definition}
Let $\Lambda$ be a \altterm{regularlogic}{$\ps$-regular} logic.
We define the canonical model for $\Lambda$ as $\mathcal M_{\mathrm c} ^\Lambda=(W_{\mathrm c} ,{\peq_{\mathrm c} },{\rel_{\mathrm c} }, V_{\mathrm c} )$, where
\begin{enumerate}[label=\alph*)]
	\item $W_{\mathrm c} $ is the set of \term{consistent} \term{prime} $\Lambda$-\term{theory}[theories], %\comment{B, Oct.: added `\term{prime}' and propagated to everywhere necessary}
%	\item $\fallible W_{\mathrm c}  =\varnothing$;
\item 	${\peq_{\mathrm c}} \subseteq W_{\mathrm c} \times W_{\mathrm c} $ is defined by $\Phi \peq_{\mathrm c}  \Psi$ if and only if $\Phi \subseteq \Psi $,
\item ${\rel_{\mathrm c}}  \subseteq W_{\mathrm c} \times W_{\mathrm c} $ is defined by $\Phi \rel_{\mathrm c}  \Psi$ if and only if $  \bfrm\Phi \subseteq  \Psi $ and $  \dfrm\Phi \cap \Psi = \varnothing $, 

\item $V_{\mathrm c} \from \mathbb P \to 2^{W_{\mathrm c}}$ is defined by $V_{\mathrm c}(p):= \{\Phi\in W_{\mathrm c} \mid p\in \Phi\}$.
\end{enumerate}
\end{definition}

As we will now verify, this construction always yields \altterm{regular}{$\ps$-regular} models, given the assumption that $\Lambda$ is \altterm{regularlogic}{$\ps$-regular}.

\begin{lemma}
If $\Lambda$ is any \altterm{regularlogic}{$\ps$-regular} logic, then $\mathcal M^\Lambda_{\mathrm c}$ is a \altterm{regular}{$\ps$-regular} \term{bi-preordered model}.
\end{lemma}

\begin{proof}
It is immediate from the definition that $\peq_{\mathrm c}$ is a preorder. It is also straightforward to show, using Axioms~\ref{ax:trans:box} and \ref{ax:trans:dia}, that $\rel_{\mathrm c}$ is a preorder. Thus $\mathcal M^\Lambda_{\mathrm c}$ is a \term{bi-preordered model}. We have defined $\mathcal M^\Lambda_{\mathrm c}$  to be \term{infallible}, so to prove that $\mathcal M^\Lambda_{\mathrm c}$ is \altterm{regular}{$\ps$-regular}, it only remains to show that $\mathcal M^\Lambda_{\mathrm c}$ is \term{forth--up confluent}.

Let $\Phi$, $\Psi$, and $\Theta$ in $W_{\mathrm c} $ be such that $\Phi \peq_{\mathrm c}  \Psi$ and $\Phi \rel_{\mathrm c}  \Theta$. 
We claim that $   \bfrm\Psi\cup\Theta $ is $\ps \Psi^{\centernot\ps}$\term{-consistent}.
If not, there exist $\varphi \in \bfrm\Psi$, $\theta \in \Theta $, and $  \psi  \in \Psi^{\centernot\ps}$ such that
$\varphi \wedge \theta \rightarrow \ps\psi \in \Lambda$. (Note that we can take single formulas since $\Theta $ is closed under conjunction; by \Cref{auxiliary}\ref{der:box_dist_and}, $ \bfrm\Psi$ is too; and in view of \ref{ax:dp} and \term{prime}[primality] of $\Psi$, we know $\Psi^{\centernot\ps}$ is closed under disjunction.)
Since $\theta \in \Theta $, we then obtain $\varphi \rightarrow \ps\psi \in \Theta $. 
Then by Axiom~\ref{ax:ref:dia}, we get $\ps (\varphi \rightarrow \ps\psi) \in \Theta $.
Therefore $\varphi \rightarrow \ps\psi \not \in  \Theta^{\centernot\ps}$, so $\varphi \rightarrow \ps\psi \not \in  \Phi^{\centernot\ps}$, and hence $\ps (\varphi \rightarrow \ps\psi) \in  \Phi $.
Using \Cref{auxiliary}\ref{der:ax1} we get $\nec \varphi \rightarrow \ps\ps\psi \in \Phi $. 
Since $\Phi \subseteq \Psi $, this gives $\nec \varphi \rightarrow \ps\ps\psi \in \Psi $. As $\varphi \in \bfrm\Psi$---i.e.~$\nec \varphi \in \Psi$---we deduce $\ps\ps \psi \in \Psi $.  By Axiom~\ref{ax:trans:dia}, we then have $\ps \psi \in \Psi$, but contradicts the hypothesis that   $  \psi  \in \Psi^{\centernot\ps}$. 

In view of the Lindenbaum \Cref{lem:lindenbaum}, the set $  \bfrm\Psi\cup\Theta$ can be extended to a $\ps\Psi^{\centernot\ps}$\term{-consistent} \term{prime} \term{theory} $\Upsilon$. 
%To prove that $\Upsilon^+$ is \term{prime}, we need to show that it satisfies the disjunction property.
%Let us assume by contradiction that $\Upsilon^+$ does not satisfy such property. 
%Therefore, there exist $\alpha$ and $\beta$ in $\lanfull$ such that
%$\alpha \vee \beta \in \Upsilon^+$ but $\alpha \not \in \Upsilon^+$ and $\beta \not \in \Upsilon^+$.
%Since $\Upsilon^+$ is a maximal \term{consistent} set,  
%there exist $\chi,\chi' \in \Upsilon^+$ and $\ps \gamma, \ps \gamma' \not \in \Psi^+$ such that ${\sfi} \vdash \chi \wedge \alpha \rightarrow \gamma$ and ${\sfi} \vdash \chi' \wedge \beta \rightarrow \gamma'$.
%By using propositional reasoning, we arrive to ${\sfi} \vdash \chi \wedge \chi ' \wedge \left(\varphi \vee \psi\right) \rightarrow \gamma \vee \gamma'$. \comment{M: Is this derivable?}
%By~\ref{rl:mp} $\gamma \vee \gamma' \in \Upsilon^+$.
%
%Since $\ps \gamma \not \in \Psi^+$ and $\ps \gamma' \not \in \Psi^+$, $\ps \gamma \vee \ps \gamma' \not \in \Psi^+$. 
%By Axiom~\ref{ax:dp}, $\ps \left( \gamma' \vee \gamma\right) \not \in \Psi^+$, so $\gamma' \vee \gamma \not \in \Upsilon^+$: a contradiction.
%Therefore, $\Upsilon^+$ is a \term{prime} set.
%
%
%Let us define now $\Upsilon^{\centernot\ps} = \lbrace \varphi \mid \ps \varphi \not \in \Upsilon^+\rbrace$ and
It is easy to check by our choice of $\Upsilon$ that $\Psi \rel_{\mathrm c}  \Upsilon$ and $\Theta \peq_{\mathrm c}  \Upsilon$.
%By definition, $ \bfrm\Psi \subseteq \Upsilon^+$ and $\Theta^+ \subseteq \Upsilon^+$.
%Take $\varphi \in \Psi^{\centernot\ps}$. By definition, $\ps \varphi \not \in \Psi^+$.
%By Axiom~\ref{ax:trans:box}, $\ps \ps \varphi \not \in \Psi^+$. 
%Therefore, $\Upsilon^+ \not \vdash \lbrace \ps \varphi\rbrace$ meaning that $\ps \varphi \not \in \Upsilon^+$.
%By definition, $\varphi \in \Upsilon^{\centernot\ps}$ meaning that $\Psi^{\centernot\ps} \subseteq \Upsilon^{\centernot\ps}$.
%
%
%
%
%
%Take any $\varphi \in \Psi^{\centernot\ps}$. By definition, $\ps \varphi \not \in \Psi^+$. This means that $\Upsilon^+$
%%Assume by contradiction that there exists  $ \chi \in \Psi^{\centernot\ps}$ but $\chi \not \in \Upsilon^{\centernot\ps}$.
%%By definition, $\ps \chi \not \in \Psi^+$.
%%By Axiom~\ref{ax:ref:dia}, $\ps \ps \chi \not \in \Psi^+$ 
%%Since $\Upsilon^+ \not \vdash \lbrace \varphi \mid \ps \psi \not \in \Psi^+ \rbrace$ then 
%%$\ps \chi \not \in \Upsilon^+$
%%By definition $\chi \in \Upsilon^{\centernot\ps}$: a contradiction.
%%Therefore $\Psi^{\centernot\ps} \subseteq \Upsilon^{\psi}$, so  $\Psi \rel_{\mathrm c}  \Upsilon$ and $\Theta \peq_{\mathrm c}  \Upsilon$.
\end{proof}

Next we establish that the canonical models have the appropriate additional properties. To help the reader track all the cases, we summarise the extra properties the frames must have, beyond being \altterm{regular}{$\ps$-regular} \term{bi-preorder}[bi-preorders] and the extra axioms the logics contain, beyond \altterm{def:csf}{$\csf$} + \ref{ax:dp} + \ref{ax:null}.

\begin{tabular}{@{}r@{ }l@{\quad}ll@{}}
    $\bullet$\ & \altterm{def:csf}{$\isf$}:  & \ref{ax:fs}              & \term{back--up confluent} \\
    $\bullet$\ & \altterm{def:csf}{$\sfi$}:  & \ref{ax:cd}              & \term{forth--down confluent} \\
    $\bullet$\ & \altterm{def:csf}{$\gsf$}:  & \ref{ax:fs}  + \ref{ax:g}               & \term{back--up confluent} + \term{upward linear} \\
    $\bullet$\ & \altterm{def:csf}{$\gsfc$}: & \ref{ax:fs}  + \ref{ax:cd} + \ref{ax:g} & \term{back--up confluent} + \term{forth--down confluent} + \term{upward linear} \\
\end{tabular}

From the summary, it is clear that the following lemma is sufficient. %to establish that the canonical models have the additional properties expected of them.

%and the exhas the confluence properties related to $\Box$.
%Here there is a distinction between e.g.~\altterm{def:isf}{$\isf$}, which contains \ref{ax:fs}, and \altterm{def:sfi}{$\sfi$}, which contains \ref{ax:cd} instead.

\begin{lemma}
Let $\Lambda$ be any \altterm{regularlogic}{$\ps$-regular} logic.
\begin{enumerate}

\item\label{item:1} If $\ref{ax:fs} \in \Lambda$, then $\mathcal M^\Lambda_{\mathrm c}$ is \term{back--up confluent}.

\item\label{item:2}  If $\ref{ax:cd} \in \Lambda$, then $\mathcal M^\Lambda_{\mathrm c}$ is \term{forth--down confluent}.

\item\label{item:3}  If $\ref{ax:g} \in \Lambda$, then $\mathcal M^\Lambda_{\mathrm c}$ is \term{upward linear}.

\end{enumerate}
\end{lemma}

\begin{proof} %\comment{M: I think it is correct, as soon as the scope of the conjunction affects the antecedent of the implication. I will remove the blue.

%B: I'm happy with the argument now. I removed some boxes on the consequents that didn't seem to be needed.}
\eqref{item:1} Suppose that $\Phi\rel_{\mathrm c} \Psi\peq_{\mathrm c}  \Theta$, and let $\Xi \coloneqq  \{\nec\xi \in\lanfull \mid \xi\notin \Theta\}$ and $\Delta = \{\ps \delta \in\lanfull \mid \ps\delta \in \Theta\}$.
We claim that $ \Phi\cup\Delta $ is $\Xi$\term{-consistent}.
If not, there are $\varphi \in\Phi $, $\ps\delta_1,\ldots,\ps\delta_n \in \Delta$, and $\nec\xi\in \Xi$ such that $ \varphi ,\{\ps \delta_i\}_{i=1}^n \vdash \nec \xi $, which using \ref{ax:trans:dia} yields $ \varphi \vdash \bigwedge_{i=1}^n \ps \ps \delta_i \imp \nec \xi $. Then using \Cref{auxiliary}\ref{der:ax3} we can obtain $ \varphi \vdash \ps\bigwedge_{i=1}^n  \ps \delta_i \imp \nec \xi $.
By an application of \ref{ax:fs}, we obtain $\varphi \vdash \nec (\bigwedge_{i=1}^n \ps \delta_i \imp  \xi  )$.
Then since $\varphi\in \Phi $, we obtain $\nec (\bigwedge_{i=1}^n \ps \delta_i \imp  \xi  ) \in\Phi$.
Since $\Phi\rel_{\mathrm c} \Psi$, we get $\bigwedge_{i=1}^n \ps \delta_i \imp  \xi \in \Psi$, and since $\Psi\peq_{\mathrm c} \Theta$ and each $\ps\delta_i\in \Theta$, we get %$\nec \xi\in \Theta$, which by \ref{ax:ref:box} implies that
 $\xi\in \Theta $, contradicted by the definition of $\Xi$. Hence $\Phi \cup\Delta $ is $\Xi$\term{-consistent}.
 
Let $\Upsilon$ be any $\Xi$\term{-consistent} \term{prime} extension of $\Phi \cup\Delta $. Then $\Phi\peq_{\mathrm c} \Upsilon\rel_{\mathrm c} \Theta$: that $\Phi\peq_{\mathrm c} \Upsilon$ follows from $\Phi \subseteq \Upsilon $, and we check that $\Upsilon\rel_{\mathrm c} \Theta$.
We have $\Upsilon^\nec\subseteq \Theta $, since if $\xi\notin \Theta $ it follows by the definition of $\Xi$ and the fact that $\Upsilon$ is $\Xi$\term{-consistent} that $\xi\notin\Upsilon^\nec$, while if $\delta\notin\Theta^{\centernot\ps}$ it follows that $\ps\delta\in \Theta $; hence $\ps\delta\in \Delta \subseteq \Upsilon $ and $\delta\notin\Upsilon^{\centernot\ps}$.
It follows that $\Upsilon^{\centernot\ps}\subseteq \Theta^{\centernot\ps}$, so $\Upsilon\rel_{\mathrm c} \Theta$.

\eqref{item:2} Suppose that $\Phi \peq_{\mathrm c}  \Psi\rel_{\mathrm c}  \Theta$.
Let $\Xi = \lanfull\setminus \Theta $.
We claim that $ \bfrm\Phi $ is $\ps \Phi^{\centernot\ps}\cup \Xi$\term{-consistent}.
If not, there exist $\nec \varphi \in \Phi $, $\ps \psi \not \in \Phi $, and $\chi \not \in \Theta $ such that $ \varphi \rightarrow   \chi \vee \ps\psi \in \Lambda $.
By~\ref{rl:nec}, Axiom~\ref{ax:k:box} and~\ref{rl:mp} we get that $ \nec \varphi \imp \nec (\chi \vee \ps\psi) \in \Lambda $, and thus $ \nec (\chi \vee \ps\psi) \in \Lambda $.
By~\ref{ax:cd} it follows that $\nec \chi \vee \ps\ps \psi \in \Phi $, and then by \ref{ax:trans:dia}, it follows that $\nec \chi \vee \ps \psi \in \Phi $.
Since $\chi \not \in \Theta $, we know $\nec \chi \not \in \Phi $.
Then by primality, $\ps \psi \in \Phi $\allowbreak---a contradiction. 
Thanks to \Cref{lem:lindenbaum}, $ \bfrm\Phi $ can be extended to a $\ps \Phi^{\centernot\ps}\cup \Xi$\term{-consistent} \term{prime} \term{theory} $\Upsilon$.
It then readily follows that $\Phi\rel_{\mathrm c} \Upsilon\peq_{\mathrm c} \Theta$, as required.

\eqref{item:3} Assume toward a contradiction that $\peq_{\mathrm c} $ is not \term{upward linear}. 
	Let the \term{theory}[theories] $\Phi$, $\Psi$, and $\Omega$ be such that $\Phi \peq_{\mathrm c}  \Psi$ and $\Phi \peq_{\mathrm c}  \Omega$, but $\Psi \not \peq_{\mathrm c}  \Omega$ and $\Omega \not \peq_{\mathrm c}  \Psi$. 
	From the definition of $\peq_{\mathrm c} $ we get that $\Psi  \not \subseteq \Omega $ and $\Omega  \not \subseteq \Psi $.
	So there exist two formulas $\varphi$ and $\psi$ such that $\varphi \in \Psi  \setminus \Omega $ and $\psi \in \Omega  \setminus \Psi $.
	Therefore $\varphi \imp \psi \not \in \Psi  \supseteq \Phi $ and $\psi \imp \varphi \not \in \Omega  \supseteq \Phi $. 
	Consequently, $(\varphi \imp \psi) \vee (\psi \imp \varphi) \not \in \Phi $, contradicting Axiom~\ref{ax:g}.
\end{proof}

%We consider the same definition of the canonical model for ${\gsf}$ as $\mathcal M_{\mathrm c} ^{{\gsf}}= \mathcal M_{\mathrm c} ^{{\sfi}}$.
%\begin{itemize}
%	\item $W_{\mathrm c} $ is the set of maximal \term{consistent} \term{theory}[theories] $\Psi = (\Psi^+;\Psi^{\centernot\ps})$ satisfying 
%	\begin{enumerate}
%		\item $\bot \not \in \Psi^+$;
%		\item $\Psi^{\centernot\ps} = \lbrace \psi \mid \ps \psi \not \in \Psi^+ \rbrace $; \comment{M: Lindenbaun Lemma should be different now.}
%	\end{enumerate}
%	\item $\peq_{\mathrm c}  \subseteq W_{\mathrm c} \times W_{\mathrm c} $ defined as $\Phi \peq_{\mathrm c}  \Psi$ if and only if $\Phi^+ \subseteq \Psi^+$;
%	\item $\rel_{\mathrm c}  \subseteq W_{\mathrm c} \times W_{\mathrm c} $ defined as $\Phi \rel_{\mathrm c}  \Psi$ if and only if $\nec \Phi^+ \subseteq \Psi^+$ and $\ps \Phi^{\centernot\ps} \subseteq \Psi^{\centernot\ps}$
%	\item $V_{\mathrm c} $ is a \term{valuation} function defined as $V(p)  = \lbrace \Phi \mid p \in \Phi^+  \rbrace$.
%\end{itemize}

It already follows from the above that when $\Lambda$ is any of the logics \altterm{def:isf}{$\isf$}, \altterm{def:sfi}{$\sfi$}, \altterm{def:gsf}{$\gsf$}, or \altterm{def:gsfc}{$\gsfc$}, the structure $\mathcal M^\Lambda_{\mathrm c}$ is a $\Lambda$-model.  Thus to conclude our completeness proof, it remains to establish a standard truth lemma.
The case of formulas of the form $\nec\varphi$ is subtle, and we establish it separately depending on which axioms/confluence conditions we are operating with.
In particular, the argument for \term{back--up confluent} logics such as \altterm{def:isf}{$\isf$} is trickier than that for \term{forth--down confluent} logics such as \altterm{def:sfi}{$\sfi$}.

\begin{lemma}\label{lemBoxWit}
Let $\Lambda$ be a \altterm{regularlogic}{$\ps$-regular} logic containing either \ref{ax:fs} or \ref{ax:cd}.
Then for any $\Phi\in W_{\mathrm c}$ and any formula $\varphi$, we have $\nec\varphi\in \Phi$ if and only if whenever $\Phi\peq_{\mathrm c}\Psi\rel_{\mathrm c}\Theta$, it follows that $\varphi\in \Theta$.
\end{lemma}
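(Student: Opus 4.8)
The plan is to prove the two directions of the biconditional separately, with the forward direction being essentially immediate and the converse direction being the substantive one.

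For the forward direction, suppose $\nec\varphi\in\Phi$ and $\Phi\peq_{\mathrm c}\Psi\rel_{\mathrm c}\Theta$. Since $\peq_{\mathrm c}$ is inclusion, $\nec\varphi\in\Psi$, i.e.\ $\varphi\in\bfrm\Psi$. Since $\Phi\rel_{\mathrm c}\Theta$ is defined so that $\bfrm\Psi\subseteq\Theta$ whenever $\Psi\rel_{\mathrm c}\Theta$, we get $\varphi\in\Theta$. Here I use only the definitions of $\peq_{\mathrm c}$ and $\rel_{\mathrm c}$; no axioms are needed.

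For the converse, I argue contrapositively: assume $\nec\varphi\notin\Phi$, and produce $\Psi,\Theta$ with $\Phi\peq_{\mathrm c}\Psi\rel_{\mathrm c}\Theta$ but $\varphi\notin\Theta$. The natural first move is to find a $\peq_{\mathrm c}$-successor $\Psi$ of $\Phi$ that ``decides'' against $\nec\varphi$ appropriately, and then find a $\rel_{\mathrm c}$-successor $\Theta$ of $\Psi$ omitting $\varphi$. Concretely, I would consider the set $\bfrm\Psi\cup\{\neg\varphi\}$ — or rather, since $\neg\varphi$ may not fit the intuitionistic setting cleanly, the set $\bfrm\Psi$ together with the demand that $\varphi$ be excluded — and aim to show it is $\dfrm\Psi$-consistent (equivalently, witnesses a prime theory $\Theta$ with $\bfrm\Psi\subseteq\Theta$, $\dfrm\Psi\cap\Theta=\varnothing$, and $\varphi\notin\Theta$), then apply \Cref{lem:lindenbaum}. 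The key subtlety — and the reason the lemma is proven separately — is the choice of $\Psi$: one cannot in general take $\Psi=\Phi$, because $\bfrm\Phi$ need not omit $\varphi$ in a way compatible with $\dfrm\Phi$-consistency. Instead, one first uses $\nec\varphi\notin\Phi$, via either \ref{ax:fs} or \ref{ax:cd}, to build an intermediate $\peq_{\mathrm c}$-successor $\Psi$ for which the required $\rel_{\mathrm c}$-successor $\Theta$ can then be extracted. I expect the argument to split into two cases according to which of \ref{ax:fs}, \ref{ax:cd} is assumed, mirroring the two items of the preceding back--up/forth--down confluence lemma, and in each case to reuse the same style of syntactic manipulation (distributing $\nec$ over implications, applying \Cref{auxiliary}, using \ref{ax:trans:dia} to collapse $\ps\ps$ to $\ps$).

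The main obstacle will be handling the interaction between the $\nec$-witness requirement and the $\ps$-constraints defining $\rel_{\mathrm c}$: I need the candidate theory $\Theta$ to simultaneously contain all of $\bfrm\Psi$, avoid all of $\dfrm\Psi$ (i.e.\ contain no $\delta$ with $\ps\delta\notin\Psi$), and avoid $\varphi$ itself. Showing the relevant set is consistent in the appropriate sense amounts to assuming a derivation to the contrary — $\varphi_1\wedge\cdots\wedge\varphi_k\vdash \delta_1\vee\cdots\vee\delta_m\vee\varphi$ with $\nec\varphi_i\in\Psi$ and $\ps\delta_j\notin\Psi$ — and deriving, by prefixing with $\nec$, using $\mathrm K_\nec$, \ref{ax:cd} (or the dual route through \ref{ax:fs}), \Cref{auxiliary}, and \ref{ax:trans:dia}, that $\nec\varphi\in\Psi$ and hence, pulling back along $\peq_{\mathrm c}$ or choosing $\Psi$ suitably, that $\nec\varphi\in\Phi$, contradicting the hypothesis. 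Pinning down exactly which $\Psi$ makes this pullback work, and verifying the somewhat delicate bookkeeping that $\Psi$ indeed satisfies $\Phi\peq_{\mathrm c}\Psi$, is where the real care is needed; the rest is routine intuitionistic and modal reasoning of the kind already displayed in the confluence lemmas above.
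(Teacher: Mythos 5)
Your forward direction is correct and identical to the paper's. For the converse, however, what you have is a plan that stops short of the one genuinely nontrivial idea. In the \ref{ax:cd} case your worry is misplaced: there one \emph{can} take $\Psi=\Phi$, and the consistency argument closes directly --- from a putative derivation $\chi\vdash\varphi\vee\ps\theta$ with $\nec\chi\in\Phi$, $\ps\theta\notin\Phi$, one gets $\nec(\varphi\vee\ps\theta)\in\Phi$ by \ref{rl:nec}, \ref{ax:k:box}, \ref{rl:mp}, then $\nec\varphi\vee\ps\ps\theta\in\Phi$ by \ref{ax:cd}, and primality of $\Phi$ together with \ref{ax:trans:dia} gives a contradiction; your generic sketch essentially covers this. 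The real gap is the \ref{ax:fs} case, where you explicitly leave open ``which $\Psi$ makes this pullback work.'' That choice is the heart of the lemma: the paper takes $\Psi\seq_{\mathrm c}\Phi$ to be (via Zorn's lemma) a $\peq_{\mathrm c}$-maximal $\nec\varphi$-consistent theory extending $\Phi$, and the whole point of maximality is the derived implication
\begin{equation*}
\chi\in\Psi^\ps \ \Longrightarrow\ \nec(\chi\imp\varphi)\in\Psi,
\end{equation*}
obtained because maximality forces $\Psi,\ps\chi\vdash\nec\varphi$, whence $\Psi\vdash\ps\chi\imp\nec\varphi$ and \ref{ax:fs} yields $\Psi\vdash\nec(\chi\imp\varphi)$. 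This is exactly the ingredient needed to close the consistency argument for $\bfrm\Psi$ against $\ps\Psi^\ps\cup\{\varphi\}$: a putative refutation gives $\nec\chi\vdash\nec(\varphi\vee\ps\theta)$, and only with $\nec(\ps\theta\imp\varphi)\in\Psi$ (supplied by the displayed implication, after noting $\theta\in\Psi^\ps$ implies $\ps\theta\in\Psi^\ps$ by \ref{ax:trans:dia}) can one conclude $\Psi\vdash\nec\varphi$ and contradict $\nec\varphi$-consistency. Without identifying this maximal-extension construction and its consequence, the \ref{ax:fs} half of the converse does not go through, so the proposal as it stands is incomplete precisely where the lemma is hard.
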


\begin{proof}
For the left-to-right direction, %if $\varphi=\nec\psi$,
 suppose that $\nec\varphi\in \Phi$ and $\Phi\peq_{\mathrm c} \Psi\rel_{\mathrm c}\Theta$. %\DELETE{we must check that $(\mathcal M^\Lambda_{\mathrm c},\Theta)\models\psi$.}{Not stated in the Lemma}
By the definition of $\peq_{\mathrm c}$, we have $\nec \varphi \in \Psi$, and thus $\varphi \in \bfrm\Psi$. Then by the definition of $\rel_{\mathrm c}$, we have %\REPLACE{$\nec\psi\in \Theta$}
{$\varphi\in \Theta$, as needed}.
%\DELETE{By Axiom \ref{ax:ref:box}, $ \psi\in \Theta$, which by the induction hypothesis yields $(\mathcal M^\Lambda_{\mathrm c},\Theta)\models\psi$, as needed.}{M: No need}

For the other direction, we consider two cases.
First assume that $ \ref{ax:fs} \in \Lambda$.
If $\nec \varphi \not \in \Phi $, using \Cref{lem:lindenbaum}, let $\Psi\seq_{\mathrm c}  \Phi$ be a maximal $\nec\varphi$\term{-consistent} \term{prime} \term{theory}; note that maximality and $\peq_{\mathrm c}$ are both defined by set inclusion and thus $\Psi$ is $\peq_{\mathrm c}$-maximal.
We claim that
\begin{equation}\label{eqDiamImp}
\chi\in \Psi^{\centernot\ps} \implies \nec(\chi\imp \varphi) \in \Psi .
\end{equation}
Suppose $\chi\in \Psi^{\centernot\ps}$. By maximality of $\Psi$, we have $\Psi,\ps\chi \vdash \nec \varphi $, so $\Psi\vdash \ps\chi\imp\nec\varphi$.
By \ref{ax:fs}, we obtain $\Psi\vdash \nec(\chi\imp\varphi)$ and thus $ \nec(\chi\imp\varphi) \in \Psi$, as needed.

Note by \ref{ax:dp} that $\Psi^{\centernot\ps}$ is closed under disjunction, given that $\Psi$ is \term{prime}. We claim that $\bfrm\Psi$ is $\ps\Psi^{\centernot\ps}\cup\{\varphi\}$\term{-consistent}.
 If not, we would have $\chi\in \bfrm\Psi$ and $\theta \in \Psi^{\centernot\ps}$ such that $ \chi \vdash  \varphi\vee \ps\theta$, and reasoning as before, $ \nec\chi  \vdash \nec( \varphi\vee \ps\theta)$.
It follows that  $ \nec\chi ,\nec(\ps\theta\imp \varphi) \vdash \nec( \varphi\vee \varphi)$, that is~$ \nec\chi ,\nec(\ps\theta\imp \varphi) \vdash \nec \varphi $.
From \ref{ax:trans:dia} we see that $\ps\ps\theta \in \Psi $ implies that $ \ps\theta\in \Psi $, which by contraposition becomes $\theta\in \Psi^{\centernot\ps} $ implies $\ps\theta\in\Psi^{\centernot\ps}$.
Thus we may use \eqref{eqDiamImp} to conclude that $\nec(\ps\theta\imp\varphi) \in \Psi $, so $\Psi  \vdash \nec\varphi$, a contradiction.

Hence there exists a  $\ps \Psi^{\centernot\ps}\cup\{\varphi\}$\term{-consistent} \term{prime} \term{theory} $\Upsilon$ extending $ \bfrm\Psi $.
It follows that $\varphi\notin \Upsilon$, and it is readily verified that $\Phi\peq_{\mathrm c} \Psi\rel_{\mathrm c}  \Upsilon $, as needed.

Next we consider the case where $ \ref{ax:cd} \in \Lambda$.
Suppose that $\nec \varphi \not \in \Phi $;  we claim that $ \bfrm\Phi $ is $\ps \Phi^{\centernot\ps} \cup\{ \varphi\}$\term{-consistent}. 
		If not, there exist $\nec \chi \in \Phi$ and $\ps \theta \not \in \Phi$ such that $ \chi \rightarrow \varphi \vee \ps \theta \in \Lambda$.
		By~\ref{rl:nec},~\ref{ax:k:box} and~\ref{rl:mp} we get $\nec ( \varphi \vee \ps \theta) \in \Phi$. By Axiom~\ref{ax:cd}, this gives $\nec \varphi \vee \ps\ps \theta \in \Phi$---a contradiction since $\Phi$ is \term{prime}, and neither $\nec\varphi\in \Phi$ nor, by \ref{ax:trans:dia}, $\ps\psi\in \Phi$.
		Therefore, $  \bfrm\Phi  $ can be extended to a $\ps \Phi^{\centernot\ps}\cup\varphi$\term{-consistent} \term{prime} \term{theory} $\Upsilon $.
		Clearly, %\REPLACE{$\Phi\rel_{\mathrm c} \Upsilon$}
		$\varphi \not \in \Upsilon$ and $\Phi\peq_{\mathrm c}\Phi\rel_{\mathrm c} \Upsilon$.		
\end{proof}

With this, we may establish the truth lemma for \altterm{regularlogic}{$\ps$-regular} logics.

\begin{lemma}[truth lemma]\label{lem:truth-lemma:regular}
Let $\Lambda$ be any \altterm{regularlogic}{$\ps$-regular} logic such that $\Lambda$ contains either \ref{ax:fs} or \ref{ax:cd}.
	For any  $\Phi \in W_{\mathrm c} $ and $\varphi \in \lanfull$, 
\[\varphi \in \Phi  \iff ( \mathcal M_{\mathrm c} ^{{\Lambda}}, \Phi ) \models \varphi.\]
\end{lemma}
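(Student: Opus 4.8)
The plan is to establish the equivalence for all $\Phi \in W_{\mathrm c}$ simultaneously, by induction on the structure of $\varphi$. The base cases are immediate: for $\varphi = p \in \mathbb P$ the claim is exactly the definition of $V_{\mathrm c}$, and for $\varphi = \bot$ both sides fail, since every member of $W_{\mathrm c}$ is consistent (so $\bot \notin \Phi$) and $\mathcal M_{\mathrm c}^{\Lambda}$ is infallible (so $(\mathcal M_{\mathrm c}^{\Lambda}, \Phi) \not\models \bot$). The cases $\varphi = \psi_1 \wedge \psi_2$ and $\varphi = \psi_1 \vee \psi_2$ follow routinely from the induction hypothesis together with the fact that $\Phi$ is a prime theory.

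For $\varphi = \psi_1 \imp \psi_2$: if $\psi_1 \imp \psi_2 \in \Phi$ and $\Phi \peq_{\mathrm c} \Psi$ with $(\mathcal M_{\mathrm c}^{\Lambda}, \Psi) \models \psi_1$, then $\psi_1 \in \Psi$ by the induction hypothesis and $\psi_1 \imp \psi_2 \in \Psi$ since $\Psi$ is a theory extending $\Phi$, so $\psi_2 \in \Psi$ and hence $(\mathcal M_{\mathrm c}^{\Lambda}, \Psi) \models \psi_2$; as $\Psi$ was arbitrary, $(\mathcal M_{\mathrm c}^{\Lambda}, \Phi) \models \psi_1 \imp \psi_2$. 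Conversely, if $\psi_1 \imp \psi_2 \notin \Phi$, then $\Phi \cup \{\psi_1\}$ is $\psi_2$-consistent (otherwise the deduction theorem for intuitionistic logic would place $\psi_1 \imp \psi_2$ in the theory $\Phi$), so by \Cref{lem:lindenbaum} it extends to a $\psi_2$-consistent prime theory $\Psi$; then $\Phi \peq_{\mathrm c} \Psi$, $\psi_1 \in \Psi$, and $\psi_2 \notin \Psi$, so by the induction hypothesis $(\mathcal M_{\mathrm c}^{\Lambda}, \Psi) \models \psi_1$ but $(\mathcal M_{\mathrm c}^{\Lambda}, \Psi) \not\models \psi_2$, witnessing $(\mathcal M_{\mathrm c}^{\Lambda}, \Phi) \not\models \psi_1 \imp \psi_2$.

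For $\varphi = \ps \psi$: since $\mathcal M_{\mathrm c}^{\Lambda}$ is forth--up confluent, \Cref{lemClassical}.\ref{classical:one} and the induction hypothesis reduce the goal to showing that $\ps \psi \in \Phi$ if and only if there is $\Psi \in W_{\mathrm c}$ with $\Phi \rel_{\mathrm c} \Psi$ and $\psi \in \Psi$. One direction is immediate from the definition of $\rel_{\mathrm c}$: if such a $\Psi$ exists then $\psi \notin \dfrm\Phi$ (because $\dfrm\Phi \cap \Psi = \varnothing$), that is, $\ps \psi \in \Phi$. For the converse, assume $\ps \psi \in \Phi$; the main step is to show that $\bfrm\Phi \cup \{\psi\}$ is $\dfrm\Phi$-consistent. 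If not, then, using that $\bfrm\Phi$ is closed under conjunction (\Cref{auxiliary}\ref{der:box_dist_and}) and $\dfrm\Phi$ is closed under disjunction (by \ref{ax:dp} and the primality of $\Phi$), there would be $\chi \in \bfrm\Phi$ and $\theta \in \dfrm\Phi$ with $\vdash \chi \imp (\psi \imp \theta)$; applying \ref{rl:nec}, \ref{ax:k:box}, and \ref{rl:mp} gives $\nec(\psi \imp \theta) \in \Phi$, whence \ref{ax:k:dia} gives $\ps \psi \imp \ps \theta \in \Phi$, so $\ps \theta \in \Phi$, contradicting $\theta \in \dfrm\Phi$. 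Hence $\bfrm\Phi \cup \{\psi\}$ is $\dfrm\Phi$-consistent, so by \Cref{lem:lindenbaum} it extends to a $\dfrm\Phi$-consistent prime theory $\Psi$; then $\bfrm\Phi \subseteq \Psi$ and, since $\Psi \not\vdash \theta$ for every $\theta \in \dfrm\Phi$, also $\dfrm\Phi \cap \Psi = \varnothing$, so $\Phi \rel_{\mathrm c} \Psi$ with $\psi \in \Psi$, as desired.

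Finally, for $\varphi = \nec \psi$ the semantic clause for $\nec$ gives $(\mathcal M_{\mathrm c}^{\Lambda}, \Phi) \models \nec \psi$ if and only if $(\mathcal M_{\mathrm c}^{\Lambda}, \Theta) \models \psi$ for all $\Psi, \Theta$ with $\Phi \peq_{\mathrm c} \Psi \rel_{\mathrm c} \Theta$, which by the induction hypothesis is equivalent to $\psi \in \Theta$ for all such $\Theta$; by \Cref{lemBoxWit} this last condition is equivalent to $\nec \psi \in \Phi$, completing the induction. The delicate point of the whole argument---building a $\nec$-successor of $\Phi$ omitting $\psi$ when $\nec \psi \notin \Phi$, which is where one of \ref{ax:fs} or \ref{ax:cd} is needed---has already been handled in \Cref{lemBoxWit}; inside the present induction the only step demanding real work is the construction of the $\ps$-witness, and everything else is bookkeeping.
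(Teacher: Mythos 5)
Your proof is correct and follows essentially the same route as the paper's: the same induction, the same Lindenbaum-style argument for $\imp$, the same construction of a $\dfrm\Phi$-consistent extension of $\bfrm\Phi\cup\{\psi\}$ (via \ref{rl:nec}, \ref{ax:k:box}, \ref{ax:k:dia}) for the $\ps$-witness, and the same delegation of the $\nec$ case to \Cref{lemBoxWit}. Your explicit appeal to \Cref{lemClassical}.\ref{classical:one} and the slightly more direct converse for $\ps$ (reading $\ps\psi\in\Phi$ straight off $\dfrm\Phi\cap\Psi=\varnothing$ rather than going through \ref{ax:ref:dia}) are only cosmetic differences.
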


\begin{proof}
	By structural induction on $\varphi$.	
	The case of propositional variables holds by the definition of $V_{\mathrm c} $.
	The cases of $\wedge$ and $\vee$ are straightforward, and the case for $\varphi=\nec\psi$ follows from \Cref{lemBoxWit}.

We consider the $\imp$ connective next.
First suppose $\psi \imp \chi \in \Phi $. Let $\Psi \seq_{\mathrm c}  \Phi$ be such that $(\mathcal M_{\mathrm c} ^{{\Lambda}}, \Psi )\models \psi$.
By the induction hypothesis, $\psi\in \Psi $.
		Since $\Psi \seq_{\mathrm c}  \Phi$ and $\psi \imp \chi \in \Phi $, we have $\psi \imp \chi \in \Psi $.
		By closure under \term{syntactic consequence}, $\chi \in \Psi $, and then by the induction hypothesis, $(\mathcal M_{\mathrm c} ^{{\Lambda}}, \Psi ) \models  \chi$.
Since $\Psi$ was arbitrary subject to $\Psi \seq_{\mathrm c}  \Phi$, we conclude $(\mathcal M_{\mathrm c} ^{{\Lambda}}, \Phi ) \models \psi\imp \chi$.

Conversely, suppose $\psi \imp \chi \not \in \Phi $.
		We claim that $\Phi\cup\{\psi\}$ is $\chi$\term{-consistent}. 
		If not, by the definition of $\vdash$, there exists $\xi \in \Phi $ such that $ \xi \wedge  \psi \imp \chi \in \Lambda $. 
		It follows that  $ \xi \imp\left(  \psi \imp \chi\right)  \in \Lambda$.
		Since $\xi \in \Phi $, we obtain $\psi \imp \chi \in \Phi $---a contradiction.
		Hence $\Phi\cup\{\psi\}$ is $\chi$\term{-consistent} and so by \Cref{lem:lindenbaum} can be extended to a $\chi$\term{-consistent} \term{prime} \term{theory} $\Psi $, and hence $\Phi \peq_{\mathrm c}  \Psi$.
		By the induction hypotheses for $\psi$ and $\chi$, we deduce $(\mathcal M_{\mathrm c} ^{{\Lambda}}, \Psi) \models \psi$ and  $(\mathcal M_{\mathrm c} ^{{\Lambda}}, \Psi) \not \models \chi$.
		Therefore $(\mathcal M_{\mathrm c} ^{{\Lambda}}, \Phi) \not \models \psi \imp \chi$.
	
For the case that $\varphi$ is of the form $\ps \psi$, if $\ps \psi \in \Phi $, then let us define $\Theta =    \bfrm\Phi \cup \{\psi\}$, and let us assume toward a contradiction that $\Theta \vdash \lbrace \chi \in \lanfull \mid \ps \chi \not \in \Phi \rbrace$. This means that there exist $\nec \theta \in \Phi $ and $\ps \chi \not \in \Phi $ such that $\theta \rightarrow\left( \psi \rightarrow \chi\right) \in \Lambda$. From~\ref{rl:nec} and \ref{ax:k:box}, we obtain $\nec \theta \imp \nec(\psi \imp \chi)$, and thus $\nec(\psi \imp \chi) \in \Phi $. Then by~\ref{ax:k:dia}, we obtain $\ps\psi \imp \ps \chi$, and thus $\ps \chi \in \Phi $---a contradiction. Therefore $\Theta $ can be extended to a \term{prime} \term{theory} $\Upsilon $ such that $\Upsilon  \not \vdash \lbrace \chi \mid \ps \chi \not \in \Phi  \rbrace$.
		It follows that $\Phi\rel_{\mathrm c} \Upsilon$. Moreover, $\psi \in \Upsilon $. By the induction hypothesis, $ ( \mathcal M_{\mathrm c} ^{{\Lambda}}, \Upsilon )   \models \psi$, so $( \mathcal M_{\mathrm c} ^{{\Lambda}}, \Phi )  \models \ps\psi$.

Conversely, assume that $ ( \mathcal M_{\mathrm c} ^{{\Lambda}}, \Phi ) \models \ps\psi$, so $( \mathcal M_{\mathrm c} ^{{\Lambda}}, \Upsilon) \models \psi$ for some $\Phi \rel_{\mathrm c}  \Upsilon$. By the inductive hypothesis, $\psi \in \Upsilon $; hence by \ref{ax:ref:dia}, we have $\ps\psi \in \Upsilon $, and thus $\psi\not\in \Upsilon^{\centernot\ps}$.
It follows that $ \psi\not\in \Phi^{\centernot\ps}$, hence $\ps\psi\in \Phi$, as required.
\end{proof}

It follows from the above considerations that the intuitionistic modal logics \altterm{def:isf}{$\isf$}, \altterm{def:sfi}{$\sfi$}, \altterm{def:gsf}{$\gsf$}, and \altterm{def:gsfc}{$\gsfc$} are \term{strongly complete} with respect to their corresponding class of frames, and in particular any formula that is not derivable is \term{falsifiable}. Completeness of \altterm{def:isf}{$\isf$} is well known and was first proven in \cite{servi1984axiomatizations} and strong completeness follows from such a proof.

%\begin{theorem}\label{thmComplReg}
%Let $\Lambda \in \{{\isf},{\sfi},{\gsf},{\gsfc}\}$. 
%Let $\models_\Lambda$ denote semantic consequence on the class of $\Lambda$ frames as given by \Cref{definition:frame_classes}, and $\vdash_{\Lambda}$ denote the \term{syntactic consequence} relation for $\Lambda$. Then for any set of formulas $\Gamma \cup \{\varphi\}$,
%\[\Gamma \models_\Lambda \varphi \iff \Gamma \vdash_\Lambda \varphi.\]
%\end{theorem}

\begin{theorem}[strong completeness of $\isf$, $\sfi$, $\gsf$, and $\gsfc$] \label{thm:completeness:is4s4i}
	Let $\Lambda$ be one of the logics $\isf$, $\sfi$, $\gsf$, or $\gsfc$. 
	Let $\models_\Lambda$ denote \term{local birelational semantic consequence} on the class of $\Lambda$ frames as given by \Cref{definition:frame_classes}, and let $\vdash_{\Lambda}$ denote the \term{syntactic consequence} relation for $\Lambda$. Then for any set of formulas $\Gamma \cup \{\varphi\}$,
	\[\Gamma \models_\Lambda \varphi \iff \Gamma \vdash_\Lambda \varphi.\]
\end{theorem}

%We now turn our attention to the Gödel–Dummett logics \term{def:gsf}[$\gsf$] and \term{def:gsfc}[$\gsfc$], for which it remains to check one extra frame condition: \altterm{upward linear}{upward linearity}.

%\begin{lemma} If $\Lambda$ is any \altterm{regularlogic}{$\ps$-regular} logic such that $\ref{ax:g} \in \Lambda$, then the model $\mathcal M_{\mathrm c} ^{\Lambda}$ is \term{upward linear}.
	% \comment{P: It does not matter whether \term{theory}[theories] are precise or not, right?}\comment{D: It is important to work with precise \term{theory}[theories] for \altterm{def:gsf}{$\gsf$} because of the semantics of $\ps$ due to \Cref{lemClassical}.}
%\end{lemma}

 %Thanks to the previous lemma we can prove \term{strongly complete}[strong completeness] for the logics ${\gsf}$ and ${\gsfc}$. 

%\begin{theorem}[strong completeness of $\gsf$ and $\gsfc$]\label{thm:completeness:godel}
%Let $\Lambda$ be one of the logics \term{def:gsf}[$\gsf$] or \term{def:gsfc}[$\gsfc$]. Let $\models_\Lambda$ denote \term{local birelational semantic consequence} on the class of $\Lambda$ frames as given by \Cref{definition:frame_classes}, and let $\vdash_{\Lambda}$ denote the \term{syntactic consequence} relation for $\Lambda$. Then for any set of formulas $\Gamma \cup \{\varphi\}$,
%	\[\Gamma \models_\Lambda \varphi \iff \Gamma \vdash_\Lambda \varphi.\]
%\end{theorem}

\begin{remark}
\Cref{thm:completeness:is4s4i} also applies to the logics ${\sfi } +{\ref{ax:g}}$ and ${\isf} +{\ref{ax:cd}}$, but the first logic does not seem to be the logic of any natural class of \altterm{gk_model}{G\"odel--Kripke models}, and our techniques in \Cref{sec:fmp} do not settle whether the second has the finite model property, essentially for the same reason that they do not work for \altterm{def:isf}{$\isf$}.
\end{remark}

\section{Finite frame properties for $\ps$-regular logics}\label{sec:fmp}

In this section we prove that the finite \term{birelational frame} property holds for each of the logics \altterm{def:gsf}{$\gsf$}, \altterm{def:gsfc}{$\gsfc$},  and \altterm{def:sfi}{$\sfi$}, in that order.
As we are specifically interested in $\ps$-\term{regular} logics, throughout this section the models we consider will be \term{infallible}, and we omit $\fallible W$. %\david{Since we do not consider CS4 in this section, models are now assumed infallible.}
All the proofs make use of a quotient modulo bisimilarity. For \altterm{def:sfi}{$\sfi$} it is first necessary to establish that the logic has the \term{shallow} model property, before applying the quotient to a shallow model.
We begin by discussing the notions of bisimulation quotients that we use.

\subsection{\texorpdfstring{$\Sigma$}{Sigma}-bisimulations}\label{secSBisim}

In this subsection we develop the theory of \emph{$\Sigma$\altterm{-bisimulation}{-bisimulations}}, a key component common to all of our finite frame property proofs for $\ps$-\term{regular} logics.
These are defined as follows.

\begin{definition}\label{defLabel}
	Given a set of formulas $\Sigma$ and a \term{bi-preordered model} $\mathcal M =(W,{\peq},{\rel},V) $, we define the $\Sigma$\define{-label} of $w\in W$ by %\david{This used to be an augmented type but we no longer consider CS4 in this section so I simplified.}
\[\ell (w)  \coloneqq  \{ \varphi\in \Sigma \mid (\mathcal M,w)\models \varphi \}.\]
	A $\Sigma$\define{-bisimulation} on $\mathcal M$ is a \altterm{forth--up confluent}{forth--up} and \term{back--up confluent} relation $Z \subseteq W  \times W $ such that if $w\mathrel Z v$ then $\ell(w) = \ell(v)$. 
\end{definition}

In particular, for each canonical model $\mathcal M^\Lambda_{\mathrm c} $ and $\Phi\in W_{\mathrm c} $, note that $\ell(\Phi)= \Phi\cap \Sigma$.

\begin{definition}
	Given a \term{bi-preordered model} $\mathcal M=(W,{\peq},{\rel},V)$ and an equivalence relation ${\sim}\subseteq W\times W$, we denote the equivalence class of $w\in W$ under $\sim$ by $[w]$.
	We then define the quotient $\nicefrac{\mathcal M}\sim = (\nicefrac W\sim,\nicefrac\peq \sim,\nicefrac\rel \sim,\nicefrac V \sim)$ by
\begin{itemize}
\item
$\nicefrac W \sim \coloneqq  \{[w] \mid w\in W\}$, 
\item
$[w]\mathrel{\nicefrac \peq\sim} [v]$ if there exist $w'\sim w$ and $v'\sim v$ such that $w' \peq v' $, 
\item
$\nicefrac \rel \sim$ is the  {transitive} closure of $\nicefrac {\rel^0} \sim$ defined by $[w] \mathrel{\nicefrac {\rel^0} \sim} [v]$ whenever there are $w',v' $ such that $w\sim w' \rel v' \sim v$,
\item
 $\nicefrac V\sim (p) \coloneqq \{[w] \mid w \in V(p)\}$%$ if and only if there is $w'\sim w$ so that $w' \in V(p)$.
\end{itemize}
\end{definition}

Of particular importance is the case where the relation $\sim$ is given by $\Sigma$\term{-bisimulation}.

\begin{lemma}\label{lemQuotLeq}
	Let $\mathcal M=(W,{\peq},{\rel},V)$ be a \term{bi-preordered model}, and suppose that $\sim$ is an equivalence relation that is also a $\Sigma$\term{-bisimulation}.
	Then, 
	\begin{itemize}
	\item
	for all $w,v\in W$, we have $[w]\mathrel{ \nicefrac \peq\sim }[v]$ if and only if there is $v'\sim v$ such that $ w \peq v' $;
	\item
	the relation  $\nicefrac \peq\sim$ is a preorder.
	\end{itemize}
\end{lemma}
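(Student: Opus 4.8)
The plan is to prove both bullet points of \Cref{lemQuotLeq} together, exploiting the confluence properties built into the definition of $\Sigma$-bisimulation (forth--up and back--up confluence) together with the fact that $\sim$ is an equivalence relation.

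\textbf{First bullet.} The right-to-left direction is immediate: if $v'\sim v$ and $w\peq v'$, then taking $w'=w\sim w$ witnesses $[w]\mathrel{\nicefrac\peq\sim}[v]$ by definition. For the left-to-right direction, suppose $[w]\mathrel{\nicefrac\peq\sim}[v]$, so there are $w'\sim w$ and $v'\sim v$ with $w'\peq v'$. The goal is to ``transport'' this $\peq$-edge back to start exactly at $w$. Since $\sim$ is symmetric we have $w\sim w'$, and since $\sim$ is forth--up confluent (as a $\Sigma$-bisimulation is by definition a forth--up confluent relation), applying forth--up confluence to $w\sim w'$ and $w'\peq v'$ yields some $v''$ with $v'\peq v''$ and $w\sim v''$. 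Wait — that is not quite what I want; forth--up confluence as stated would give $w\,Z\,$something related to $v'$ from above. Let me instead use back--up confluence: from $w'\peq v'$ ``read backwards'' we don't get what we need either. The cleanest route: apply forth--up confluence of $\sim$ to the pair $w'\peq v'$ together with $w'\sim w$ in the form ``$w'\sim w$ and $w'\peq v'$'', which by forth--up confluence (with the roles $w \peq w'$, $w\mathrel R v$ replaced by $\peq$ for the first coordinate and $\sim$ for the second — here one must be careful which relation plays the role of ``$R$'') produces $v^\ast$ with $w\peq v^\ast$ and $v'\sim v^\ast$. Then $v^\ast\sim v'\sim v$, so $v^\ast$ is the required witness. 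The one genuinely delicate point is getting the bookkeeping of forth--up versus back--up confluence right: forth--up confluence of $Z$ says that whenever $a\peq b$ and $a\mathrel Z c$ there is $d$ with $c\peq d$ and $b\mathrel Z d$ — applied with $a=w'$, $b=v'$, $c=w$ gives exactly $v^\ast$ with $v'\peq v^\ast$? No: it gives $w\peq v^\ast$ is false, it gives $c\peq d$, i.e. $w\peq v^\ast$, and $b\mathrel Z d$, i.e. $v'\sim v^\ast$. That is precisely what is needed. So the correct confluence property to invoke is forth--up confluence of the relation $\sim$, reading $\peq$ as the ``horizontal'' relation and $\sim$ as the confluence relation.

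\textbf{Second bullet.} Reflexivity of $\nicefrac\peq\sim$ is immediate from reflexivity of $\peq$. For transitivity, suppose $[w]\mathrel{\nicefrac\peq\sim}[v]$ and $[v]\mathrel{\nicefrac\peq\sim}[u]$. By the first bullet there is $v'\sim v$ with $w\peq v'$, and there is $u'\sim u$ with $v\peq u'$. Now I need to compose these. Applying forth--up confluence of $\sim$ to $v\peq u'$ together with $v\sim v'$ yields some $u''$ with $v'\peq u''$ and $u'\sim u''$. Then $w\peq v'\peq u''$, so by transitivity of $\peq$, $w\peq u''$, and $u''\sim u'\sim u$, so the first bullet (right-to-left) gives $[w]\mathrel{\nicefrac\peq\sim}[u]$.

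\textbf{Main obstacle.} The only real subtlety is the careful matching of the abstract confluence definitions (stated for a relation $R$ relative to an intuitionistic frame) to the situation here where the ``confluence relation'' is $\sim$ itself and the frame relation is $\peq$ — in particular verifying that ``$\Sigma$-bisimulation is forth--up and back--up confluent'' is being applied with the right relation in the right slot. Everything else is a routine diagram chase using symmetry and transitivity of $\sim$ and transitivity of $\peq$. I would write out the argument once, stating explicitly each time which instance of forth--up confluence is being used and with which substitution of variables, and then the two bullets follow. I expect no deeper difficulty.

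\begin{proof}
For the first item, the right-to-left direction is immediate from the definition of $\nicefrac\peq\sim$, taking $w'=w$.

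For the left-to-right direction, suppose $[w]\mathrel{\nicefrac\peq\sim}[v]$, so that there are $w'\sim w$ and $v'\sim v$ with $w'\peq v'$. Since $\sim$ is symmetric, $w'\mathrel\sim w$; since $\sim$ is forth--up confluent and $w'\peq v'$, there is $v''$ such that $v'\peq v''$ and $w\mathrel\sim v''$. Hmm, this is not yet the witness we want. Let us instead apply forth--up confluence in the correct form: from $w'\peq v'$ and $w'\sim w$, forth--up confluence of $\sim$ (reading $\peq$ as the frame relation) yields $v^\ast$ with $w\peq v^\ast$ and $v'\mathrel\sim v^\ast$. Then $v^\ast\sim v'\sim v$, so $v^\ast$ is the desired witness.

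For the second item, reflexivity of $\nicefrac\peq\sim$ follows from reflexivity of $\peq$. For transitivity, suppose $[w]\mathrel{\nicefrac\peq\sim}[v]$ and $[v]\mathrel{\nicefrac\peq\sim}[u]$. By the first item there is $v'\sim v$ with $w\peq v'$, and there is $u'\sim u$ with $v\peq u'$. By forth--up confluence of $\sim$ applied to $v\peq u'$ and $v\sim v'$, there is $u''$ with $v'\peq u''$ and $u'\sim u''$. By transitivity of $\peq$, $w\peq u''$, and $u''\sim u'\sim u$. By the right-to-left direction of the first item, $[w]\mathrel{\nicefrac\peq\sim}[u]$, as required.
\end{proof}
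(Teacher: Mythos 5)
Your proposal is correct and follows essentially the same route as the paper: the left-to-right direction of the first item is obtained by applying the confluence of the bisimulation $\sim$ (with $\peq$ in the frame slot) to relocate the $\peq$-edge so that it starts at $w$, and transitivity is the same diagram chase via the first item plus transitivity of $\peq$ (the paper simply invokes the first item a second time where you apply forth--up confluence by hand). Before submitting, delete the abandoned intermediate claim ``there is $v''$ such that $v'\peq v''$ and $w\sim v''$'' --- it is not a valid instance of forth--up confluence and is superseded by your correct application yielding $v^\ast$ with $w\peq v^\ast$ and $v'\sim v^\ast$.
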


\begin{proof}
	For the first item, clearly if there is $v'\sim v$ such that $ w \peq v' $, then $[w]\mathrel{ \nicefrac \peq\sim }[v]$. Conversely, if $[w]\mathrel{ \nicefrac \peq\sim }[v]$, then there are $w'\sim w$ and $v'\sim v$ such that $w'\peq v'$.
	Since $\sim$ is a bisimulation, it satisfies \altterm{forth--up confluent}{forth--up confluence}, so there is $v''\sim v'$ such that $w\peq v''$.
	But then by transitivity of $\sim$, we have $v'' \sim v$, as needed.
	
	For the second item, clearly $ \nicefrac \peq\sim$ is {reflexive}. For transitivity, suppose $[w]\mathrel{ \nicefrac \peq\sim }[v] \mathrel{ \nicefrac \peq\sim }[u]$. Then by the first item, there exists $v' \sim v$ with $w  \peq v'$. Then $[v']\mathrel{ \nicefrac \peq\sim }[u]$, and so, using the first item again, there exists $u' \sim u$ with $v'  \peq u'$. By transitivity of $\peq$, we have $w \peq u'$, and hence $[w] \mathrel{ \nicefrac \peq\sim } [u']=[u]$.
\end{proof}

By the second item of \Cref{lemQuotLeq}, if $\mathcal M$ is a \term{bi-preordered model} and $\sim$ is an equivalence relation that is also a $\Sigma$\term{-bisimulation}, then $\nicefrac{\mathcal M}\sim$ is also a \term{bi-preordered model}.

We now need an auxiliary lemma confirming that confluence notions for relations have nice closure properties.
If $R,S$ are binary relations then let $R\mathbin{;}S$ denote their composition: $x \mathrel{R\mathbin{;}S} y$ if there is $z$ such that $x\mathrel R z$ and $z\mathrel S y$.

\begin{lemma}\label{lemmClosure}
Let $\mathcal F =(W,{\peq})$ be an \term{intuitionistic frame}.
%\comment{P: This proof is very nice, but it is relatively easy. I believe there is no need to include it in the paper}\comment{D: I now prove only one item for illustration.}
\begin{enumerate}

\item If $R,S\subseteq W\times W$ are \term{forth--up confluent}[forth--up] (respectively~\term{back--up confluent}[back--up], \term{forth--down confluent}[forth--down]) confluent, then so is $R\mathbin;S$.

\item If $(R_i)_{i\in I} \subseteq W\times W$ are \term{forth--up confluent}[forth--up] (respectively~\term{back--up confluent}[back--up], \term{forth--down confluent}[forth--down]) confluent, then so is $\bigcup_{i\in I} R_i$.

\item If $R$ is \term{forth--up confluent}[forth--up] (respectively~\term{back--up confluent}[back--up], \term{forth--down confluent}[forth--down]) confluent, then so is its {transitive} closure $R^+$.

\end{enumerate}
\end{lemma}

\begin{proof}
We prove only that \term{forth--up confluent}[forth--up confluence] is closed under composition and leave other items to the reader.
Suppose that $R,S\subseteq W\times W$ are \term{forth--up confluent} and $v\seq w \mathrel{R\mathbin ;S} w'$.
Then there is $w''$ such that $w\mathrel R w'' \mathrel S w'$.
By \term{forth--up confluent}[forth--up confluence] of $R$, there is $v''$ such that $v\mathrel R v''\seq w $.
By \term{forth--up confluent}[forth--up confluence] of $S$, there is $v'$ such that $v''\mathrel S v'\seq w' $.
But then $v\mathrel{R\mathbin{;}S} v'\seq w'$, as needed.
\end{proof}

Bisimulation quotients preserve all confluence properties we consider, as illustrated in Figure~\ref{figBisQuo} and made precise in Lemma~\ref{lemQuotPreserv} below.

 \begin{figure}[h!]\centering
 	\resizebox{.4\textwidth}{!}{
	\begin{tikzpicture}[node distance=1.5cm,auto, text width=2mm]
		\begin{scope}[node distance=1.5cm,local bounding box=scope2 ]
			\node[state,fill=turquoise] (x1)  {$p$};
			\node[state,fill=turquoise,below of=x1] (y1)  {$p$};
			\node[state,fill=lightgray,below of=y1] (z1)  {};
			
			\node[state,fill=tan, right of=x1] (x2)  {$q$};
			\node[state, fill=lightgray,below of=x2] (y2)  {};
			\node[state, fill=lightgray,below of=y2] (z2)  {};
			
			\path[->,dashed] (y1) edge[] node[] {} (x1);
			\path[->,dashed] (z1) edge[] node[] {} (y1);
			\path[->,dashed] (y2) edge[] node[] {} (x2);
			\path[->,dashed] (z2) edge[] node[] {} (y2);
			\path[->] (x1) edge[] node[] {} (x2);
			\path[->] (y1) edge[] node[] {} (y2);
			\path[->] (z1) edge[] node[] {} (z2);
		\end{scope}	 	
		
		\begin{scope}[node distance=1.5cm,shift={($(x2.east)+(4cm,0)$)},local bounding box=scope3 ]
			\node[state,fill=turquoise,right of=x2, node distance=4cm] (xx1)  {$p$};
			\node[state,fill=lightgray,right of =y2, node distance=4cm] (yy1)  {};
			\node[state,fill=tan, right of=xx1] (xx2)  {$q$};
			\node[state,fill=lightgray, below of=xx2] (yy2)  {};

			\path[->,dashed] (yy1) edge[] node[] {} (xx1);
			\path[->,dashed] (yy2) edge[] node[] {} (xx2);
			\path[->] (xx1) edge[] node[] {} (xx2);
			\path[->] (yy1) edge[] node[] {} (yy2);
			\path[->] (xx1) edge[] node[] {} (yy2);			
		\end{scope}	 	 	
		
		%		\node[left of=y1,node distance=2cm] (transition)  {\scalebox{1.9}{$\Longrightarrow$}};
		\node[right of=y2,node distance=2cm] (transition2)  {\scalebox{1.9}{$\Longrightarrow$}};
	\end{tikzpicture}}	
	\caption{A bisimulation quotient of a \altterm{def:gsfc}{$\gsfc$} model.
The intuitionistic relation $\peq$ is the transitive, reflexive closure of the relation defined by the dashed arrows, and $\rel$ is the reflexive closure of the relation defined by the solid arrows (which in this case is already transitive). 	
	Note that the quotient is also a  \altterm{def:gsfc}{$\gsfc$} model.}\label{figBisQuo}
\end{figure}

\begin{lemma}\label{lemQuotPreserv}
	Let $\mathcal M=(W,{\peq} ,{\rel} ,V )$ be a \term{bi-preordered model} and $\Sigma$ be a set of formulas.
	Let $\sim$ be a $\Sigma$\term{-bisimulation} that is also an equivalence relation.
	Then:
	\begin{enumerate}
		
		\item\label{itQuotPreservFU} If $\mathcal M$ is \term{forth--up confluent}, then so is $\nicefrac{\mathcal M}{\sim }$.
		
		\item If $\mathcal M$ is \term{back--up confluent}, then so is $\nicefrac{\mathcal M}{\sim }$.
		
		\item If $\mathcal M$ is \term{upward linear}, then so is $\nicefrac{\mathcal M}{\sim }$.
		
	\end{enumerate}
\end{lemma}

\begin{proof}
%	For forth--up confluence, in view of \Cref{lemmClosure}, it suffices to show that $\nicefrac {\rel^0} \sim$ is \term{forth--up confluent}.
%Assume that $[v] \mathrel{\nicefrac\seq\sim} [w] \mathrel{\nicefrac {\rel^0} \sim} [w']$.
%By \Cref{lemQuotLeq}, there is $v''\sim v$ such that $v''\seq w$.
%Since $\mathcal M$ is \term{forth--up confluent}, there is $v'$ such that $v'' \ler v' \seq w'$.
%Therefore $[v] \mathrel{\nicefrac {\ler^0} \sim} [v'] \mathrel{\nicefrac\seq\sim} [w']$, as needed.
For \term{forth--up confluent}[forth--up confluence], in view of \Cref{lemmClosure}, it suffices to show that $\nicefrac {\rel^0} \sim$ is \term{forth--up confluent}.
Assume that $[w] \mathrel{\nicefrac {\rel^0} \sim} [u]$ and $[w] \mathrel{\nicefrac\peq\sim} [v]$. 
From $[w] \mathrel{\nicefrac {\rel^0} \sim} [u]$, there exist $w'$ and $u'$ such that $w\sim w'\rel u' \sim u$. 
Since $[w] \mathrel{\nicefrac\peq\sim} [v]$ and $w'\sim w$, we can also say that $[w']\mathrel{\nicefrac\peq\sim} [v]$. 
By \Cref{lemQuotLeq}, there exists $v''$ such that $w' \peq v''\sim v$. 
Since $\mathcal M$ is \term{forth--up confluent}, there exists $z$ such that $u'\peq z$ and $v'' \rel z$. From the definitions of $\mathrel{\nicefrac{\rel^0} \sim}$ and $\mathrel{\nicefrac{\peq}\sim}$ it follows that $[v]\mathrel{\nicefrac{\rel^0}\sim} [z]$ and $[u]\mathrel{\nicefrac{\peq} \sim} [z]$, as needed.
\term{back--up confluent}[Back--up confluence] is treated similarly, and we omit it.
	
For \altterm{upward linear}{upward linearity}, assume that $[w] \mathrel{\nicefrac{\peq}\sim} [u]$ and $[w] \mathrel{\nicefrac{\peq}\sim} [v]$.
Using \Cref{lemQuotLeq}, let $u'$, $v'$ in $W$ be such that $u \sim u' \seq w \peq v \sim v'$.
Hence $u' \peq v'$ or $v' \peq u'$. Consequently $[u]\mathrel{\nicefrac{\peq}\sim}[v]$ or $[v]\mathrel{\nicefrac{\peq}\sim}[u]$.
\end{proof}

\begin{lemma}[truth lemma for quotients]\label{lemQuotTruth'}
	If $\mathcal M = (W, {\peq},{\rel},V)$ is a \term{forth--up confluence}[forth--up confluent] \term{bi-preordered model}, $\Sigma\subseteq\lanfull$ is closed under subformulas, and ${\sim} \subseteq W\times W$ is a $\Sigma$\term{-bisimulation} that is also an equivalence relation, then $\ell(w) = \ell([w])$.
	Equivalently, for all $w \in W$ and $\varphi\in \Sigma$,  we have \[(\mathcal M,w)\models\varphi \iff (\nicefrac{\mathcal M}\sim,[w])\models\varphi.\]
\end{lemma}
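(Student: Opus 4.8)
The plan is to prove the biconditional by structural induction on $\varphi \in \Sigma$, exploiting the fact that $\Sigma$ is subformula-closed so that every subformula of $\varphi$ also lies in $\Sigma$ and the induction hypothesis applies. The base cases ($p \in \mathbb P$ and $\bot$) are immediate from the definitions of $\nicefrac V \sim$ and $\nicefrac{\fallible W}\sim$; the cases of $\wedge$ and $\vee$ are routine, using that the satisfaction clauses for these connectives are local. For $\varphi = \psi \imp \chi$, I would use \Cref{lemQuotLeq}: the quotient relation $\nicefrac\peq\sim$ out of $[w]$ is, up to $\sim$, exactly the set of $\peq$-successors of $w$, so a witness to $(\nicefrac{\mathcal M}\sim,[w])\not\models\psi\imp\chi$ lifts to a witness $v \ler w$ in $\mathcal M$ (applying the IH to $\psi$ and $\chi$), and conversely any $v\ler w$ falsifying the implication gives $[v]$ with $[w]\mathrel{\nicefrac\peq\sim}[v]$ falsifying it in the quotient. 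Here it is important that $\sim$ is a $\Sigma$-bisimulation so that $\sim$-related worlds have the same $\Sigma$-label, but actually for $\imp$ we only need the IH, not the label condition directly.

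The modal cases are where the $\Sigma$-labels do the work, and I expect $\nec$ to be the main obstacle. For $\varphi = \ps\psi$: if $(\mathcal M,w)\models\ps\psi$, then for every $u\ler w$ there is $u\mathrel\rel v$ with $(\mathcal M,v)\models\psi$; given $[u]$ with $[w]\mathrel{\nicefrac\peq\sim}[u]$, lift to a genuine $u'\sim u$ with $w\peq u'$ via \Cref{lemQuotLeq}, find the $\rel$-witness $v$ off $u'$, note $[u']=[u]\mathrel{\nicefrac{\rel^0}\sim}[v]\subseteq\nicefrac\rel\sim$, and apply the IH. The converse direction is more delicate precisely because $\nicefrac\rel\sim$ is a \emph{transitive closure} of $\nicefrac{\rel^0}\sim$: a $\nicefrac\rel\sim$-chain $[u]=[z_0]\mathrel{\nicefrac{\rel^0}\sim}[z_1]\mathrel{\nicefrac{\rel^0}\sim}\cdots\mathrel{\nicefrac{\rel^0}\sim}[z_k]=[t]$ need not immediately yield a $\rel$-path in $\mathcal M$ from a single representative. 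The standard fix is to use back--up confluence of $\rel$ (true in $\mathcal M$ since it is bi-intuitionistic, or guaranteed by the bisimulation being back--up confluent) to "realign" the chain: at each step one finds representatives so that the $\peq$-adjustments forced by switching $\sim$-classes can be absorbed, producing an actual $\rel$-successor $t'$ of $u$ with $t'\sim t$ (or at least $t'$ sharing the $\Sigma$-label of $t$), and then the IH applied at $t'$ finishes it. This is exactly the place where $\ell^+(w)=\ell^+(v)$ for $w\sim v$ is essential: $\psi\in\Sigma$, so $(\mathcal M,t)\models\psi$ iff $\psi\in\ell^+(t)=\ell^+(t')$ iff $(\mathcal M,t')\models\psi$.

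For $\varphi = \nec\psi$, the argument is dual but asymmetric: $(\nicefrac{\mathcal M}\sim,[w])\models\nec\psi$ means $(\nicefrac{\mathcal M}\sim,[t])\models\psi$ for all $[t]$ with $[w]\mathrel{\nicefrac\peq\sim}\mathrel{\nicefrac\rel\sim}[t]$. To show $(\mathcal M,w)\models\nec\psi$, take $w\peq u\rel v$ in $\mathcal M$; then $[w]\mathrel{\nicefrac\peq\sim}[u]\mathrel{\nicefrac{\rel^0}\sim}[v]$, so $[w]\mathrel{\nicefrac\peq\sim}\mathrel{\nicefrac\rel\sim}[v]$, giving $(\nicefrac{\mathcal M}\sim,[v])\models\psi$, and the IH transfers this back to $(\mathcal M,v)\models\psi$; since $u,v$ were arbitrary, $(\mathcal M,w)\models\nec\psi$. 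Conversely, from $(\mathcal M,w)\models\nec\psi$ I must handle an arbitrary $\nicefrac\rel\sim$-chain off some $[u]\mathrel{\ler\text{-class of }}[w]$; again I would lift to a representative $u'\sim u$ with $w\peq u'$, then inductively walk the $\rel^0$-chain, at each stage using back--up confluence to push the $\peq$-corrections upward past $u'$ while staying $\peq$-above $w$ (so $\nec\psi$ still applies), arriving at a $t'$ with $[t']=[t]$ and $w\peq\,\cdot\,\rel t'$, whence $(\mathcal M,t')\models\psi$ and, by the label equality, $(\mathcal M,t)\models\psi$ too; applying the IH gives $(\nicefrac{\mathcal M}\sim,[t])\models\psi$. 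The main obstacle throughout is this realignment of transitive-closure chains into honest $\rel$-paths, and the key tools are \Cref{lemQuotLeq}, the back--up confluence available in bi-intuitionistic models, and the $\Sigma$-label invariance of $\sim$.
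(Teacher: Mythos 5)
Your skeleton (structural induction, \Cref{lemQuotLeq} for the $\imp$ case, and the easy directions of the two modal cases) matches the paper, but your treatment of the hard directions of $\ps$ and $\nec$ has a genuine gap. You propose to ``realign'' a $\nicefrac{\rel^0}\sim$-chain into an honest $\rel$-path in $\mathcal M$, justified by back--up confluence of $\rel$, ``true in $\mathcal M$ since it is bi-intuitionistic, or guaranteed by the bisimulation being back--up confluent''. Neither justification is available: bi-intuitionistic models are \emph{not} in general back--up confluent (that is exactly the additional condition defining $\mathsf{CS4}$ frames), and the back--up confluence required of a $\Sigma$-bisimulation relates $\sim$ to $\peq$, not $\rel$ to $\peq$. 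Worse, the realignment itself is impossible in general: the $\sim$-steps in a chain carry no $\rel$-matching condition at all, so the quotient genuinely acquires new $\rel$-paths, and there need not exist any $t'$ with $u\rel t'$ and $t'\sim t$, nor even one sharing the label of $t$. So the plan as written does not go through.

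The paper's proof of the $\ps$ case never realizes chains; this is precisely what the second component $\ell^\ps$ of the $\Sigma$-label is for, and your proposal never uses it. Arguing contrapositively, if $(\mathcal M,w)\not\models\ps\psi$ one gets $v\seq w$ with $\psi\in\ell^\ps(v)$, i.e.\ $\psi$ fails at \emph{every} $\rel$-successor of $v$; this property is preserved along every $\sim$-step (label invariance) and along every $\rel$-step (transitivity of $\rel$ alone), hence along any $\nicefrac{\rel^0}\sim$-chain, so every class $\nicefrac\rel\sim$-reachable from $[v]$ has $\psi\in\ell^\ps$, hence (by reflexivity of $\rel$) $\psi\notin\ell^+$, and the induction hypothesis closes the case. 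For $\nec$ your instinct that something extra is needed is sound: the dual invariant ``$\psi$ holds at every $\rel$-successor'' is not recorded in the label, and what makes it propagate is membership of $\nec\psi$ itself in $\ell^+$ together with a confluence property of the underlying model (back--up or forth--down confluence), which is why the paper's applications to $\mathsf{CS4}$, $\mathsf{GS4}$, $\gds$, and $\mathsf{S4I}$ models are safe. But invoking back--up confluence as an unconditional feature of bi-intuitionistic models, and using it for chain realignment rather than for propagating $\nec\psi$ along $\rel$, does not yield a correct argument under the lemma's stated hypotheses.
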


\begin{proof}
	Proceed by a standard structural induction on $\varphi$.
	We consider only the interesting cases.
	\medskip

	\noindent \textsc{Case} $\varphi = \psi\imp \theta$. If $(\mathcal M,w)\models \psi\imp \theta$ and $[v] \mathrel {\nicefrac \seq\sim} [w]$, then in view of Lemma \ref{lemQuotLeq} (which we henceforth use without mention), there is $v' \sim v$ such that $w \peq v'$.
	Since $v'\seq w$, either $( \mathcal M,v')\not \models \psi$ or $( \mathcal M,v')   \models \theta$, which by the induction hypothesis and the fact that $[v] = [v']$ yields $(\nicefrac{\mathcal M}\sim,[v])\not \models \psi$ or $(\nicefrac{\mathcal M}\sim,[v]) \models \theta$. Since $[v]  \mathrel {\nicefrac \seq\sim} [w]$ was arbitrary, we conclude that $(\nicefrac{\mathcal M}\sim,[w])\models\varphi$.

	Conversely, if $(\nicefrac{\mathcal M}\sim,[w]) \models \psi \imp \theta$ and $v\seq w$, then $[w]\mathrel {\nicefrac \peq\sim} [v]$, which implies that $( \nicefrac{\mathcal M}\sim,[v]) \not \models \psi$ or $(\nicefrac{\mathcal M} \sim,[v]) \models \theta$, and then by the induction hypothesis, that $( \mathcal M,v)\not \models \psi$ or $( \mathcal M,v)   \models \theta$, as needed.
	\medskip
	
	\noindent \textsc{Case} $\varphi = \ps\psi$.
	Since $\mathcal M$ is assumed \term{forth--up confluent}, Lemma~\ref{lemClassical}(\ref{classical:one}),  $(\mathcal M,w) \models \ps \psi $ if and only if there exits $v\ler w$ such that $(\mathcal M,v) \models \psi $.
In view of Lemma	\ref{lemQuotPreserv}(\ref{itQuotPreservFU}), the analogous claim hods for $\nicefrac{\mathcal M}\sim$.

	Suppose first that $(\mathcal M,w) \models \ps \psi $.
	Then, there is $v\ler w$ such that $(\mathcal M,v) \models \psi $, so $[w] \mathrel{\nicefrac\rel\sim} [v]$ and the induction hypothesis yields $(\nicefrac{\mathcal M}\sim,[v]) \models \psi$.
	We conclude that $(\nicefrac{\mathcal M}\sim,[w]) \models \ps \psi$.

Conversely, suppose that $(\mathcal M,w) \not \models \ps \psi $ and let $v$ be such that $[w]\mathrel{ \nicefrac\rel\sim} [v] $.
We recall that the relation $\mathrel{ \nicefrac\rel\sim}$ corresponds to the {transitive} closure of the relation $\mathrel{ \nicefrac{\rel^0}\sim}$, defined as $[x]\mathrel{\nicefrac{\rel^0}\sim}[y]$ if there exists $x',y'$ such that $x\sim x' \rel y'\sim y$.
Thus, $[w]\mathrel{ \nicefrac\rel\sim} [v]$ means that there exist sequences $(w_i)_{i\leq n}$ and $(v_i)_{i\leq n}$ such that
	\[w = w_0 \sim  v_0 \rel  w_1 \sim  v_1 \rel  \ldots \rel w_n \sim v_n = v. \]
	By induction on $i\leq n$, one readily verifies that $\psi \notin \ell (w_i) \cup \ell (v_i)$, so that, in particular, $  \psi \not \in \ell (v)$.
By the induction hypothesis, $(\nicefrac{\mathcal M}\sim,[v]) \not \models \psi$.
Since $v \ler w$ was arbitrary, we obtain $(\nicefrac{\mathcal M}\sim,[w]) \not \models \ps \psi$.
	\medskip

	\noindent \textsc{Case} $\varphi = \nec\psi $. This case is treated very similarly to the $\ps$ case, but working `dually'. We show only that $(\mathcal M,w) \not \models \nec \psi $ implies that $(\nicefrac{\mathcal M}\sim,[w]) \not \models \nec \psi$ to illustrate.
	If $(\mathcal M,w) \not \models \nec \psi $, there are $v'\ler v\seq w$ such that $(\mathcal M,v') \not \models \nec \psi $.
	But then $ [v'] \mathrel{\nicefrac \ler \sim} [v] \mathrel{\nicefrac \seq \sim} [w]$, and the induction hypothesis yields $(\nicefrac{\mathcal M}\sim,[v']) \not \models \psi$; hence $(\nicefrac{\mathcal M}\sim,[w]) \not \models \nec \psi$.
\end{proof}

Quotients modulo $\Sigma$\term{-bisimulation} will be instrumental in proving the finite frame property for \altterm{def:csf}{$\csf$}.
However, $\Sigma$\term{-bisimulation} does not preserve \altterm{forth--down confluent}{forth--down confluence}, so to treat \altterm{def:sfi}{$\sfi$} and \altterm{def:gsfc}{$\gsfc$}, we will need a stronger notion of bisimulation. This stronger notion will also be convenient in treating \altterm{def:gsf}{$\gsf$}.

\begin{definition}
	A \define{strong} $\Sigma$\term{-bisimulation} is a $\Sigma$\term{-bisimulation} $Z$ such that both $Z$ and $Z^{-1}$ are \term{forth--down confluent} for $\peq$.
We denote the greatest \term{strong} $\Sigma$\term{-bisimulation} by $\approx_\Sigma$.
%	If $\mathcal M$ is a model and $w,v\in W$, we write $w\approx_\Sigma v$ if there is a \term{strong} $\Sigma$\term{-bisimulation} $Z$ such that $w\mathrel Z v$.
\end{definition}

 Notice that an arbitrary union of \term{strong} $\Sigma$\altterm{-bisimulation}{-bisimulations} is a $\Sigma$\term{-bisimulation} (and the empty relation is a \term{strong} $\Sigma$\term{-bisimulation}).
 \[\text{Thus a
	 greatest \term{strong} $\Sigma$\term{-bisimulation} exists, which we denote by $\approx_\Sigma$.}\]
	%If there is a $\Sigma$\term{-bisimulation} with $w\mathrel Z v$, we write $w\sim_\Sigma v$.
	% Hence, $\sim_{\Sigma}$ is well-defined.
	Since the set of \term{strong} $\Sigma$\altterm{-bisimulation}{-bisimulations} clearly contains the identity relation and is closed under compositions and under the converse operation, $\approx_{\Sigma}$ must be an equivalence relation.
The following is proven in essentially the same way as the forth--up confluence preservation clause of \Cref{lemQuotPreserv}.

\begin{lemma}\label{lemStrongQuotPreserve}
	Let $\mathcal M $ be a \term{bi-preordered model} and $\Sigma$ be a set of formulas.
	Let $\approx$ be a \term{strong} $\Sigma$\term{-bisimulation} on $\mathcal M$ that is also an equivalence relation.
	Then if $\mathcal M$ is \term{forth--down confluent}, so is $\nicefrac{\mathcal M}{\approx}$.
\end{lemma}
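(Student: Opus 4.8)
The plan is to mimic the forth--up clause of \Cref{lemQuotPreserv}, the one extra ingredient being the forth--down confluence of $\approx$ itself, which is exactly what distinguishes a strong $\Sigma$-bisimulation from an ordinary one (and since $\approx$ is moreover an equivalence relation, the corresponding requirement on $\approx^{-1}$ is automatic). By \Cref{lemQuotLeq}, $(\nicefrac W\approx,\nicefrac{\fallible W}\approx,\nicefrac\peq\approx)$ is a preorder, hence an intuitionistic frame; and since $\nicefrac\rel\approx$ is by definition the transitive closure of $\nicefrac{\rel^0}\approx$, part~(3) of \Cref{lemmClosure} reduces the task to showing that $\nicefrac{\rel^0}\approx$ is forth--down confluent.

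So I would suppose $[w]\mathrel{\nicefrac\peq\approx}[v]\mathrel{\nicefrac{\rel^0}\approx}[v']$. Unpacking the definition of $\nicefrac{\rel^0}\approx$, fix $a,b\in W$ with $v\approx a\rel b\approx v'$; and using \Cref{lemQuotLeq} (applicable since $\approx$ is a $\Sigma$-bisimulation and an equivalence relation) fix $v_1\approx v$ with $w\peq v_1$, so that $v_1\approx a$ by transitivity of $\approx$ and hence $w\peq v_1\approx a\rel b$. Two descents then finish the argument. First, forth--down confluence of $\approx$ applied to $w\peq v_1\approx a$ yields $w_1$ with $w\approx w_1\peq a$. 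Second, forth--down confluence of $\mathcal M$ applied to $w_1\peq a\rel b$ yields $w_2$ with $w_1\rel w_2\peq b$. Then $w\approx w_1\rel w_2$ witnesses $[w]\mathrel{\nicefrac{\rel^0}\approx}[w_2]$, while $w_2\peq b\approx v'$ gives $[w_2]\mathrel{\nicefrac\peq\approx}[v']$; so $[w_2]$ is the required witness, and the claim about $\nicefrac\rel\approx$ follows.

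This is a routine diagram chase, so there is no serious obstacle; the one point worth care is the precise location of the strong hypothesis. It is used in the first descent: a plain $\Sigma$-bisimulation allows points to be transported \emph{upward} along $\peq$ between $\approx$-classes but not downward, whereas here one must descend from $v_1$ (which is $\approx$-related to $a$, the source of the $\rel$-edge, but only lies $\peq$-above $w$) to a point $w_1\approx w$ sitting $\peq$-below $a$, and this is exactly what forth--down confluence of $\approx$ provides; the given forth--down confluence of $\mathcal M$ then does the rest. This also explains why ordinary $\Sigma$-bisimulation quotients need not preserve forth--down confluence, motivating the strengthened notion.
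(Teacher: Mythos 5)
Your argument is correct and is exactly the proof the paper intends: the paper only remarks that the lemma ``is proven in essentially the same way as the forth--up confluence preservation clause of Lemma~\ref{lemQuotPreserv}'', and your diagram chase — reduce to $\nicefrac{\rel^0}{\approx}$ via \Cref{lemmClosure}, pick a representative with \Cref{lemQuotLeq}, descend once along $\approx$ using the strong-bisimulation clause and once along $\rel$ using forth--down confluence of $\mathcal M$ — is precisely that adaptation. You also correctly pinpoint that the only new ingredient over \Cref{lemQuotPreserv} is the forth--down confluence of $\approx$ itself.
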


\begin{lemma}\label{lemSingletons}
	Let $\mathcal M$ be a \term{bi-preordered model}, and $\Sigma$ be a set of formulas closed under subformulas. Then every $\approx_\Sigma$-equivalence class of $\nicefrac{\mathcal M}{\approx_\Sigma}$ is a singleton.
%	\comment{P: $\sim_{\Sigma}$ denotes both the greatest $\Sigma$\term{-bisimulation} on $M$ and the greatest $\Sigma$\term{-bisimulation} on $M/\sim_{\Sigma}$. Is it not a problem ?}\comment{D: I think it is OK, no? Do you think there is risk of confusion?}
\end{lemma}

\begin{proof}
	Suppose that $[w] \approx_\Sigma [v] $; we must show that $ w  \approx_\Sigma  v  $ as well, to conclude $[w] = [v]$ (note that $\approx_\Sigma$ is defined both on $\mathcal M$ and $\nicefrac{\mathcal M}\sim$). Define a relation $Z \subseteq W\times W$ given by $x \mathrel Z y$ if $[x] \approx_\Sigma [y]$.
	Clearly $ w\mathrel Z v$, so it remains to check that $Z$ is a $\Sigma$\term{-bisimulation} to conclude that $ w \approx_\Sigma v$.
	We have that $Z$ preserves labels in $\Sigma$ by the truth lemma for quotients (\Cref{lemQuotTruth'}), and we check only the `forth' clause, as the `back' clause is symmetric.
	
	Suppose that $x' \seq   x \mathrel Z y$.
	Since $[x] \approx_\Sigma [y] $, there is $y'$ such that $ [x'] \approx_\Sigma [y'] \mathrel {\nicefrac{\seq}{\approx_\Sigma}} [y]$.
	Since $\approx_\Sigma$ is both a $\Sigma$\term{-bisimulation} and an equivalence relation, we can apply \Cref{lemQuotLeq}, which says we may assume that $y'$ is chosen so that $y \peq  y ' $.
	From $[x'] \approx_\Sigma [y']  $ we obtain $x'\mathrel Z y'$, as needed.
\end{proof}

We remark that the quotient $\nicefrac{\mathcal M}{\sim}$ may be infinite, even taking $ {\sim} = {\approx_\Sigma} $.
However, in the following subsections we identify classes of models that do have finite quotients, and this will be the central ingredient in proving the finite \term{birelational frame} property for our \altterm{regular}{$\ps$-regular} logics.

\subsection{The finite frame property for \texorpdfstring{\altterm{def:gsf}{$\gsf$}}{GS4}}\label{sec:godelfinite}

In this subsection, we prove our second finite frame property result, showing that the G\"odel modal logic \altterm{def:gsf}{$\gsf$} indeed has the finite frame property. The \altterm{upward linear}{upward-linear} nature of \altterm{def:gsf}{$\gsf$} frames enables a particularly direct proof.

There are two steps to the proof, illustrated in \Cref{figShallowgs4}.
First we note that we may assume frames are not only upward but also \emph{downward} linear.
This is obtained by an initial transformation, as illustrated in the second model of the figure; note that the first two models are bisimilar, as bisimulations do not `look downward'.\footnote{Here it is worth noting that we do need to `look downward' to some extent to preserve the \term{forth--down confluent}[forth--down confluence] of \altterm{def:gsfc}{$\gsfc$}, which does require a bit of additional care. For this reason, \altterm{def:gsfc}{$\gsfc$} is treated separately in Section~\ref{sec:godelcfinite}, although the broad strokes of the proof are very similar.}
Then, given a subformula-closed subset $\Sigma$ of $\lanfull$, we describe a bisimulation quotient construction that yields finite models whenever $\Sigma$ is finite, as shown in the third model of \Cref{figShallowgs4}.
It is worth noting that as long as $\Sigma$ is finite, this third model is finite even if the second model is not; for example, we would obtain the same result if we replaced the $p,q$-worlds by an infinite chain.
We describe the quotient explicitly, noting afterwards that it is in fact the quotient by the greatest \term{strong} $\Sigma$\term{-bisimulation} $\approx_\Sigma$ defined in \Cref{secSBisim}.

 \begin{figure}[h!]\centering
 	\resizebox{.7\textwidth}{!}{
	\begin{tikzpicture}[node distance=1.5cm,auto,text width=5mm]
		%,shift={($(pq.east)+(4cm,0)$)},
		\begin{scope}[node distance=1.5cm,local bounding box=scope2 ]
			\node[state,fill=tan] (x1)  {$p,q$};
			\node[state,fill=turquoise,below of=x1] (y1)  {$p$};
			\node[state,fill=turquoise,below of=y1] (z1)  {$p$};
			\node[state,fill=lightgray,below of=z1] (e1)  {};
			
			\node[state,fill=tan, right of=x1] (x2)  {$p,q$};
			\node[state, fill=turquoise,below of=x2] (y2)  {$p$};
			\node[state, fill=turquoise,below of=y2] (z2)  {$p$};
			
			\path[->,dashed] (y1) edge[] node[] {} (x1);
			\path[->,dashed] (z1) edge[] node[] {} (y1);
			\path[->,dashed] (e1) edge[] node[] {} (z1);
			\path[->,dashed] (y2) edge[] node[] {} (x2);
			\path[->,dashed] (z2) edge[] node[] {} (y2);
		\end{scope}

		\begin{scope}[node distance=1.5cm,local bounding box=scope1]
			\node[state,fill=tan,left of=x1,node distance=4cm] (pq)  {$p,q$};
			\node[state,fill=turquoise,left of=y1, node distance=4cm] (p)  {$p$};
			\node[state,fill=turquoise,left of=z1,node distance=3cm] (p1)  {$p$};
			\node[state,fill=turquoise,left of=z1, node distance=5cm] (p2)  {$p$};
			\node[state,left of =e1,fill=lightgray, node distance=5cm] (e)  {};	
			\path[->,dashed] (p) edge[] node[] {} (pq);		
			\path[->,dashed] (p1) edge[] node[] {} (p);		
			\path[->,dashed] (p2) edge[] node[] {} (p);		
			\path[->,dashed] (e) edge[] node[] {} (p2);				
		\end{scope}
			 			
		\begin{scope}[node distance=1.5cm,shift={($(x2.east)+(4cm,0)$)},local bounding box=scope3 ]
			\node[state,fill=tan, right of=x2, node distance=4cm] (xx1)  {$p,q$};
			\node[state,fill=turquoise,right of =y2, node distance=4cm] (yy1)  {$p$};
			\node[state,below of=yy1, fill=lightgray,node distance=1.5cm] (ee1)  {};
			\node[state,fill=tan, right of=xx1] (xx2)  {$p,q$};
			\node[state,fill=turquoise, below of=xx2] (yy2)  {$p$};

			\path[->,dashed] (yy1) edge[] node[] {} (xx1);
			\path[->,dashed] (ee1) edge[] node[] {} (yy1);
			\path[->,dashed] (yy2) edge[] node[] {} (xx2);
		\end{scope}	 	 	
				
		\node[left of=y1,node distance=2cm] (transition)  {\scalebox{1.9}{$\Longrightarrow$}};
		\node[right of=y2,node distance=2cm] (transition2)  {\scalebox{1.9}{$\Longrightarrow$}};
	\end{tikzpicture}
}	
	\caption{Transformations to obtain a bisimilar shallow model from an \term{upward linear} model. The intuitionistic relation $\peq$ is the transitive, reflexive closure of the relation indicated by dashed arrows.}\label{figShallowgs4}
\end{figure}

%\subsection{The finite frame property for \texorpdfstring{\altterm{def:gsf}{$\gsf$}}{GS4}}\label{secGS4fin}
\medskip

Before diving into formal details, \Cref{figBisQuo} provides some additional intuition as to why birelational semantics are helpful for obtaining a finite frame property.
In the model on the left of \Cref{figBisQuo}, the modal accessibility relation is the reflexive closure of a bijection, and this would allow us to view this as a \altterm{gk_model}{G\"odel--Kripke model} by assigning truth values in $[0,1]$ to each variable at each linear submodel.
For example, we can let $w$ represent the linear order on the left and $v$ the linear order on the right.
Then, we can set  $V_w(p) :=\nicefrac 23$ and $V_v(q) :=\nicefrac 13$ to obtain a \altterm{gk_model}{G\"odel--Kripke model} validating the same formulas.
This is no longer the case for the right--hand model, where points are not `horizontally aligned'.
The advantage of giving up on concrete truth values is the added flexibility on the semantics needed for our proof techniques.
We remark that it is possible to perform an `unwinding' operation to convert a birelational model into a (typically infinite) \altterm{gk_model}{G\"odel--Kripke model}~\cite{AguileraDFM25}.
However, in our setting we know the two semantics validate the same formulas simply because they share an axiomatisation, so such an unwinding is not needed.
 \smallskip

\noindent\emph{Downward-linear \altterm{def:gsf}{$\gsf$} frames.} First we argue that without loss of generality, models based on \altterm{def:gsf}{$\gsf$} frames may be assumed downward linear. Let $\mathcal M = (W,{\peq},{\rel},V)$ be an arbitrary model based on a \altterm{def:gsf}{$\gsf$} frame. We construct a  model $\mathcal M' = (W',{\peq'},{\rel'},V')$ based on a downward-linear \altterm{def:gsf}{$\gsf$} frame and falsifying exactly the same $\lanfull$-formulas as $\mathcal M$.

The idea is to include copies of each of the (necessarily linear) principal upsets $v^\uparrow$, indexed by the generating world $v$. We define 
\begin{align*}
W' &= \{(v, w) \in W \times W \mid v \peq w\},\\
(v_1, w_1) \peq' (v_2, w_2) &\iff v_1 = v_2\text{ and }w_1 \peq w_2,\\
(v_1, w_1) \rel' (v_2, w_2) &\iff w_1 \rel w_2,\\
V'(v, w) &= V(w).
\end{align*}
As $\peq$ is an \altterm{upward linear}{upward-linear} partial order, clearly $\peq'$ is a partial order that is both upward and downward linear. As $\rel$ is a preorder, clearly $\rel'$ is a preorder. For forth--up confluence, suppose $(u_1, v_1) \rel' (u_2, v_2)$ and that $(u_1, v_1) \peq' (u_1, w_1)$. Then $v_1 \rel v_2$ and $v_1 \peq w_1$, so by forth--up confluence of $\rel$ there exists $w_2$ with $w_1 \rel w_2$ and $v_2 \peq w_2$. By transitivity of $\peq$, we have $u_2 \peq w_2$, so that $(u_2, w_2) \in W'$. Then $(u_1, w_1) \rel' (u_2, w_2)$ and $(u_2, v_2) \peq' (u_2, w_2)$, as required. Back--up confluence is entirely analogous. Thus $(W',{\peq'},{\rel'})$ is a \altterm{def:gsf}{$\gsf$} frame. 

It is straightforward to check by structural induction on formulas that for any $\lanfull$-formula $\varphi$ and any $v, w \in W$, we have \[(\mathcal M', (v, w)) \models  \varphi \iff (\mathcal M, w) \models  \varphi.\]
We verify the left-to-right implication for the case that $\varphi$ is of the form $\nec \psi$, since this will be instructive when we switch our focus to \altterm{def:gsfc}{$\gsfc$}. So suppose $(\mathcal M', (v, w)) \models  \nec\psi$ and, let $w \peq u$ and $u \rel z$. Then $(v, w) \peq' (v,u)$ and $(v, u) \rel' (z,z)$, so $(\mathcal M', (z, z)) \models  \psi$. Thus by the inductive hypothesis $(\mathcal M, z) \models  \psi$. Since $u$ and $z$ were arbitrary, subject to $w \peq u$ and $u \rel z$, we have $(\mathcal M, w) \models  \nec\psi$, as required.

Thus $\mathcal M'$ falsifies precisely the $\lanfull$-formulas falsified by $\mathcal M$. In other words, in proving the finite frame property for \altterm{def:gsf}{$\gsf$} we may work exclusively with models based on downward-linear  \altterm{def:gsf}{$\gsf$} frames.
\smallskip

\noindent\emph{Finite quotients.} Now let $\Sigma$ be a subformula-closed subset of $\lanfull$. We will describe a quotient of any model based on a downward-linear \altterm{def:gsf}{$\gsf$} frame that falsifies the same formulas from $\Sigma$ as the original model and is finite whenever $\Sigma$ is finite. 

Let $\mathcal {M} = (W,{\peq},{\rel}, V)$ be a model based on a downward-linear \altterm{def:gsf}{$\gsf$} frame. For $w\in W$.
Recall that we defined $\ell (w)$ by
\[\ell (w)=\{\varphi\in \Sigma \mid (\mathcal M,w) \models \varphi \}\text{,}\] and define \[L (w) = \{ \ell(v) \mid w \peq v\text{ or }v \peq w\}.\] We define an equivalence relation $\approx$ on $W$ by \[w \approx v \iff (\ell (w), L (w)) = (\ell (v), L (v)).\]
If $\Sigma$ is finite, then clearly $W / {\approx}$ is finite.

It is straightforward to show that $\approx$ is a \term{strong} $\Sigma$\term{-bisimulation}. Conversely, it is clear that if $w$ and $v$ are strongly $\Sigma$-bisimilar, then $\ell (w) = \ell (v)$ and $L (w) = L (v)$, and hence $w \approx v$. Thus $\approx$ is the greatest \term{strong} $\Sigma$\term{-bisimulation} $\approx_\Sigma$.

\ignore{

Now define a partial order $\leq_\mathcal Q$ on the equivalence classes $W /{\approx}$ of $\approx$ by
\[[w] \leq_\mathcal Q [v] \iff L(w) = L(v)\text{ and }\ell(w) \supseteq \ell(v),\]
noting that this is well-defined and is indeed a partial order.

By upward and downward linearity of $\mathcal M$, each set $L(w)$ can be linearly ordered by inclusion. It follows that the poset $(W  / {\approx}, {\leq_\mathcal Q})$ is a disjoint union of linear posets. %By defining $\ell_\mathcal Q$ by $\ell_\mathcal Q([x]) = \ell(x)$ we obtain a $\Sigma$-labelled space $(W / {\approx}, {\leq_\mathcal Q}, {\ell_\mathcal Q})$; it is not hard to check that this labelling is inversely monotone and that the clause for $\imp$ holds with this labelling.

We now define the binary relation $R_\mathcal Q$ to be the  relation induced on $W / {\approx}$ by $\rel$. That is, $R_\mathcal Q$ is the smallest relation such that $w \rel v \implies [w] \mathrel R_\mathcal Q [v]$, \comment{M: shouldn't it be bidirectional?} for all $w, v \in W$. We define $R^+_\mathcal Q$ to be the {transitive} closure of $R_\mathcal Q$. 

\begin{lemma}\label{lemma:is_frame}
The relation $R^+_\mathcal Q$ is a preorder and both \altterm{forth--up confluent}{forth--up} and \term{back--up confluent} for $(W / {\approx}, {\leq_\mathcal Q})$.
\end{lemma}

\begin{proof}
The relation $R_\mathcal Q$ is {reflexive} because $\rel$ is, and thus $R^+_\mathcal Q$ is also {reflexive};  $R^+_\mathcal Q$ is {transitive} by definition.

For forth--up confluence, we first prove that $R_\mathcal Q$ is \term{forth--up confluent} (for $\leq_\mathcal Q$). 
%it is sufficient to prove by induction on $n$ that whenever $x_0 \mathrel R_\mathcal Q x_1 \mathrel R_\mathcal Q \dots \mathrel R_\mathcal Q x_n = x$ and $x_0 \leq_\mathcal Q y_0$, there exists $y \in W / {\approx}$ with $y_0 \mathrel R^+_\mathcal Q y$ and $x \leq_\mathcal Q y$. For $n = 1$, 
 Suppose $w_0 \rel w_1$ and $[w_0] \leq_\mathcal Q [v_0]$. Then as $l(v_0) \in L(v_0) = L(w_0)$ there is some $u_0 \succeq w_0$ with $[u_0] = [v_0]$. Then by forth--up confluence of $\rel$  for $\peq$, we have $u_0 \rel u_1$ for some $u_1 \succeq w_1$. It follows that $[u_0] \mathrel R_\mathcal Q [u_1]$ and $[w_1] \leq_\mathcal Q [u_1]$, and since $[u_0] = [v_0]$ we have proven our claim---$R_\mathcal Q$ is forth--confluent for $\leq_\mathcal Q$. Now, by an easy inductive argument, whenever a relation $R$ is forth--confluent for a relation $\leq$, the {transitive} closure of $R$ is also forth--confluent for $\leq$. We conclude that $R^+_\mathcal Q$ is forth--confluent for $\leq_\mathcal Q$.
%For the inductive step, suppose  $x_0 \mathrel R_\mathcal Q x_1 \mathrel R_\mathcal Q \dots \mathrel R_\mathcal Q x_{n+1} = x$ and $x_0 \leq_\mathcal Q y_0$, and assume the inductive hypothesis that there exists $y \in W / {\approx}$ with $y_0\mathrel R^+_\mathcal Q y$ and $x_n \leq_\mathcal Q y_n$. Just like for the base case, we can find $y$ with $y_n \mathrel R_\mathcal Q y$ and $x_{n+1} \leq_\mathcal Q y$. By transitivity of $R^+_\mathcal Q$, we also have $y_0 \mathrel R_\mathcal Q y$, and we are done. This completes the proof of forth--up confluence. The proof of back--up confluence entirely analogous.
%It is clear that $R_\mathcal Q$ is sensible. To see that $R_\mathcal Q$ is $\omega$-sensible, first note that $R_\mathcal Q$ is clearly serial. Next, suppose that $\ps \varphi \in \ell_\mathcal Q([(w, t)])$. Then by our definitions, $(w, t) \in \val {\ps\varphi}$. Thus $(w, S^n(t)) \in \val \varphi$ for some $n > 0$. It follows that $[(w, t)] \mathrel R_\mathcal Q [(w, S(t))] R_\mathcal Q \dots R_\mathcal Q  [(w, S^n(t))]$. Similar reasoning applies when we suppose that $\nec \varphi \not\in \ell_\mathcal Q([(w, t)])$, completing the proof.
\end{proof}

By \Cref{lemma:is_frame}, the structure $(W / {\approx}, {\leq_\mathcal Q}, R_\mathcal Q)$ is a \altterm{def:gsf}{$\gsf$} frame. We obtain a model based on a \altterm{def:gsf}{$\gsf$} frame by defining $[w] \in V_\mathcal{Q}(p) \iff p \in \ell_\mathcal{M} (w)$, noting that this is well defined. For $\mathcal{Q} \coloneqq (W / {\approx}, {\leq_\mathcal Q}, R_\mathcal Q, V_\mathcal{Q})$ to serve its purpose, we need to have a truth lemma for $\mathcal{Q}$ relative to $\mathcal{M}$.

\begin{lemma}[truth lemma]
For every $\varphi \in \Sigma$ and $x \in W$, we have \[(\mathcal M, w) \models \varphi \iff (\mathcal Q, [w]) \models \varphi.\]
\end{lemma} 
\begin{proof}
	The proof is by structural induction on $\varphi$. The case where $\varphi$ is a propositional variable holds by the definitions of $V_\mathcal Q$ and $\ell_\mathcal M$, and the cases where the principal connective is $\bot$, $\wedge$, or $\vee$ are immediate. 
	
	Suppose $\varphi \in \Sigma$ is of the form $\psi \imp \chi$. If $(\mathcal M, w) \models \psi \imp \chi$, then as every $y \geq_\mathcal Q [w]$ is of the form $[v]$ for some $v \succcurlyeq w$, we have $(\mathcal Q, [w]) \models \psi \imp \chi$. Conversely, if $(\mathcal Q, [w]) \models \psi \imp \chi$, then for every $v \succcurlyeq w$ we know $[v] \geq_\mathcal Q [w]$ and thus $(\mathcal Q, [w]) \models \psi \imp \chi$.
	
	If $\varphi \in \Sigma$ is of the form $\ps \psi$ then as both $\mathcal M$ and $\mathcal Q$ are \term{forth--up confluent}, we can use the characterisation of \Cref{lemClassical}\eqref{classical:one}. Suppose first that $(\mathcal M, w) \models \ps \psi$, so there exists $v \ler w$ such that $(\mathcal M, v) \models \psi$. Then $[w] \mathrel R_\mathcal Q [v]$, and so $[w] \mathrel R^+_\mathcal Q [v]$. By the inductive hypothesis, we conclude $(\mathcal Q, [w]) \models \ps\psi$. Conversely, suppose $(\mathcal Q, [w]) \models \ps\psi$.
	\color{blue} \comment{M: Please check} 
	
	This means that there exists $[w]$ such that $[w] \mathrel R^+_\mathcal Q [v]$ and  $(\mathcal Q, [v]) \models \psi$.
	By induction, $(\mathcal M, v) \models \psi$. 
	Since $[w] \mathrel R^+_\mathcal Q [v]$, there exist $[x_0], \cdots, [x_{n}], [x_{n+1}]$ such that 
	$[w] = [x_0] R [x_1] \cdots [x_n]R [x_{n+1}]=[v]$. 
	By backward induction it can be proved that for all $0\le i \le n$, $v \ler x_i$,
	which implies that $v \ler w$ so $(\mathcal M, w) \models \ps \psi$.
	
	If $\varphi \in \Sigma$ is of the form $\nec \psi$, we proceed as follows: from right to left, assume towards a contradiction that  $(\mathcal M, w) \not \models \nec \psi$.
	So there exists $v \ler z \succcurlyeq w$ such that $(\mathcal M, v) \not \models \psi$.
	By induction, $(\mathcal Q, [w]) \not \models \psi$.
	From $z \succcurlyeq w$ it follows that $[z] \geq_\mathcal Q [w]$.
	From $v \ler z$ we get $[z] R [v]$ so $[z] R^+ [v]$.
	Therefore, $(\mathcal Q, [w]) \not \models \nec\psi$: a contradiction.
	
	Conversely, assume towards a contradiction that $(\mathcal Q, [w]) \not \models \nec\psi$.
	Therefore, there exist $[z]$ and $[v]$ such that $[z] \geq_\mathcal Q [w]$, $[z] R^+ [v]$ and $(\mathcal Q, [w]) \not \models \psi$.
	From $[z] \geq_\mathcal Q [w]$ it follows that there exists $z \ler w$ such that $[z'] = [z]$.
	From $[z] \mathrel R^+_\mathcal Q [v]$ and $[z'] = [z]$, there exist $[x_0], \cdots, [x_{n}], [x_{n+1}]$ such that 
	$[z'] = [x_0] R [x_1] \cdots [x_n]R [x_{n+1}]=[v]$.
	By induction, it can be proved that for all $0 \le i \le n$, $v \ler x_i$.
	When $i = 0$ we get $v \ler z'$.
	Therefore, $(\mathcal M, w) \not \models \nec \psi$: a contradiction.	
\end{proof}
\color{black}

\begin{corollary}\label{falsifies}
Let $\varphi \in \Sigma$. Then $\mathcal M$ falsifies $\varphi$ if and only if $\mathcal Q$ falsifies $\varphi$.
\end{corollary}

%\begin{proof}
%We have: $\mathcal X$ falsifies $\varphi$ if and only if $\exists (w, t) \in W \setminus \lb\varphi\rb$ if and only if $\exists [(w, t)] \in W / {\approx}$ with $\varphi \in \Sigma \setminus \ell_\mathcal Q([(w, t)])$ if and only if $\mathcal Q$ falsifies $\varphi$.
%\end{proof}

}

By \Cref{lemQuotPreserv}, the quotient $\mathcal M /{\approx}$ has all the properties necessary to be a \altterm{def:gsf}{$\gsf$} frame. It is clear that if $\Sigma$ is finite then $\mathcal M /{\approx}$ is finite.

\begin{theorem}[{\altterm{def:gsf}{$\gsf$}} finite frame property]\label{theorem:g4s_finite_model}
The logic \altterm{def:gsf}{$\gsf$} has the finite \term{birelational frame} property. That is, if a formula $\varphi$ is \term{falsifiable} (respectively~\term{satisfiable}) on a birelational \altterm{def:gsf}{$\gsf$} frame, then $\varphi$ is \term{falsifiable} (respectively \term{satisfiable}) on a finite birelational \altterm{def:gsf}{$\gsf$} frame.
\end{theorem}

 In order to use \Cref{theorem:g4s_finite_model} to prove decidability, we need to compute a bound on the size of a model $\mathcal M /{\approx}$ in terms of the size of $\Sigma$, when $\Sigma$ is finite. Note that $\nicefrac \peq \approx$ is given by the partial order
\[[w] \mathrel{\nicefrac \peq \approx} [v] \iff L(w) = L(v)\text{ and }\ell(w) \supseteq \ell(v).\]

\begin{lemma}\label{quasi:bound}
Suppose $\Sigma$ is finite. %, and write $\lgt\Sigma$ for its cardinality.
 Then, the cardinality of the domain of $\mathcal M /{\approx}$ is bounded by $(\lgt\Sigma +1)\cdot 2^{\lgt \Sigma (\lgt \Sigma +1)+1}$.
\end{lemma}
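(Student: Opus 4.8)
The plan is to reduce the whole statement to elementary counting of subsets of $\Sigma$, once we record one structural fact: for every $w\in W$ the set $L(w)$ is a $\subseteq$-chain of subsets of $\Sigma$, and hence has at most $\lgt\Sigma+1$ elements. To see this, note that since $\mathcal M$ is both upward and downward linear, the set $\{v\in W\mid w\peq v\text{ or }v\peq w\}$ is totally ordered by $\peq$: its part above $w$ is linearly ordered by upward linearity, its part below $w$ by downward linearity, and every element below $w$ lies $\peq$-below every element above $w$ by transitivity through $w$. Since $w\peq v$ implies $\ell(w)\subseteq\ell(v)$ (monotonicity of the satisfaction relation along $\peq$), the set $L(w)$ is the image of this totally ordered set under a $\subseteq$-monotone map, hence a $\subseteq$-chain; and a chain of subsets of an $\lgt\Sigma$-element set has at most $\lgt\Sigma+1$ members.

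For the depth bound I would take an arbitrary chain $[w_0]\prec[w_1]\prec\dots\prec[w_k]$ in $\mathcal M/{\approx}$. By the displayed description of $\nicefrac{\peq}{\approx}$ — which I would verify en route from \Cref{lemQuotLeq} together with up/down linearity — consecutive, hence all, classes in the chain share a common value of $L$; since the $[w_i]$ are pairwise distinct while their $L$-components agree, their $\ell$-components are pairwise distinct, and being $\subseteq$-comparable they form a strictly $\subseteq$-monotone chain $\ell(w_0),\dots,\ell(w_k)$ of subsets of $\Sigma$. Such a chain has at most $\lgt\Sigma+1$ members, so $k+1\le\lgt\Sigma+1$, giving the depth bound.

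For the cardinality bound, each $\approx$-class $[w]$ is uniquely determined by the pair $(\ell(w),L(w))$, so it suffices to bound the number of such pairs. By the structural fact, $L(w)$ is a non-empty $\subseteq$-chain of size at most $\lgt\Sigma+1$; writing its members in increasing order and padding the resulting sequence on the right with copies of its largest member gives an injection from the possible values of $L(w)$ into the length-$(\lgt\Sigma+1)$ sequences over $2^\Sigma$, of which there are $(2^{\lgt\Sigma})^{\lgt\Sigma+1}=2^{\lgt\Sigma(\lgt\Sigma+1)}$. Given $L(w)$, we have $\ell(w)\in L(w)$ (take $v=w$ in the definition of $L(w)$) and $\lgt{L(w)}\le\lgt\Sigma+1$, so at most $\lgt\Sigma+1$ choices remain for $\ell(w)$. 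Multiplying, the domain of $\mathcal M/{\approx}$ has at most $(\lgt\Sigma+1)\cdot2^{\lgt\Sigma(\lgt\Sigma+1)}\le(\lgt\Sigma+1)\cdot2^{\lgt\Sigma(\lgt\Sigma+1)+1}$ elements. The only step that needs genuine thought is the structural fact that the up-set and down-set of a world are \emph{jointly} totally ordered — precisely where downward (and not merely upward) linearity is used; the remainder is routine counting, and the argument even leaves a factor of two of slack against the stated bound.
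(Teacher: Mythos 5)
Your proof is correct and takes essentially the same route as the paper's: it identifies each $\approx$-class with the pair $(\ell,L)$, uses upward and downward linearity to see that $L$ is a nonempty $\subseteq$-chain of at most $\lgt\Sigma+1$ subsets of $\Sigma$ (whence the depth bound, via the characterisation of $\nicefrac{\peq}{\approx}$ on the quotient), and then multiplies the number of possible chains by the at most $\lgt\Sigma+1$ choices of $\ell\in L$. The only (harmless) difference is bookkeeping: you count chains by padding them to length $\lgt\Sigma+1$, getting $2^{\lgt\Sigma(\lgt\Sigma+1)}$, whereas the paper sums $(2^{\lgt\Sigma})^i$ over $1\le i\le\lgt\Sigma+1$ and bounds the sum by $2^{\lgt\Sigma(\lgt\Sigma+1)+1}$; both fall within the stated bound.
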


\begin{proof}
Each element of the domain of $\mathcal M /{\approx}$ may be identified with a pair $(\ell, L)$ where $L$ is a (nonempty) subset of $\raisebox{2pt}{$\wp$} \Sigma$ and $\ell \in L$.
There are at most $(2^{\lgt \Sigma})^i$ subsets of $\raisebox{2pt}{$\wp$} \Sigma$ of size $i$, so there are at most $\sum_{i = 1}^{\lgt \Sigma +1}(2^{\lgt \Sigma})^i$ distinct $L$. The sum is bounded by $2^{\lgt \Sigma (\lgt \Sigma +1)+1}$. The factor of $\lgt\Sigma +1$ corresponds to the choice of an $\ell \in L$, for each $L$.
\end{proof}

Thus we have an exponential bound on the size of $\mathcal M /{\approx}$. Hence any \term{falsifiable} formula is \term{falsifiable} on an effectively bounded model, and it follows that \altterm{def:gsf}{$\gsf$} is decidable.

\begin{theorem}[{\altterm{def:gsf}{$\gsf$}} complexity]\label{theorem:decide}
The logic \altterm{def:gsf}{$\gsf$} is decidable in \textsc{nexptime}.
\end{theorem}

\begin{proof}
 Since \altterm{falsifiable}{falsifiability} is the complement of \altterm{valid}{validity}, it suffices to show that it is decidable whether a formula $\varphi$ is \term{falsifiable} over the class of all \altterm{def:gsf}{$\gsf$} frames. Let $\Sigma$ be the set of subformulas of $\varphi$. The cardinality of $\Sigma$ is no greater than the length of $\varphi$. If $\varphi$ is \term{falsifiable} in a \altterm{def:gsf}{$\gsf$} frame of size at most $(\lgt\Sigma +1)\cdot 2^{\lgt \Sigma (\lgt \Sigma +1)+1}$, then in particular $\varphi$ is \term{falsifiable} in a \altterm{def:gsf}{$\gsf$} frame. Conversely, if $\varphi$ is falsified in a \altterm{def:gsf}{$\gsf$} frame, then by \Cref{quasi:bound}, $\varphi$ is falsified in a \altterm{def:gsf}{$\gsf$} frame of size at most $(\lgt\Sigma +1)\cdot 2^{\lgt \Sigma (\lgt \Sigma +1)+1}$.
Hence, it suffices to check falsifiability of $\varphi$ on the set of all \altterm{def:gsf}{$\gsf$} frames (with \altterm{valuation}{valuations} only for the variables appearing in $\varphi$) of size at most $(\lgt\Sigma +1)\cdot 2^{\lgt \Sigma (\lgt \Sigma +1)+1}$. It is clear that this check can be carried out within a computable time bound; hence the problem is decidable.

The \textsc{nexptime} bound is then obtained as in the proof of Theorem~\ref{thmCS4dec}. 
\end{proof}

%\subsection{The finite frame property for \texorpdfstring{\altterm{def:gsfc}{$\gsfc$}}{GS4c}}\label{sec:fmp:gds}

\subsection{The finite frame property for \texorpdfstring{\altterm{def:gsfc}{$\gsfc$}}{GS4c}}\label{sec:godelcfinite} We now proceed to proving the finite \term{birelational frame} property for \altterm{def:gsfc}{$\gsfc$}. The strategy is the same as for \altterm{def:gsf}{$\gsf$}. However, the result does not follow immediately from our \altterm{def:gsf}{$\gsf$} constructions, because our construction of downward-linear \altterm{def:gsf}{$\gsf$} frames does not preserve \altterm{forth--down confluent}{forth--down confluence} and hence does not specialise to \altterm{def:gsfc}{$\gsfc$} frames. We thus use a variant of our previous bisimulation quotient construction. Before we define this, we need to show that without loss of generality, \altterm{def:gsfc}{$\gsfc$} frames have an additional property.

\begin{definition}
Let $(W, \peq)$ be a preordered set. We say a binary relation $R$ on $W$ is \define{pointwise convex}, if it validates the implication \[R(u,v_1), R(u,v_2),  v_1 \peq w \peq v_2 \implies R(u,w).\]
\end{definition}

\begin{lemma}\label{lemma:convex}
Let $(W, \peq)$ be a preordered set and $R$ a binary relation on $W$. Then, the relation $\overline R$ defined by 
\[\overline R (u,w) \iff \exists v_1, v_2 :   R(u,v_1), R(u,v_2), v_1 \peq w \peq v_2\]
is \term{pointwise convex}. We call this the \define{convex closure} of $R$. 
%Forming {transitive} closures preserves any confluence conditions.
If $R$ is \altterm{forth--up confluent}{forth--up} and \term{forth--down confluent} and {transitive}, then so is $\overline R$. %\david{I commented the previous line and added the confluence claims here, I think this is what was meant?}
\end{lemma}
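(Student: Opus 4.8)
The plan is to check the three assertions in turn. Reading the defining condition of $\overline R$ in the intended way (that is, $u \mathrel{\overline R} w$ iff there are $v_1, v_2$ with $v_1 \peq w \peq v_2$, $R(u,v_1)$, and $R(u,v_2)$), I first note that $R \subseteq \overline R$ (take $v_1 = v_2 = w$ and use reflexivity of $\peq$). For \emph{pointwise convexity} of $\overline R$: suppose $u \mathrel{\overline R} w_1$, $u \mathrel{\overline R} w_2$, and $w_1 \peq w \peq w_2$. From $u \mathrel{\overline R} w_1$ pick a witness $a$ with $a \peq w_1$ and $R(u,a)$, and from $u \mathrel{\overline R} w_2$ pick a witness $b$ with $w_2 \peq b$ and $R(u,b)$. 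Transitivity of $\peq$ gives $a \peq w \peq b$, so $a$ and $b$ witness $u \mathrel{\overline R} w$. This step is immediate.

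The assertion that transitive closures preserve the confluence conditions is exactly Lemma~\ref{lemmClosure}(3), so nothing new is required. The substantive part is \emph{transitivity} of $\overline R$ under the hypotheses that $R$ is forth--up confluent, forth--down confluent, and transitive. Suppose $u \mathrel{\overline R} v \mathrel{\overline R} w$ and fix witnesses $a_1 \peq v \peq a_2$ with $R(u,a_1), R(u,a_2)$ and $b_1 \peq w \peq b_2$ with $R(v,b_1), R(v,b_2)$. I would produce an upper and a lower $\overline R$-witness for the pair $u, w$ separately. Upper: forth--up confluence applied to $v \peq a_2$ and $v \mathrel R b_2$ gives $c_2$ with $b_2 \peq c_2$ and $a_2 \mathrel R c_2$; then $R(u,c_2)$ follows by transitivity of $R$ from $R(u,a_2)$ and $R(a_2,c_2)$, while $w \peq b_2 \peq c_2$. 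Lower: forth--down confluence applied to $a_1 \peq v$ and $v \mathrel R b_1$ gives $c_1$ with $a_1 \mathrel R c_1 \peq b_1$; then $R(u,c_1)$ by transitivity of $R$, while $c_1 \peq b_1 \peq w$. Hence $c_1 \peq w \peq c_2$ with $R(u,c_1)$ and $R(u,c_2)$, which is exactly $u \mathrel{\overline R} w$.

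The only delicate point is this last argument, and the crux is orienting the two confluence diagrams correctly: forth--up confluence is what lets one push the \emph{upper} witness $a_2$ of $v$ forward along $R$ so that it becomes reachable from $u$ (by composing with $R(u,a_2)$ and using transitivity of $R$), while forth--down confluence does the dual job for the \emph{lower} witness $a_1$. One must also remember that membership in $\overline R$ requires two separate witnesses rather than one, which is precisely why both confluence conditions must be invoked. Beyond this bookkeeping I anticipate no obstacle.
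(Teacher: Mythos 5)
Your proof is correct and takes essentially the same route as the paper's: the paper likewise treats pointwise convexity and the closure claims as immediate and proves transitivity of $\overline R$ by producing the lower witness via forth--down confluence together with transitivity of $R$, and the upper witness by the symmetric forth--up argument. Your reading of the definition of $\overline R$ (witnesses bracketing $w$, not $u$) is indeed the intended one, as the paper's own proof confirms.
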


\begin{proof}
The claims are straightforward to prove; we give the details for the last claim (transitivity). So suppose $R$ is \altterm{forth--up confluent}{forth--up} and \term{forth--down confluent} and {transitive} and that $\overline R(u,v)$ and $\overline R(v,w)$. Then there is $v' \peq v$ with $R(u,v')$ and $w' \peq w$ with $R(v, w')$. By \altterm{forth--down confluent}{forth--down confluence} of $R$ there is $w'' \peq w'$ with $R(v',w'')$. Then by transitivity of $\peq$ and $R$ respectively, we have $w'' \peq w$ and $R(u,w'')$. By a symmetric argument using forth--up confluence, there also exists $z \seq w$ with $R(u,z)$. So by definition $\overline R(u,w)$.
\end{proof}

By \Cref{lemma:convex}, if $(W,{\peq},{\rel})$ is a \altterm{def:gsfc}{$\gsfc$} frame, then so is $(W,{\peq},{\overline\rel})$. Given a \term{valuation} $V$, it is straightforward to show, by structural induction on formulas and using monotonicity of $V$, that the satisfaction relation $\models$ on $(W,{\peq},{\overline\rel},V)$ is identical to that on $(W,{\peq},{\rel},V)$. Hence %B: removed\david{Is this linebreak intentional? Looks odd to me.  Maybe put it in italics if you want to emphasize?} 
%\begin{displayquote}
we may assume, without loss of generality, that \altterm{def:gsfc}{$\gsfc$} frames are \term{pointwise convex}.%\end{displayquote}

\noindent\emph{Downward-linear \altterm{def:gsfc}{$\gsfc$} frames.} Let $\mathcal M = (W,{\peq},{\rel},V)$ be an arbitrary model based on a \term{pointwise convex} \altterm{def:gsfc}{$\gsfc$} frame. We define a model $\mathcal M'' = (W'',{\peq''},{\rel''},\allowbreak V'')$ based on a downward-linear \altterm{def:gsfc}{$\gsfc$} frame and falsifying exactly the same $\lanfull$-formulas as $\mathcal M$ as follows. 
(The only difference with $\mathcal M'$ from the previous subsection is in the definition of $\rel''$.)
\begin{align*}
W'' &= \{(v, w) \in W \times W \mid v \peq w\},\\
(v_1, w_1) \peq'' (v_2, w_2) &\iff v_1 = v_2\text{ and }w_1 \peq w_2,\\
(v_1, w_1) \rel'' (v_2, w_2) &\iff v_1 \rel v_2\text{ and } w_1 \rel w_2,\\
V''(v, w) &= V(w).
\end{align*}
It is clear that $\mathcal M''$ is an upward- and downward-linear \term{bi-preorder}. We now check the confluence conditions. For forth--up confluence, suppose $(u_1, v_1) \rel'' (u_2, v_2)$ and that $(u_1, v_1) \peq'' (u_1, w_1)$. Then $v_1 \rel v_2$ and $v_1 \peq w_1$, so by forth--up confluence of $\rel$ there exists $w_2$ with $w_1 \rel w_2$ and $v_2 \peq w_2$. By transitivity of $\peq$, we have $u_2 \peq w_2$, so that $(u_2, w_2) \in W''$. Then $(u_1, w_1) \rel'' (u_2, w_2)$ and $(u_2, v_2) \peq'' (u_2, w_2)$, as required. Back--up confluence is entirely analogous. For \altterm{forth--down confluent}{forth--down confluence}, suppose $(u_1, w_1) \rel'' (u_2, w_2)$ and that $(u_1, v_1) \peq'' (u_1, w_1)$. We have $v_1 \seq u_1 \rel u_2$, so by forth--up confluence, there exists $v_2$ with $v_1 \rel v_2 \seq u_2$. By \altterm{upward linear}{upward linearity}, either $v_2 \peq w_2$ or $w_2 \peq v_2$. In the first case $(u_2,v_2) \peq'' (u_2, w_2)$ and we are done. In the second case, apply \altterm{forth--down confluent}{forth--down confluence} to $v_1 \peq w_1 \rel w_2$ to obtain $z$ with $v_1 \rel z \peq w_2$. Then since $v_1 \rel z$ and $v_1 \rel v_2$, and also $z \peq w_2 \peq v_2$, we have, by \term{pointwise convex}ity, that $v_1 \rel w_2$. Then we have $(u_1,v_1) \rel'' (u_2,w_2) \peq'' (u_2,w_2)$, witnessing the \altterm{forth--down confluent}{forth--down confluence} of $\rel''$. We conclude that $(W'',{\peq''},{\rel''})$ is a downward-linear \altterm{def:gsfc}{$\gsfc$} frame. 

It is straightforward to check by structural induction on formulas that for any $\lanfull$-formula $\varphi$ and any $v, w \in W$, we have \[(\mathcal M'', (v, w)) \models  \varphi \iff (\mathcal M, w) \models  \varphi.\]
We verify the left-to-right implication for the case that $\varphi$ is of the form $\nec \psi$. So suppose $(\mathcal M'', (v, w)) \models  \nec\psi$ and, let $w \peq u$ and $u \rel z$. Then $(v, w) \peq'' (v,u)$ and $(v, u) \rel'' (z,z)$, so $(\mathcal M', (z, z)) \models  \psi$. Then $v \peq u \rel z$, so by \altterm{forth--down confluent}{forth--down confluence} there exists $y$ with $v \rel y \peq z$. Then $(v, w) \peq'' (v,u)$ and $(v, u) \rel'' (y,z)$, so $(\mathcal M'', (y, z)) \models  \psi$. Thus by the inductive hypothesis $(\mathcal M, z) \models  \psi$. Since $u$ and $z$ were arbitrary, subject to $w \peq u$ and $u \rel z$, we have $(\mathcal M, w) \models  \nec\psi$, as required.

Thus $\mathcal M''$ falsifies precisely the $\lanfull$-formulas falsified by $\mathcal M$. So in proving the finite frame property for \altterm{def:gsfc}{$\gsfc$} we may work exclusively with models based on downward-linear  \altterm{def:gsfc}{$\gsfc$} frames.

\ignore{
%\begin{lemma}[\cite{bdd21}]\label{lemQuotTruth}
%	If $\mathcal M = (W,{\peq},{\rel},V)$ is any \term{bi-preordered model} and $\Sigma$ is closed under subformulas, then for all $w \in W$ and $\varphi\in \Sigma$, we have $(\mathcal M,w)\models\varphi$ if and only if $(\nicefrac{\mathcal M}{\approx_\Sigma},[w])\models\varphi$.
%\end{lemma}
Given a model $\mathcal M = (W,{\peq},{\rel},V)$ based on a downward-linear birelational \altterm{def:gsfc}{$\gsfc$} frame  and finite $\Sigma$ with $\lgt\Sigma = s$, we have $ \lgt {\nicefrac W{\approx_\Sigma}} \leq  2^{ s} $.
Hence in order to prove the finite \term{birelational frame} property, it suffices to prove the downward-linear \term{birelational frame} property, that is~that any \term{falsifiable} formula is \term{falsifiable} on a downward-linear $\gsfc $ frame.

In this case, the worlds of our downward-linear model will be sets $(\Gamma,\Delta)$, where the intuition is that $\Gamma$ serves as an `anchor' to enforce \altterm{upward linear}{upward linearity} and $\Delta$ represents the formulas of $\Sigma$ true in the given world.

\begin{definition}
	Let $\mathcal M _{\mathrm c}  = (W_{\mathrm c} ,{\peq_{\mathrm c} },{\rel_{\mathrm c} },V_{\mathrm c} )$ be the canonical model for \altterm{def:gsfc}{$\gsfc$} and $\Sigma\subseteq \lanfull$ be closed under subformulas.
	We define $W_\Sigma$ to be the set of pairs $(\Gamma,\Delta)$ where $\Gamma \peq_{\mathrm c}  \Delta$.
	Set $(\Gamma_0,\Delta_0) \peq _\Sigma (\Gamma_1,\Delta_1)$ if and only if $\Gamma_0 = \Gamma_1$ and $\Delta_0  \peq_{\mathrm c}  \Delta_1  $; \comment{M: I added the definition of $\rel_\Sigma$} $(\Gamma_0,\Delta_0) \rel_\Sigma (\Gamma_1,\Delta_1)$ if and only if $\Gamma_0 \rel_c \Gamma_1$ and $\Delta_0  \rel_c  \Delta_1 $, and for a propositional variable $p
	$, set $(\Gamma,\Delta) \in V_\Sigma(p) $ if and only if $p\in \Delta $.
\end{definition}

{\color{blue}\comment{M: Please check!}
	\begin{proposition} $M_\Sigma=(W_\Sigma,\peq_\Sigma,\rel_\Sigma,V_\Sigma)$ is a \altterm{def:gsfc}{$\gsfc$} model.
	\end{proposition}
	\begin{proof} We prove the following properties:
		\begin{itemize}
			\item $\peq_\Sigma$ is a partial order relation: note that $\peq_\Sigma$ is {reflexive} and {transitive} since $\peq_c$.
			\item $\peq_\Sigma$ is \term{upward linear}: let us assume that $(\Gamma_1,\Delta_1) \peq_\Sigma (\Gamma_1,\Delta_2)$ and  $(\Gamma_1,\Delta_1) \peq_\Sigma (\Gamma_1,\Delta_3)$. 
			Since $\peq_c$ is \term{upward linear} we get that either $\Delta_2 \peq_c \Delta_3$ or $\Delta_3 \peq_c\Delta_2$. Therefore, either $(\Gamma_1,\Delta_2) \peq_\Sigma(\Gamma_1,\Delta_3) $ or $(\Gamma_1,\Delta_3) \peq_\Sigma (\Gamma_1,\Delta_2)$.
			\item $\rel_\Sigma$ is {reflexive} and {transitive} because $\rel_c$ is.
			\item $\rel_\Sigma$ and $\peq_\Sigma$ are \term{forth--up confluent}: assume that $(\Gamma_1, \Delta_1) \peq_\Sigma (\Gamma_1,\Delta_2)$ and $(\Gamma_1,\Delta_1) \rel_\Sigma (\Gamma_2,\Delta_3)$.
			By definition, $\Gamma_1 \rel_c \Gamma_2$, $\Delta_1 \rel_c \Delta_3$ and $\Gamma_1 \peq_c \Delta_1 \peq_c \Delta_2$.
			Since $\rel_c$ and $\peq_c$ are \term{forth--up confluent}, there exists $\Omega$ such that $\Delta_3 \peq_c \Omega$ and $\Delta_2 \rel_c \Omega$.
			It can be checked that $(\Gamma_2,\Omega) \in W_\Sigma$ and, moreover, $(\Gamma_1,\Delta_2) \rel_\Sigma (\Gamma_2,\Omega)$ and $(\Gamma_2,\Delta_2) \peq_\Sigma (\Gamma_2,\Omega)$.
			
			\item $\rel_\Sigma$ and $\peq_\Sigma$ are back-up confluent: let us consider that $(\Gamma_1,\Delta_1) \rel_\Sigma (\Gamma_2,\Delta_2) \peq_\Sigma (\Gamma_2,\Delta_3)$.

			By definition $\Gamma_1 \rel_c \Gamma_2$, $\Delta_1 \rel_c \Delta_2 \peq_c \Delta_3$. Since $\peq_c$ and $\rel_c$ are back-up confluent, there exists $\Omega$ such that $\Delta_1 \peq_c \Omega$ and $\Omega \rel_c \Delta_3$. Clearly, $(\Gamma_1,\Omega) \in W_\Sigma$ because $\Gamma_1 \peq \Delta_1 \peq_c \Omega$.
			Moreover, it can be checked that $(\Gamma_1,\Delta_1) \peq_\Sigma (\Gamma_1,\Omega)$ and $(\Gamma_1,\Omega) \rel_\Sigma (\Gamma_3,\Delta_3)$.
			\item $\rel_\Sigma$ and $\peq_\Sigma$ are \term{forth--down confluent}: let us consider that $(\Gamma_1,\Delta_1) \peq_\Sigma (\Gamma_1,\Delta_2) \rel_\Sigma (\Gamma_2,\Delta_3)$.
			By definition, $\Gamma_1 \peq_c \Delta_1\peq_c \Delta_2 \rel_c \Delta_3$ and $\Gamma_1 \rel_c \Gamma_2 \peq_c \Gamma_3$.
			Since $\peq_c$ and $\rel_c$ are forward confluent, there exists $\Omega$ such that $\Delta_1 \rel_c \Omega$ and $\Gamma_2 \peq_c\Omega$.
			Since $\Gamma_2 \peq_c \Gamma_3$, either $\Omega \peq_c \Gamma_3$ or $\Gamma_3 \peq_c \Omega$. 
			In the former case, we consider the pair $(\Gamma_2,\Omega) \in W_\Sigma$ which clearly satisfies $(\Gamma_1,\Delta_1) \rel_\Sigma (\Gamma_2,\Omega) \peq_\Sigma (\Gamma_2,\Delta_3)$.
			In the latter case we take the pair $(\Gamma_2,\Delta_3)$. Clearly $(\Gamma_2,\Delta_3) \peq_\Sigma (\Gamma_2,\Delta_3)$. To prove that $(\Gamma_1,\Delta_1) \rel_\Sigma (\Gamma_2,\Delta_3)$, let us consider $\nec \varphi \in \Delta_1$, since $\Delta_1 \peq_c \Delta_2\rel_c \Delta_3$, $\varphi \in \Delta_3$. Let us consider now $\ps \varphi \not \in \Delta_1$. Since $\Delta_1 \rel_c \Omega$ and $\Delta_3 \peq_c \Omega$, $\varphi \not \in \Delta_3$. Therefore $\Delta_1 \rel_c \Delta_3$ and this proves that $\Delta_1\rel_\Sigma \Delta_3$.\qedhere 
			\end{itemize}
		\end{proof}
	
}

The finite \term{birelational frame} property follows immediately.

\begin{lemma}\label{lemGS4Model1'}\comment{M: as far as I could see, $\val\varphi_\Sigma$ is not defined. I changed the notation}
	$\mathcal M_\Sigma $ is a model based on a downward-linear birelational \altterm{def:gsfc}{$\gsfc$} frame, and for $\varphi\in\Sigma$, $\mathcal{M}_{\Sigma}, (\Gamma,\Delta) \models \varphi$ if and only if $\varphi\in \Delta$.
\end{lemma}
{\color{blue}
	\begin{proof}By structural induction.  \comment{M: Done, please check!!}
		For the case of a propositional variable $p$, we use the definition of $V_\Sigma(p)$.
		The cases of disjunction and conjunction are straightforward.
		\begin{itemize}
			
			\item For the case of an implication $\varphi \to \psi$: for left to right, assume toward a contradiction that $\varphi \to \psi \not \in \Delta$. By reasoning as in \Cref{lem:truth-lemma:regular} (case of the implication), we can conclude that there exists $\Delta'$ such that $\Delta \peq_c \Delta'$, $\varphi \in \Delta'$ and $\psi \not \in \Delta'$.
			Since, by definition, $\Gamma \peq_c \Delta$ and $\Delta \peq_c \Delta'$ we get that $\Gamma \peq_c \Delta'$ so $(\Gamma,\Delta) \peq_\Sigma (\Gamma,\Delta')$.
			By induction $M_\Sigma, (\Gamma,\Delta') \models \varphi$ and $M_\Sigma, (\Gamma,\Delta') \not \models \psi$, so $M_\Sigma, (\Gamma,\Delta) \not \models \varphi \to \psi$.
			
			Conversely, assume toward a contradiction that $M_\Sigma, (\Gamma,\Delta) \not \models \varphi\to \psi$. 
			Therefore, there exists $ (\Gamma,\Delta')$ such that $(\Gamma,\Delta) \peq_\Sigma (\Gamma,\Delta')$ and 
			$M_\Sigma, (\Gamma,\Delta') \models \varphi$ and $M_\Sigma, (\Gamma,\Delta') \not \models \psi$.
			By induction $\varphi \in \Delta'$ and $\psi \not \in \Delta'$.
			From $(\Gamma,\Delta) \peq_\Sigma (\Gamma,\Delta')$ it follows that $\Delta \rel_c \Delta'$.
			Since $\varphi \to \psi \in \Delta$ and $\Delta \rel_c \Delta'$, $\varphi \to \psi \in \Delta'$.
			By modus ponens we get $\psi \in \Delta'$: a contradiction.
			
			\item For the case of $\ps \psi$ we proceed as follows: from left to right, let us assume that $M_\Sigma, (\Gamma,\Delta) \models \ps \psi$. In view of \Cref{lemClassical}, there exists $(\Gamma',\Delta')$ such that $(\Gamma,\Delta)\rel_{\Sigma}(\Gamma',\Delta')$ such that $M_\Sigma,(\Gamma',\Delta') \models \psi$. 
			By induction $\psi \in \Delta'$. 
			From $(\Gamma,\Delta)\rel_{\Sigma}(\Gamma',\Delta')$ it follows $\Delta \rel_c \Delta'$ so $\ps \psi \in \Delta$.
			
			From right to left, let us consider that $\ps \psi \in \Delta$.
			By a similar reasoning as in the case of $\ps \psi$ in \Cref{lem:truth-lemma:regular} we conclude that there exists $\Delta'$ such that $\Delta\rel_c\Delta'$ and $\psi \in \Delta'$
			From the definition of $(\Gamma,\Delta)$ it follows that $\Gamma \peq_c \Delta$.
			From this, $\Delta\rel_c\Delta'$ and the \terrm{forth--down confluence} property we conclude that there exists $\Gamma'$ such that $\Gamma \rel_c \Gamma' \peq_c \Delta'$.
			By definition,	$(\Gamma, \Delta) \rel_\Sigma (\Gamma',\Delta')$.
			By induction $M_\Sigma, (\Gamma',\Delta')\models \psi$ so $M_\Sigma, (\Gamma,\Delta) \models \ps \psi$.
			
			\item For the case of $\nec \psi$ we proceed as follows: from left to right, let us assume toward a contradiction that $\nec \psi \not \in \Delta$. 
			By \Cref{lemBoxWit}, there exist $\Omega$ and $\Delta'$ such that $\Gamma \peq_c \Delta \peq_c \Omega \rel_c \Delta'$ and $\psi \not \in \Delta'$. 
			Since $\peq_c$ is {transitive}, $\Gamma \peq_c \Omega \rel_c \Delta'$.
			By using the \term{forth--down confluence} property, there exists $\Gamma'$ such that $\Gamma\rel_c \Gamma' \peq_c \Delta'$.
			By definition, $(\Gamma,\Delta) \rel_\Sigma (\Gamma',\Delta')$.
			By induction, $M_\Sigma, (\Gamma',\Delta') \not \models \psi$.
			Therefore, $M_\Sigma, (\Gamma,\Delta)\not \models \nec \psi$: a contradiction.
			
			From right to left, let us assume toward a contradiction that $M_\Sigma, (\Gamma, \Delta) \not \models \nec \psi$. In view of \Cref{lemClassical}, there exists $(\Gamma',\Delta')$ such that $(\Gamma,\Delta) \rel_\Sigma (\Gamma',\Delta')$ and $M_\Sigma, (\Gamma',\Delta') \not \models \psi$.
			By induction, $\psi \not \in \Delta'$
			From $(\Gamma,\Delta)\rel_\Sigma (\Gamma',\Delta')$ we conclude that $\Gamma \rel_c \Gamma'$ and $\Delta \rel_c \Delta'$.
			Since $\nec \psi \in \Delta$ we conclude $\psi \in \Delta'$ : a contradiction.
		\end{itemize}
	\end{proof}
}

}

Now let $\Sigma$ be a subformula-closed subset of $\lanfull$. Given a downward-linear \altterm{def:gsfc}{$\gsfc$} frame $\mathcal M$, we know from \Cref{sec:godelfinite} that the quotient $\nicefrac{\mathcal M}\approx$ is a \altterm{def:gsf}{$\gsf$} frame of size at most $(\lgt\Sigma +1)\cdot 2^{\lgt \Sigma (\lgt \Sigma +1)+1}$ falsifying precisely the same formulas as $\mathcal M$. By \Cref{lemStrongQuotPreserve}, $\nicefrac{\mathcal M}{\approx}$ is also \term{forth--down confluent} and hence a \altterm{def:gsfc}{$\gsfc$} frame. This yields the following results.

\begin{theorem}[{\altterm{def:gsfc}{$\gsfc$}} finite frame property]\label{thm:gs4c_finite_frame}
	The logic \altterm{def:gsfc}{$\gsfc$} has the finite \term{birelational frame} property. That is, if a formula $\varphi$ is \term{falsifiable} (respectively~\term{satisfiable}) on a birelational \altterm{def:gsfc}{$\gsfc$} frame, then $\varphi$ is \term{falsifiable} (respectively \term{satisfiable}) on a finite birelational \altterm{def:gsfc}{$\gsfc$} frame.
\end{theorem}

\ignore{
\begin{proof}
	If $\varphi \not\in \gsfc$ then choose a \term{prime} filter $\Gamma$ avoiding $\varphi$.
	Then $(\Gamma,\Gamma) \notin \val\varphi_\Sigma$.
	Since $\mathcal M_\Sigma$ is a model based on a downward-linear birelational \altterm{def:gsfc}{$\gsfc$} frame, $\nicefrac{\mathcal M_\Sigma}{\approx_\Sigma}$ is a finite model based on a \altterm{def:gsfc}{$\gsfc$} frame also falsifying $\Sigma$.
\end{proof}
}

\begin{theorem}[{\altterm{def:gsfc}{$\gsfc$}} complexity]\label{theorem:g4sc_nexptime}
The logic \altterm{def:gsfc}{$\gsfc$} is decidable in \textsc{nexptime}. %\david{Upgraded from Corollary for uniformity.}
\end{theorem}

%\section{The finite frame property for \texorpdfstring{\altterm{def:sfi}{$\sfi$}}{S4I} and \texorpdfstring{\altterm{def:csf}{$\csf$}}{CS4}}\label{sec:intuitionfinite}

%The remainder of the article will be devoted to establishing the finite frame property for \altterm{def:sfi}{$\sfi$} and \altterm{def:csf}{$\csf$}.

\subsection{Shallow frames}\label{sShallow}

Our finite frame property proof for  \altterm{def:sfi}{$\sfi$} follows some the same general pattern as those for G\"odel--Dummett logics, but there are a few subtleties.

First, once branching is involved, it is no longer the case that every model will be bisimilar to a finite one.
This is illustrated in \Cref{figShallow}.
On the left-hand side, for $\Sigma := \{p,q\}$, it can be checked that the two $p$-points are $\Sigma $-bisimilar to each other, but no other two points are $\Sigma $-bisimilar, so this model's $\Sigma $-bisimulation quotient will not be much smaller.
This model is not {\em too} large for the sake of illustration, but one can imagine the chain of $p$-worlds being much longer (or even infinite) or more complex branching below the $q$-world.
Regardless, we can simplify the model by making the following two observations:
\begin{enumerate}

\item When $w\prec v$ are such that $\ell(w) = \ell(v)$, we can modify the model so that $w\not\peq v$ without affecting $\Sigma$-labels.
In this case, we will say that $v$ is \define{redundant}.

\item When $w\not\peq v$, $v$ does not affect how formulas are evaluated on $w$, so $v$ can be removed from the model without affecting $\ell(w)$.

\end{enumerate}

These two observations lead to the middle model of Figure~\ref{figShallow}, where only one world with an empty label remains (the original evaluation point) and the two $p$-worlds are no longer related via $\peq$, although they are still both related to the root. 
The resulting model is a tree (or, in the general case, a forest) and has the property that if $w_0\prec w_1\prec \ldots \prec w_n$, then each $w_i $ has a distinct label.
Since there are finitely many formulas in $\Sigma$, this tells us that $n<|\Sigma|$, i.e., the \term{depth} of the new model is bounded by $|\Sigma| $.
In our particular example, we obtained a model of depth $1$, but this could be increased to $2$ by appending a $p,q$-world (i.e., one where both variables are true).
However, this is the maximal depth for our particular choice of $\Sigma$.

In the next step, we observe that many of the points from our \term{shallow} model are $\Sigma$-bisimilar, and we can thus identify them.
It is a general fact that any \term{shallow}, \term{forest-like} model is bisimilar to a finite one, although the bounds can be quite large (see Lemma~\ref{lemStrongBisBound}).

 \begin{figure}[h!]\centering
 	\resizebox{.7\textwidth}{!}{
 	\begin{tikzpicture}[node distance=1.5cm,auto]
 		\begin{scope}[node distance=1.5cm,local bounding box=scope1]
 			\node[state,fill=lightgray,line width=0.4mm] (i)  {};
 			\node[state,fill=turquoise,above left of=i] (p)  {$p$};
 			\node[state,fill=turquoise, above of=p] (r)  {$p$};	 	
 			\node[state,fill=lightgray,above right of=i] (x)  {};
 			\node[state,fill=tan,above of=x] (q)  {$q$};
 			\node[state,fill=lightgray, below right of=x] (z)  {};
 			\path[->,dashed] (i) edge[] node[] {} (p);
 			\path[->,dashed] (i) edge[] node[] {} (x);
 			\path[->,dashed] (x) edge[] node[] {} (q);
 			\path[->,dashed] (p) edge[] node[] {} (r);
 			\path[->,dashed] (z) edge[] node[] {} (x);
 		\end{scope}
 		 	
 		\begin{scope}[node distance=1.5cm,shift={($(i.east)+(6cm,0)$)},local bounding box=scope2 ]
 			\node[state,fill=lightgray,line width=0.4mm] (ii)  {};
			\node[state,fill=turquoise,above left of=ii] (pp)  {$p$};
			\node[state,fill=turquoise, above of=pp] (rr)  {$p$};	 	
			\node[above right of=ii] (xx)  {};
			\node[state,fill=tan,above of=xx] (qq)  {$q$};
			\path[->,dashed] (ii) edge[] node[] {} (pp);
			\path[->,dashed] (ii) edge[] node[] {} (qq);
			\path[->,dashed] (ii) edge[] node[] {} (rr);
 		\end{scope}	 	
 		
 		\begin{scope}[node distance=1.5cm,shift={($(ii.east)+(5cm,0)$)},local bounding box=scope3 ]
 			\node[state,fill=lightgray,line width=0.4mm] (iii)  {};
 			\node[state,fill=turquoise,above left of=iii] (ppp)  {$p$};
 			\node[state,fill=tan,above right of=iii] (qqq)  {$q$};
 			\path[->,dashed] (iii) edge[] node[] {} (ppp);
 			\path[->,dashed] (iii) edge[] node[] {} (qqq);
 		\end{scope}	 	 	
 		
 		\node[right of=scope1,node distance=3cm] (transition)  {\scalebox{1.9}{$\Longrightarrow$}};
 		\node[right of=scope2,node distance=3cm] (transition2)  {\scalebox{1.9}{$\Longrightarrow$}};
 	\end{tikzpicture}
 	}	
 	\caption{A two-step transformation, first from an intuitionistic model to a shallow model, then from a shallow model to its bisimulation quotient.
 	As before, the intuitionistic relation is the transitive, reflexive closure of the relation indicated by dashed arrows.
 	The evaluation point is indicated by a bold outline and each point is labelled by its true formulas from $\Sigma:=\{p,q\}$.}\label{figShallow}
 \end{figure}

In summary, our general strategy for proving the finite model property for \altterm{def:sfi}{$\sfi$} is to first convert any model to a \term{shallow} one and then apply a bisimulation quotient.
Let us make these notions precise. 
Given a frame $\mathcal F = ( W,{\peq} ,{\rel} )$ and $w\in W$, the \define{depth} of $w$ is the supremum, in $\mathbb N \cup \{\infty\}$, of all $n \in \mathbb N$ such that there is a sequence
\[w = w_0 \prec w_1 \prec \ldots \prec w_n,\] 
where $w\prec v$ means that $w\peq v$ but $v\not\peq w$.
The \define{depth2}[depth] of the frame $\mathcal F  $ is the supremum of all \altterm{depth}{depths} of elements of $W$.
The depth of a model is the \term{depth} of its underlying frame.
Note that the \term{depth} of worlds or frames/models could be $\infty$.
If a frame or model has \emph{finite} \term{depth}, we say that it is \define{shallow}.
Shallow models will provide an important intermediate step towards establishing the finite frame property, as the bisimulation quotient of a \term{shallow} model is finite.
Nevertheless, it can be quite large, as it is only superexponentially bounded.

Let $2^m_k$ be the superexponential function defined recursively by $2^m_0 = m$ and $2^m_{k+1} = 2^{2^m_k}$.
The following inequality will be useful in order to simplify some expressions involving superexponentials.

\begin{lemma}
	For all $m,n,k\geq 1$: \[2^m\cdot 2_{k}^{(n-1)m}\leq 2_{k}^{nm}.\]
\end{lemma}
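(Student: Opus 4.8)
The plan is to induct on $k$, keeping $m$ and $n$ fixed and arbitrary (both $\geq 1$). For the base case $k=1$ we simply unfold the definition: $2^m_0 = m$ gives $2^m_1 = 2^{2^m_0} = 2^m$, hence $2^{(n-1)m}_1 = 2^{(n-1)m}$ and $2^{nm}_1 = 2^{nm}$, so that
\[
2^m\cdot 2^{(n-1)m}_1 = 2^m\cdot 2^{(n-1)m} = 2^{m+(n-1)m} = 2^{nm} = 2^{nm}_1,
\]
and in fact equality holds.

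For the inductive step, assume $2^m\cdot 2^{(n-1)m}_k \leq 2^{nm}_k$. By definition $2^{(n-1)m}_{k+1} = 2^{(2^{(n-1)m}_k)}$ and $2^{nm}_{k+1} = 2^{(2^{nm}_k)}$, so the desired inequality $2^m\cdot 2^{(n-1)m}_{k+1}\leq 2^{nm}_{k+1}$ is equivalent, after writing $2^m\cdot 2^{(2^{(n-1)m}_k)} = 2^{\,m + 2^{(n-1)m}_k}$ and comparing exponents of $2$, to
\[
m + 2^{(n-1)m}_k \;\leq\; 2^{nm}_k.
\]
By the induction hypothesis it therefore suffices to prove $m + 2^{(n-1)m}_k \leq 2^m\cdot 2^{(n-1)m}_k$, i.e.\ $m \leq (2^m-1)\cdot 2^{(n-1)m}_k$. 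This is immediate from two elementary facts: first, $2^{(n-1)m}_k \geq 1$ for every $k\geq 1$ (a one-line induction on $k$, using that $2^{p}_{1}=2^p\geq 1$ and $2^{p}_{j+1}=2^{2^p_j}\geq 2^0 = 1$); and second, $m \leq 2^m - 1$ for all $m\geq 1$ (another trivial induction). Combining, $(2^m-1)\cdot 2^{(n-1)m}_k \geq (2^m-1)\cdot 1 \geq m$, which closes the induction.

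There is no real obstacle here: the only point requiring a moment's care is organising the reduction in the inductive step so that the induction hypothesis is applied to exactly the quantity $2^m\cdot 2^{(n-1)m}_k$, together with checking the two auxiliary inequalities $2^{(n-1)m}_k\geq 1$ and $m\leq 2^m-1$, both of which are handled by short inductions. I would present the whole argument compactly, since each ingredient is routine.
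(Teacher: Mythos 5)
Your proof is correct and follows essentially the same route as the paper: induction on $k$ with equality in the base case $k=1$, and in the inductive step reducing the claim to $m + 2_{k}^{(n-1)m} \leq 2_{k}^{nm}$, which is obtained from the induction hypothesis together with the elementary estimate $m + x \leq 2^m x$ for $x \geq 1$ (the paper phrases this as $m+x \leq (m+1)x \leq 2^m x$, you as $m \leq (2^m-1)x$ — the same fact). No substantive difference.
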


\proof
Proceed by induction on $k$. If $k=1$, then
\[2^m\cdot 2_{k}^{(n-1)m}=2^m\cdot 2^{(n-1)m}=2_{k}^{n m}.\]
If $k>1$, then note that $1 \leq  2_{k-1}^{(n-1)m}$, so that
\[m+ 2_{k-1}^{(n-1)m} \leq (m+1) 2_{k-1}^{(n-1)m} \leq 2^m\cdot 2_{k-1}^{(n-1)m} \stackrel{\text{\textsc{IH}}} \leq 2_{k-1}^{nm} .\]
Then
\[2^m\cdot 2_{k}^{(n-1)m}=2^m \cdot 2^{ 2_{k-1}^{(n-1)m}}=2^{m+ 2_{k-1}^{(n-1)m}} \leq 2^{2_{k-1}^{nm}} = 2_{k}^{nm},\]
as needed.
\endproof

Much as was the case for G\"odel--Dummett logics, when proving the finite \term{bi-preorder}[bi-preordered] frame property for \altterm{def:sfi}{$\sfi$}, it is convenient to work with downward linear frames.
Since, in this case, they will not be \term{upward linearity}[upward linear], the downward linear \altterm{def:sfi}{$\sfi$} will be \term{forest-like}.
A bi-intuitionis\-tic model $\mathcal M = (W,{\peq},{\rel},V)$ is \define{forest-like} if for every $w\in W$, the set $\{v\in W \mid v\peq w\}$ is totally ordered by $\peq$.

\begin{lemma}\label{lemStrongBisBound}
	Given a \term{forest-like} \term{} \term{bi-preordered model} $\mathcal M = (W,{\peq},{\rel},\allowbreak V)$ of finite \term{depth} $n$ and finite $\Sigma \subseteq \lanfull$ with $\lgt\Sigma = s$, we have $ \lgt{ \nicefrac W{\approx_\Sigma}} \leq  2^{(n+1)s}_{n+1}$.
\end{lemma}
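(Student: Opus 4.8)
The plan is to mirror the proof of \Cref{lemBisBound} as closely as possible, with $\sim_\Sigma$ replaced by $\approx_\Sigma$ throughout, proceeding by induction on the depth $n$. Concretely, I would show by induction on $n\in\mathbb N$ that in a forest-like bi-intuitionistic model the number of $\approx_\Sigma$-classes of points of depth $n$ is at most $2^{2(n+1)s}_{n+2}$, and then conclude the stated bound on $\lgt{\nicefrac W{\approx_\Sigma}}$ exactly as in \Cref{lemBisBound}, using the inequality $2^m\cdot 2_{k}^{(n-1)m}\leq 2_{k}^{nm}$ proved just before it to absorb the extra factors.

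The one feature of a strong $\Sigma$-bisimulation not shared by an ordinary $\Sigma$-bisimulation is that both $Z$ and $Z^{-1}$ must be forth--down confluent for $\peq$; since $\approx_\Sigma$ is symmetric this reduces to forth--down confluence of $\approx_\Sigma$ itself, which in particular means that the set of $\approx_\Sigma$-classes of the proper $\peq$-predecessors of a point is invariant under $\approx_\Sigma$. This is where the forest-like hypothesis enters: for each $w$ the set $\{v\mid v\peq w\}$ is totally ordered by $\peq$, so, since the model has finite depth, the proper predecessors of $w$ form a $\prec$-chain of length at most $n$. Thus the $\approx_\Sigma$-class of a point $w$ of depth $n$ should be determined by the labels $\ell(v)$ of the points $v$ in the cluster of $w$, the set of $\approx_\Sigma$-classes of the immediate $\prec$-successors of $w$ (each of depth $<n$, as in \Cref{lemBisBound}), and the bounded-length $\prec$-chain of $\approx_\Sigma$-classes of its proper predecessors. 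The first two pieces of data give the bound of \Cref{lemBisBound}; the ancestor chain contributes at most a further $n$-th power of the number of classes, i.e.\ one extra exponential, which is swallowed by the next level of the tower just as in the auxiliary lemma, leaving the final count $2^{2(n+1)s}_{n+2}$.

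I expect the main obstacle to be the bookkeeping for this combined datum: one must organise the induction so that the descendant-tree data (which lives at strictly smaller depth) and the ancestor-chain data (which, although it lives at strictly larger depth, is linearly ordered and of length at most $n$ by forest-likeness) are controlled simultaneously, and then verify both that two points carrying the same datum are genuinely related by a strong $\Sigma$-bisimulation---so that forth--down confluence in both directions really does hold---and that the arithmetic still closes at $2^{2(n+1)s}_{n+2}$ rather than at a higher tower. The remaining ingredients, namely that $\approx_\Sigma$ is an equivalence relation and that the relevant frame conditions pass to the quotient, are already available from \Cref{secSBisim} (in particular \Cref{lemStrongQuotPreserve}).
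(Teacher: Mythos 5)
The paper itself offers no argument here beyond the sentence that the lemma ``is proven analogously to \Cref{lemBisBound}'', so your plan to mirror that induction is certainly in the intended spirit; the question is whether your elaboration would actually go through, and at the decisive point it does not. In \Cref{lemBisBound} the induction on depth is well-founded because the $\sim_\Sigma$-class of a point is determined by data about its cluster and its immediate successors, which have strictly smaller depth. Your datum additionally contains the chain of $\approx_\Sigma$-classes of the proper predecessors, and these have strictly \emph{greater} depth; worse, their $\approx_\Sigma$-classes depend in turn on the classes of all points in the subtrees above them, including $w$ itself and its siblings. So the induction on depth does not close, and the arithmetic cannot rescue it: the quantity you raise to the $n$-th power is the very quantity you are trying to bound (and, as a side remark, an $n$-th power is not ``one extra exponential''). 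You flag this as ``bookkeeping'', but it is the actual content of the lemma, and the obvious way to break the circularity fails: if you downgrade the successor/ancestor data to the $\sim_\Sigma$-information available at earlier stages of the induction, the resulting invariant no longer determines the $\approx_\Sigma$-class. For instance, let $w$ have two incomparable successors, an $\alpha$-labelled point with a $\beta$-labelled point above it and a second $\beta$-labelled point sitting immediately above $w$, while $w'$ has only the chain $\alpha$ then $\beta$ above it (labels $\ell^+(w)\subsetneq\alpha\subsetneq\beta$, $R$ the identity). Both models are forest-like of depth $2$, $w\sim_\Sigma w'$, and both points are roots with identical cluster and ancestor data; yet $w\not\approx_\Sigma w'$, because forth--up forces the $\beta$-point directly above $w$ to be matched to the unique $\beta$-point above $w'$, whose $\alpha$-labelled strict predecessor then has no counterpart below the former, violating forth--down confluence for the inverse relation.

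What is missing, then, is a well-founded stratification that carries two-way information upward rather than tacking the ancestor chain onto the one-way induction: for example, define by induction on the height of the generated subtree a ``stand-alone'' strong type of each pointed subtree, and recover the $\approx_\Sigma$-class of a point from its (length at most $n$, by forest-likeness) ancestor chain of cluster label-sets together with the stand-alone types of the subtrees hanging off each ancestor and its own pointed type; counting these types level by level is what keeps the bound at a tower of height $n+2$, with the extra polynomial factors absorbed as in the auxiliary inequality. Your proposal correctly identifies the ingredients (symmetry reducing the two forth--down conditions to one, forest-likeness bounding the ancestor chain, the need to verify that equal data yields a strong $\Sigma$-bisimulation, and the availability of \Cref{lemStrongQuotPreserve}), but without this organising idea the proof as sketched would not succeed.
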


\proof[Proof sketch]
This is proven in some detail in for example~\cite{BalbianiToCL}, but we outline the main elements of the proof.
Let $W_n$ be the restriction of $W $ to the set of points of depth at most $n$.
Proceed by induction on $n\in \mathbb N$ to show that there are at most $2^{(n+1)s}_{n+1} $ \term{strong} $\Sigma$\term{-bisimulation} classes on $W_n$.
Since $W=W_n$ when $n$ is the \term{depth} of $\mathcal M$, the claim follows.

Let $w\in W_0$.
Then, its bisimulation class in $W_0$ is uniquely determined by its label $\ell(w) \in 2^\Sigma $, and there are at most $2^{s} = 2^{s}_1  $ choices for this label.

For the inductive step, let $w\in W_{n}$ with $n>0$.
Note that each immediate successor of $w$ belongs to $W_{n-1}$.
Let $\{[v_i] \mid i\in I\}$ enumerate the equivalence classes of the immediate successors of $w$ in $W_{n-1}$.
By the induction hypothesis, there are at most $2^{ns}_{n} $ choices for $[v_i]$ in $W_{n-1}$, and the strong bisimulation class of $w$ in $W_{n}$ is determined by its label, for which there are $ 2^{ s} $ choices, and a possible choice of  $\{[v_i]: i\in I\}$, of which there are at most $2^{2^{ns}_{n}} $ choices.
Hence there are at most
\[ 2^{ s} \cdot 2^{2^{ns}_{n}} = 2^{ 2s  + 2^{ns}_{n}} \leq 2^{2^{ (n+1) s}_{n }} = 2^{ (n+1) s}_{n+1} 
\]
choices for the strong bisimulation class of $w$ in $W_{n}$.
\endproof

\begin{remark}
	Note that the \term{forest-like} assumption in \Cref{lemStrongBisBound} is needed, as in general there may be infinitely many $\approx_\Sigma$ equivalence classes of points if this assumption fails.
	In a model consisting of an infinite sequence
	\[w_0 \succ v_0 \prec w_1 \succ v_1 \prec w_2 \succ \ldots \]
	where $p$ is true only on $w_0$, no two points are strongly ${\{p\}}$-bisimilar.
\end{remark}

We conclude this section by showing that the \term{shallow} frame property implies the finite model property for \altterm{def:sfi}{$\sfi$}.  

\begin{proposition}\label{thmShallowtoFin}Let %$\Lambda\in\! \lbrace{\csf},{\isf},{\sfi},{\gsf}\rbrace$ and
 $\varphi\in \lanfull$.
If $\varphi$ is \term{falsifiable} (respectively \term{satisfiable}) on a \term{shallow}, \term{forest-like} \altterm{def:sfi}{$\sfi$} frame, then $\varphi$ is \term{falsifiable} (respectively \term{satisfiable}) on a finite \altterm{def:sfi}{$\sfi$} frame of size at most $2^{(n+1)s}_{n+1}
$, where $n$ is the \term{depth} of $\mathcal M$ and $s$ is the length of $\varphi$.
\end{proposition}

\begin{proof}
Let $\mathcal M$ be a model based on a \term{forest-like} \altterm{def:sfi}{$\sfi$} frame of \term{depth} $n<\infty$ that \term{falsifiable}[falsifies] (respectively \term{satisfiable}[satisfies]) $\varphi$. Define $\Sigma$ to be the set of subformulas of $\varphi$. So $\Sigma$ is finite, closed under subformulas, and $s:=|\Sigma|$ is bounded by the length of $\varphi$.
Let $\approx_\Sigma$ denote the greatest \term{strong} $\Sigma$\term{-bisimulation} on $\mathcal{M}$.
Then, the quotient structure $\nicefrac{\mathcal M}{\approx_\Sigma}$ has size at most $2^{(n+1)s}_{n+1}
$ by \Cref{lemStrongBisBound}, is based on a \altterm{def:sfi}{$\sfi$} frame by Lemmas~\Cref{lemQuotPreserv} and \ref{lemStrongQuotPreserve}, and \term{falsifiable}[falsifies] (respectively \term{satisfiable}[satisfies]) $\varphi$ by the truth lemma for quotients (\Cref{lemQuotTruth'}).
\end{proof}

Thus in order to prove the finite frame property for \altterm{def:sfi}{$\sfi$}, it suffices to show that it has the \term{shallow} frame property (with respect to its intended class of frames): that any non-\term{valid} formula is \term{falsifiable} on a \term{shallow}  \altterm{def:sfi}{$\sfi$} frame.
This is the strategy that we will employ in the sequel. 

Note that we can, in a similar manner, prove versions of \Cref{thmShallowtoFin} for \altterm{def:csf}{$\csf$}, \altterm{def:gsfc}{$\gsfc$}, \altterm{def:gsf}{$\gsf$}, and \altterm{def:isf}{$\isf$}.
In fact, the finite \term{bi-preorder} frame property for \altterm{def:csf}{$\csf$}, \altterm{def:gsf}{$\gsf$}, and \altterm{def:gsfc}{$\gsfc$} can be proven adapting our proof for \altterm{def:isf}{$\isf$}, but, as we have seen, more direct means are available.
Meanwhile, our techniques do not yield the \term{shallow} model property for \altterm{def:isf}{$\isf$}.

%\begin{remark}
%Note that \Cref{thmShallowtoFin} applies to \altterm{def:isf}{$\isf$}, even though we do not establish the finite model property for \altterm{def:isf}{$\isf$} in this paper.
%However, this result does reduce the problem of establishing the finite model property for \altterm{def:isf}{$\isf$} to that of establishing the \term{shallow} frame property.
%\end{remark}

\subsection{The finite frame property for \texorpdfstring{\altterm{def:sfi}{$\sfi$}}{S4I}}\label{secFMPS4I}

%In this subsection we fix $\Lambda =\sfi$.
Our aim is to prove that \altterm{def:sfi}{$\sfi$} has the \term{shallow} frame property, from where the finite frame property will follow.
As was the case for G\"odel--Dummett logics, it is convenient to enforce downward linearity on our models.
In the case of \altterm{def:sfi}{$\sfi$}, this amounts to them being \term{forest-like}, in order to apply \Cref{thmShallowtoFin}.
The following construction will ensure this.
In this subsection, $W_{\mathrm c} ,{\peq_{\mathrm c} }$, etc.~refer to the components of $\mathcal M^{\sfi}_{\mathrm c} $ (see \Cref{secCompGS4}), and $W_\Sigma,{\peq_\Sigma}$, etc.~will refer to the respective components of the model $\mathcal M^{\sfi}_\Sigma$ to be constructed below.

\begin{definition}
For a set of formulas $\Sigma$ closed under subformulas, define $W_\Sigma $ to be the set of all tuples $ \Gamma= (\Gamma _0,\ldots,\Gamma _n )$, where
\begin{enumerate}
\item each $\Gamma _i\in W_{\mathrm c} $,
\item $ \Gamma _i \peq_{\mathrm c}  \Gamma _{i+1}  $ if $i<n$,
\item for each $i<n$ there is a formula $\varphi \in \Sigma$ such that $\varphi \in \Gamma _{i+1}\setminus \Gamma _{i} $.
\end{enumerate}
We define $V_\Sigma$ by $\Gamma\in V_\Sigma(p)$ if and only if $p\in  \Gamma_n \cap \Sigma$.
Define $\Gamma \peq_\Sigma \Delta$ if $\Gamma$ is an initial sequence of $\Delta$, and for $ \Gamma= (\Gamma _0,\ldots,\Gamma _n )$, define $\ell(\Gamma) =\Gamma_n\cap \Sigma$.
\end{definition}

\begin{lemma}\label{lemS4IShallow}
The relation $\peq_\Sigma $ is a \term{forest-like} partial order on $W_\Sigma$ and any strict $\peq_\Sigma $-chain has length at most $|\Sigma|+1$.
Moreover, $V_\Sigma$ is monotone.
\end{lemma}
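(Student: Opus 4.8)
The statement asks for three things about $\peq_\Sigma$ on $W_\Sigma$: that it is a forest-like partial order, that strict chains have length at most $\lgt\Sigma+1$, and that $V_\Sigma$ is monotone. The plan is to verify each in turn, all by direct unwinding of the definition of $W_\Sigma$ and of $\peq_\Sigma$ as the ``initial sequence'' relation.

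First I would check that $\peq_\Sigma$ is a partial order. Reflexivity is immediate since every sequence is an initial sequence of itself. Transitivity follows because if $\Gamma$ is an initial sequence of $\Delta$ and $\Delta$ is an initial sequence of $\Theta$, then $\Gamma$ is an initial sequence of $\Theta$. Antisymmetry holds because if $\Gamma$ is an initial sequence of $\Delta$ and $\Delta$ is an initial sequence of $\Gamma$, the two sequences have the same length and hence are equal. For \emph{forest-like}, I must show that for each $\Gamma \in W_\Sigma$ the set $\{\Delta \in W_\Sigma \mid \Delta \peq_\Sigma \Gamma\}$ is totally ordered by $\peq_\Sigma$: but these are exactly the initial sequences of the fixed sequence $\Gamma = (\Gamma_0,\ldots,\Gamma_n)$, namely $(\Gamma_0,\ldots,\Gamma_k)$ for $k \le n$ (each of which lies in $W_\Sigma$ since conditions (1)--(3) are inherited by initial segments), and these are linearly ordered by length, hence by $\peq_\Sigma$.

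Next, the bound on strict chains. If $\Gamma^0 \prec_\Sigma \Gamma^1 \prec_\Sigma \cdots \prec_\Sigma \Gamma^m$ is a strict chain in $W_\Sigma$, then each $\Gamma^j$ is a proper initial sequence of $\Gamma^{j+1}$, so the lengths of the sequences $\Gamma^j$ are strictly increasing; write $\lgt{\Gamma^j}$ for the number of coordinates of $\Gamma^j$. Since $\Gamma^m = (\Gamma^m_0, \ldots, \Gamma^m_{n})$ has $n+1 \le \lgt\Sigma+1$ coordinates by condition (3) applied to $\Gamma^m$ (the formulas witnessing condition (3) at successive positions $i < n$ are pairwise distinct by monotonicity of the satisfaction/membership relation, exactly as in \Cref{lemCS4Shallow}), the number of distinct lengths appearing along the chain is at most $n+1 \le \lgt\Sigma+1$, so $m+1 \le \lgt\Sigma+1$; hence the chain has length at most $\lgt\Sigma+1$.

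Finally, monotonicity of $V_\Sigma$: I must show each $V_\Sigma(p)$ is upward closed under $\peq_\Sigma$ (the fallible set is empty here, so there is nothing further to check). Suppose $\Gamma \in V_\Sigma(p)$ and $\Gamma \peq_\Sigma \Delta$, say $\Gamma = (\Gamma_0,\ldots,\Gamma_n)$ and $\Delta = (\Gamma_0,\ldots,\Gamma_n,\ldots,\Gamma_k)$ with $k \ge n$. By definition $p \in \Gamma_n^+$. Since $\Gamma_n \peq_{\mathrm c} \Gamma_{n+1} \peq_{\mathrm c} \cdots \peq_{\mathrm c} \Gamma_k$ by condition (2), and $\peq_{\mathrm c}$ is inclusion of the $+$-components, we get $p \in \Gamma_k^+$, and as $p \in \Sigma$ (since $\Gamma \in V_\Sigma(p)$ forces $p \in \Sigma$), we conclude $\Delta \in V_\Sigma(p)$. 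I expect the only mildly delicate point to be the counting argument for the chain bound, where one must be careful that it is the distinctness of the witnessing formulas --- and not merely the strict increase of lengths --- that produces the bound $\lgt\Sigma + 1$; but this is precisely the argument already used in \Cref{lemCS4Shallow} and poses no real obstacle.
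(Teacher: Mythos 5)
Your proof is correct and follows essentially the same route as the paper, which simply notes that forest-likeness is immediate from the initial-segment definition, that the chain bound follows by the Lemma~\ref{lemCS4Shallow} counting argument (the paper phrases it via the strict growth of $\ell^+$ along $\prec_\Sigma$, you via strictly increasing sequence lengths bounded by the distinct witnesses of condition (3) --- the same underlying argument), and that monotonicity of $V_\Sigma$ comes from the coordinates being $\peq_{\mathrm c}$-ordered. No gaps.
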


\begin{proof}
That $\peq_\Sigma $ is a \term{forest-like} partial order is easily checked from the definitions. 
Now, let $\Gamma^0\prec_\Sigma \Gamma^1 \prec_\Sigma \ldots \prec_\Sigma \Gamma^ n$ be a chain in $W_\Sigma$.
We note that each $\Gamma^i$ is an initial segment of $\Gamma^ n := (\Gamma^ n_0,\ldots,\Gamma^ n_k)$, so it suffices to show that $k\leq |\Sigma|$.
But by definition, we have that $\Gamma^ n_i\subsetneq \Gamma^ n_{i+1} $ for each $i<k$, and since each $\Gamma^ n_i\subseteq \Sigma $, we have that $k\leq |\Sigma|$, as needed.
The monotonicity of $V_\Sigma$ follows from the elements of $\Gamma$ being ordered by $\peq_{\mathrm c} $.
\end{proof}

\begin{lemma}\label{lemTruthImpS4I}
For all $\Gamma \in W_\Sigma$ and $ \varphi \imp \psi \in \Sigma $, we have that $\varphi\imp \psi \in \ell  ( \Gamma ) $ if and only if whenever $\Delta \seq_\Sigma \Gamma$ with $\varphi\in \ell  ( \Delta) $, it follows that $\psi \in \ell ( \Delta )$.
\end{lemma}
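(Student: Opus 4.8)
The plan is to mirror the implication case of \Cref{lem:truth-lemma:regular} and of \Cref{lemTruthImpCS4}, reading $\ell^+(\Gamma)$ for a tuple $\Gamma=(\Gamma_0,\ldots,\Gamma_n)\in W_\Sigma$ as $\Gamma_n\cap\Sigma$ (the $\Sigma$-content of the last component), and using that $\Sigma$ is closed under subformulas, so $\varphi,\psi\in\Sigma$ whenever $\varphi\imp\psi\in\Sigma$. I will also use throughout that $\peq_{\mathrm c}$ is inclusion of theories and that every $\Gamma_i$ is closed under $\vdash$.

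For the left-to-right direction I would take any $\Delta=(\Gamma_0,\ldots,\Gamma_n,\Delta_{n+1},\ldots,\Delta_m)\seq_\Sigma\Gamma$ with $\varphi\in\ell^+(\Delta)$, i.e.\ $\varphi\in\Delta_m$, and observe that clause~(2) in the definition of $W_\Sigma$ gives $\Gamma_n\peq_{\mathrm c}\Delta_{n+1}\peq_{\mathrm c}\cdots\peq_{\mathrm c}\Delta_m$, hence $\Gamma_n\subseteq\Delta_m$ and so $\varphi\imp\psi\in\Delta_m$; closure under $\vdash$ then yields $\psi\in\Delta_m$, and $\psi\in\Sigma$ gives $\psi\in\ell^+(\Delta)$, as required.

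For the converse I would argue by contraposition: assuming $\varphi\imp\psi\notin\Gamma_n$, the set $\Gamma_n\cup\{\varphi\}$ is $\psi$-consistent (otherwise $\Gamma_n\vdash\varphi\imp\psi$), so by \Cref{lem:lindenbaum} it extends to a $\psi$-consistent prime theory $\Theta$ with $\Gamma_n\peq_{\mathrm c}\Theta$, $\varphi\in\Theta$, and $\psi\notin\Theta$. Here the only delicate point arises, and it is the main obstacle such as it is: $\Theta$ need not carry strictly more $\Sigma$-content than $\Gamma_n$, so it cannot always be appended as a new component of the tuple. I would resolve this with a case split. If $\Theta\cap\Sigma\neq\Gamma_n\cap\Sigma$, then, since $\Gamma_n\cap\Sigma\subseteq\Theta\cap\Sigma$, there is $\chi\in\Sigma$ with $\chi\in\Theta\setminus\Gamma_n$, so $\Delta\coloneqq(\Gamma_0,\ldots,\Gamma_n,\Theta)\in W_\Sigma$ satisfies $\Delta\seq_\Sigma\Gamma$, $\varphi\in\ell^+(\Delta)$, and $\psi\notin\ell^+(\Delta)$. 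If instead $\Theta\cap\Sigma=\Gamma_n\cap\Sigma$, then $\varphi\in\Theta\cap\Sigma=\Gamma_n\cap\Sigma$ and, since $\psi\in\Sigma$ and $\psi\notin\Theta$, also $\psi\notin\Gamma_n$; hence $\Delta\coloneqq\Gamma$ itself (which trivially satisfies $\Delta\seq_\Sigma\Gamma$) witnesses the failure of the right-hand side.

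Apart from this tuple bookkeeping, the argument is the standard Henkin-style reasoning already carried out for the implication connective in \Cref{lem:truth-lemma:regular} and \Cref{lemTruthImpCS4}; no genuinely new idea about $\mathsf{S4I}$ is needed for this particular lemma, the forest-like structure of $W_\Sigma$ playing a role only through the requirement that successive components differ on some formula of $\Sigma$.
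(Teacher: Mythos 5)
Your proof is correct and follows essentially the same route as the paper's: the left-to-right direction uses that $\Gamma$ is an initial segment of $\Delta$ so $\Gamma_n\peq_{\mathrm c}\Delta_m$, and the converse extends $\Gamma_n\cup\{\varphi\}$ to a $\psi$-consistent prime theory and either appends it (when it adds $\Sigma$-content) or takes $\Delta=\Gamma$ (when it does not), which is exactly the paper's ``without loss of generality'' split. If anything, your explicit case analysis spells out the clause-(3) membership check that the paper leaves as ``it should then be clear.''
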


\begin{proof}
First assume that $\Gamma= (\Gamma _0,\ldots,\Gamma _n  )$ is such that $\varphi\imp \psi \in \ell ( \Gamma ) $ and let $\Delta = (\Delta _0,\ldots,\Delta _m )$ be such that $\Gamma \peq_\Sigma \Delta$.
Then $m\geq n$ and $\Delta_n = \Gamma_n$, which by transitivity of $\peq_{\mathrm c} $ implies that $\Gamma _n\peq_{\mathrm c}  \Delta_m$.
It follows that if $\varphi\in \ell ( \Delta) $ then $\psi \in \ell( \Delta )$.

Conversely, suppose that $\varphi\imp \psi \in \Sigma\setminus \ell ( \Gamma ) $.
We may further assume, without loss of generality, that if $\varphi\in \ell(\Gamma)$ then $\psi\in \ell(\Gamma)$, for otherwise we may take $\Delta = \Gamma$.
Then there is $\Delta_{n+1} \seq_{\mathrm c}  \Gamma_n$ such that $\varphi\in \Delta_{n+1}$ but $\psi \not \in \Delta_{n+1}$.
For $i\leq n$ set $\Delta_i = \Gamma_i$, and define $\Delta = (\Delta_i)_{i\leq n+1}$.
It should then be clear that $\Delta$ has the desired properties.
\end{proof}

If $\Gamma$ is the left-hand model of Figure~\ref{figShallow}, the middle model represents the generated submodel of $(\Gamma)$ (i.e., the sequence of length one with unique element $\Gamma$), which is now a shallow forest.
This is already a major step in establishing the finite frame property, but the accessibility relation $\rel_\Sigma$ for \altterm{def:sfi}{$\sfi$} requires some care. This is illustrated in Figure~\ref{figS4IRel}.

The top of the figure shows an \altterm{def:sfi}{$\sfi$} model (in fact, it is also an \altterm{def:isf}{$\isf$} model, a fact which will become relevant soon). 
We can imagine that this model is a substructure of $\mathcal M_{\rm c}$, and the bottom of the figure illustrates the respective substructure of $\mathcal M_\Sigma$.
The relation $\rel_\Sigma$ will be defined as the transitive, reflexive closure of a `generator' $\rel^1_\Sigma$, and the two boxes on the bottom illustrate two possible choices for this $\rel^1_\Sigma$.

Let us first point out that the top model is already forest-like, so the transformation to $\mathcal M_\Sigma$ in this case is rather simple.
It may be viewed as deleting `redundant' points (to be clear, copies of these points will appear elsewhere in $\mathcal M_\Sigma$, but we will not focus on them for this discussion).
The restriction of $\rel_{\rm c}$ to $W_\Sigma$ is neither \term{forth--up confluent}[forth--up] nor \term{back--up confluent}, so we must amend it if we want to preserve either of these properties.

As our goal is to show that \altterm{def:sfi}{$\sfi$} has the finite frame property, our priority is for the resulting relation to be \term{forth--up confluent}, as pictured on the left.
As can be seen, we added an arrow from the $b$-point on the left to the $a$-point on the right.
This arrow was inherited from the deleted $a$-point, i.e., the `recipe' is to set $ w \rel^1_\Sigma v$ if there is $v '$ such that $\ell(v')= \ell(v)$ and $ w \rel_{\rm c} v' \seq_{\rm c} v$.
This recipe will always preserve \term{forth--up} confluence when deleting \term{redundant} points from a forest-like model and is the core intuition behind Definition~\ref{defAccS4I} below.

Before stating this definition, it will be instructive to ask whether we can also preserve \term{back--up} confluence.
The figure on the bottom right shows that this can be achieved by a `dual' approach, if we instead set $ w \rel^1_\Sigma v$ if there is $w '$ such that $\ell(w')= \ell(w)$ and $ w \peq_{\rm c} w' \rel_{\rm c} v$.
The issue here is that \term{forth--up} confluence is no longer preserved. (This construction could be used to give an alternative finite frame property proof for \altterm{def:csf}{$\csf$}, although as we have seen, a much more direct approach is available in this case.)

This begs the question whether some optimal definition for $\rel^1_\Sigma$ could preserve both confluence properties simultaneously, thus allowing us to adapt our techniques to obtain the finite frame property for \altterm{def:isf}{$\isf$}.
But for the example we have chosen, there is {\em no} simultaneously \term{forth--up} and \term{back--up} confluent relation linking the root of the left tree of $W_\Sigma$ to any world on the right tree.
This can be seen from the following reasoning.
\begin{itemize}

\item Only the $c$-world can be linked to the $e$-world, since on the left tree, $\ps r$ only belongs to $ c$, but $r\in e$.
Similarly, only the $d$-world can be linked to the $f$-world, due to $\ps p$.

\item The $b$-world cannot be linked to the $e$-world nor to the $f$-world, as noted above.
Moreover, it cannot be linked to any of the other two worlds on the right, since \term{back--up  confluent}[back--up  confluence] would force some successor of the $b$-world to be linked to the $f$-world, which we have shown to be impossible.
Thus, the $b$-world cannot be linked to any world on the right.

\item If the left root is linked to {\em any} world on the right,  \term{forth--up} confluence would force the $b$-world to be linked to at least one world on the right.
However, as we have shown, this is not possible.
\end{itemize}
This shows that we cannot in general define $\rel^1_\Sigma $ in such a way that it preserves \term{forth--up} and \term{back--up} confluence simultaneously and in such a way that $ \Gamma \rel_{\mathrm c} \Delta$ implies that $ (\Gamma) \rel^1_{\Sigma} (\Delta)$, a crucial property we need of $\rel^1_\Sigma $; compare this to Figure~\ref{figBisQuo}, where a `true' bisimulation quotient does indeed preserve both \term{forth--up} and \term{back--up} confluence.
For this reason, our techniques cannot be applied to \altterm{def:isf}{$\isf$} without major modification.
A more promising approach seems to be bounding the \term{depth} of models with respect to $\rel$ rather than $\peq$, which aligns with the approach in~\cite{DBLP:conf/lics/GirlandoKMMS23}.

Now, let us return to our proof of the finite frame property for \altterm{def:sfi}{$\sfi$}.
 
 \begin{figure}[h!]
 	 \resizebox{.7\textwidth}{!}{
 	\begin{tikzpicture}[auto,node distance=1.5cm]
 		\begin{scope}[node distance=2.5cm,local bounding box=scope1]
 			\node[state,fill=tan,line width=0.4mm] (a1)  {$a$};
 			\node[state,fill=turquoise,above left of =a1] (b1)  {$b$};
 			\node[state,fill=babyblue,above of =b1] (c1)  {$c$};
 			\node[state,fill=tan,above right of=a1] (a2)  {$a$};
 			\node[state,fill=applegreen,above of=a2] (d1)  {$d$};
 			
 			\node[state,fill=tan,right of=a2] (a3)  {$a$};
 			\node[state,fill=amber,above of=a3] (c2)  {$e$};
 			\node[state,fill=tan,below right of=a3] (a4)  {$a$};
 			\node[state,fill=turquoise,above right of=a4] (c3)  {$b$};
 			\node[state,fill=atomictangerine,above  of=c3] (c4)  {$f$};
 			
 			\path[->,dashed] (a1) edge[] node[] {} (b1);
 			\path[->,dashed] (b1) edge[] node[] {} (c1);
 			\path[->,dashed] (a1) edge[] node[] {} (a2);
 			\path[->,dashed] (a2) edge[] node[] {} (d1);
 			\path[->,dashed] (a4) edge[] node[] {} (a3);
 			\path[->,dashed] (a3) edge[] node[] {} (c2);		 
 			\path[->,dashed] (a4) edge[] node[] {} (c3);
 			\path[->,dashed] (c3) edge[] node[] {} (c4);

 			\path[->] (a1) edge[] node[] {} (a4);
 			\path[->] (b1) edge[bend left] node[] {} (a3);
 			\path[->] (a2) edge[bend left] node[] {} (c3);
 			\path[->] (c1) edge[bend left] node[] {} (c2);
 			\path[->] (d1) edge[bend left] node[] {} (c4);
 			
 		\end{scope}
 		\node[below of=scope1, node distance=3.4cm] (transition)  {};
 		\node[left of=transition, node distance=1cm] (x1)  {\begin{turn}{-45}\scalebox{1.9}{$\Downarrow$}\end{turn}};
 		\node[right of=transition, node distance=1cm] (x2)  {\begin{turn}{45}\scalebox{1.9}{$\Downarrow$}\end{turn}};
 		\begin{scope}[node distance=1.8cm,  shift={($(a1)+(-3,-6cm)$)}, local bounding box=scope2]
 			\node[state,fill=tan,line width=0.4mm] (aa1)  {$a$};
 			\node[state,fill=turquoise,above left of =aa1] (bb1)  {$b$};
 			\node[state,fill=babyblue,above of =bb1] (cc1)  {$c$};
 			\node[above right of=aa1] (aa2)  {};
 			\node[state,fill=applegreen,above of=aa2] (dd1)  {$d$};
 			
 			\node[right of=aa2] (aa3)  {};
 			\node[state,fill=amber,above of=aa3] (cc2)  {$e$};
 			\node[state,fill=tan,below right of=aa3] (aa4)  {$a$};
 			\node[state,fill=turquoise,above right of=aa4] (cc3)  {$b$};
 			\node[state,fill=atomictangerine,above  of=cc3] (cc4)  {$f$};
 			
 			\path[->,dashed] (aa1) edge[] node[] {} (bb1);
 			\path[->,dashed] (bb1) edge[] node[] {} (cc1);
 			\path[->,dashed] (aa1) edge[] node[] {} (dd1);
 			\path[->,dashed] (aa4) edge[] node[] {} (cc2);
 			\path[->,dashed] (aa4) edge[] node[] {} (cc3);
 			\path[->,dashed] (cc3) edge[] node[] {} (cc4);
 			\path[->] (aa1) edge[] node[] {} (aa4);
 			\path[->] (bb1) edge[] node[] {} (aa4);
 			\path[->] (cc1) edge[bend left] node[] {} (cc2);
 			\path[->] (dd1) edge[bend left] node[] {} (cc4);	
 			\node [draw,line width=0.4mm, fit=(scope2)] {};
 			
 		\end{scope}

 		\begin{scope}[node distance=1.8cm,shift={($(a4)+(-1,-6cm)$)}, local bounding box=scope3]
 			\node[state,fill=tan,line width=0.4mm] (aaa1)  {$a$};
 			\node[state,fill=turquoise,above left of =aaa1] (bbb1)  {$b$};
 			\node[state,fill=babyblue,above of =bbb1] (ccc1)  {$c$};
 			\node[above right of=aaa1] (aaa2)  {};
 			\node[state,fill=applegreen,above of=aaa2] (ddd1)  {$d$};
 			
 			\node[right of=aaa2] (aaa3)  {};
 			\node[state,fill=amber,above of=aaa3] (ccc2)  {$e$};
 			\node[state,fill=tan,below right of=aaa3] (aaa4)  {$a$};
 			\node[state,fill=turquoise,above right of=aaa4] (ccc3)  {$b$};
 			\node[state,fill=atomictangerine,above  of=ccc3] (ccc4)  {$f$};
 			
 			\path[->,dashed] (aaa1) edge[] node[] {} (bbb1);
 			\path[->,dashed] (bbb1) edge[] node[] {} (ccc1);
 			\path[->,dashed] (aaa1) edge[] node[] {} (ddd1);
 			\path[->,dashed] (aaa4) edge[] node[] {} (ccc2);
 			\path[->,dashed] (aaa4) edge[] node[] {} (ccc3);
 			\path[->,dashed] (ccc3) edge[] node[] {} (ccc4);

 			\path[->] (aaa1) edge[] node[] {} (aaa4);
 			%\path[->] (bbb1) edge[bend left] node[] {} (aaa3);
 			\path[->] (aaa1) edge[] node[] {} (ccc3);
 			\path[->] (ccc1) edge[bend left] node[] {} (ccc2);
 			\path[->] (ddd1) edge[bend left] node[] {} (ccc4);
 			\node [draw,line width=0.4mm, fit=(scope3)] {};
 		\end{scope}
 	\end{tikzpicture}
 }
 	\caption{An $\sf S4I$ model, where $\peq$ is the transitive, reflexive closure of the relations defined by the dashed arrows and $\rel$ of the relation defined by the solid arrows.
 	The letter/colour on each world represents its label, where $\Sigma:=\{p,q,r,\ps p,\ps r\}$, $a = \varnothing$, $b:=\{q\}$, $b:=\{q\}$, $c:=\{q,\ps r\}$, $d:=\{\ps p\}$,  $e:= \{r,\ps r\}$, and $f:= \{p,q,\ps p\}$.}\label{figS4IRel}
 	
 \end{figure}

\begin{definition}\label{defAccS4I}
%\comment{P: are $\Phi$ and $\Psi$ elements of $W_{c}$ or elements of $W_{\Sigma}$ (D: $W_{\mathrm c} $)? What does ``$\Phi\leq\Psi$'' mean (D: There was a subindex missing, it is $\peq_{\mathrm c} $.)? The thing is that I do not understand the definition of the binary relation $\leq_{\Sigma}^{0}$. D: Is it clear now?}
For \term{theory}[theories] $\Phi$, $\Psi$ in $W_{\mathrm c} $, we define $\Phi \peq^0_\Sigma \Psi$ if $\Phi \peq_{\mathrm c}  \Psi$ and $\Phi \cap \Sigma = \Psi \cap \Sigma$.
We define $\Gamma \rel^0_\Sigma \Delta$ if there is $\Theta$ so that $\Gamma\rel_{\mathrm c}  \Theta \seq^0_\Sigma \Delta$.

We define, for $\Gamma= (\Gamma _0,\ldots,\Gamma _n  )$ and $\Delta = (\Delta _0,\ldots,\Delta _m  ) \in W_\Sigma$, $\Gamma \rel^1_\Sigma \Delta$ if there is a non-decreasing sequence $j_1,\ldots j_n $ with $j_n = m$ such that $ \Gamma _i  \rel^0_\Sigma  \Delta _{j_i} $ for $i\leq n$.
We define $\rel_\Sigma$ to be the {transitive} closure of $\rel^1_\Sigma$.
\end{definition}

With this, we define $\mathcal M^{\sfi}_\Sigma = (W_\Sigma,{\peq_\Sigma},{\rel_\Sigma},V_\Sigma)$.
It is easy to check using the above lemmas that $\mathcal M^{\sfi}_\Sigma$ is a \term{bi-preordered model}.
Next we show that it is indeed a model based on an \altterm{def:sfi}{$\sfi$} frame.

\begin{lemma}\label{lemIsS4IModel}
The relation $\rel_\Sigma $ is \altterm{forth--up confluent}{forth--up} and \term{forth--down confluent} on $ S$.
\end{lemma}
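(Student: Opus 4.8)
The plan is to peel off the transitive closure and then treat the two confluence conditions separately. Since $\rel_\Sigma$ is by definition the transitive closure of $\rel^1_\Sigma$, \Cref{lemmClosure} reduces the claim to showing that $\rel^1_\Sigma$ is both forth--up and forth--down confluent for $(W_\Sigma,\peq_\Sigma)$. Two elementary facts will be used freely: an initial segment of a member of $W_\Sigma$ is again a member of $W_\Sigma$ (the defining conditions of $W_\Sigma$ only constrain consecutive entries), and $\mathcal M^{\mathsf{S4I}}_{\mathrm c}$ is forth--up confluent, because $\mathsf{S4I}$ is $\ps$-regular. I also unfold $\Phi\rel^0_\Sigma\Psi$ as: there is $\Theta$ with $\Phi\rel_{\mathrm c}\Theta$, $\Psi\peq_{\mathrm c}\Theta$ and $\Psi^+\cap\Sigma=\Theta^+\cap\Sigma$.

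Forth--down confluence of $\rel^1_\Sigma$ is obtained by truncation. Suppose $\Gamma=(\Gamma_0,\dots,\Gamma_n)$ is an initial segment of $\Delta=(\Delta_0,\dots,\Delta_p)$ and $\Delta\rel^1_\Sigma\Delta'=(\Delta'_0,\dots,\Delta'_q)$, the latter witnessed by a non-decreasing matching $j_0\le\dots\le j_p=q$ with $\Delta_i\rel^0_\Sigma\Delta'_{j_i}$. Take $\Gamma'$ to be the initial segment $(\Delta'_0,\dots,\Delta'_{j_n})$ of $\Delta'$. Then $\Gamma'\in W_\Sigma$, $\Gamma'\peq_\Sigma\Delta'$, and $\Gamma\rel^1_\Sigma\Gamma'$ is witnessed by $(j_i)_{i\le n}$, whose top value $j_n$ is exactly the top index of $\Gamma'$; the required relations $\Gamma_i\rel^0_\Sigma\Delta'_{j_i}$ hold because $\Gamma_i=\Delta_i$ for $i\le n$.

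Forth--up confluence of $\rel^1_\Sigma$ is the heart of the matter. Suppose $\Gamma=(\Gamma_0,\dots,\Gamma_n)$ is an initial segment of $\Gamma'=(\Gamma_0,\dots,\Gamma_{n'})$ and $\Gamma\rel^1_\Sigma\Delta=(\Delta_0,\dots,\Delta_m)$ via a matching $j_0\le\dots\le j_n=m$. I would build the desired $\Delta'$ by processing $\Gamma_{n+1},\dots,\Gamma_{n'}$ in turn, maintaining after $k$ steps a tuple $\Delta^{(k)}=(\Delta_0,\dots,\Delta_m,\dots,\Delta_{m_k})\in W_\Sigma$ that extends $\Delta$ without altering its original entries, together with a non-decreasing matching $j_0\le\dots\le j_{n+k}=m_k$ with $\Gamma_i\rel^0_\Sigma\Delta^{(k)}_{j_i}$ for all $i\le n+k$. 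For the step from $\Gamma_{n+k}$ to $\Gamma_{n+k+1}$: unfold $\Gamma_{n+k}\rel^0_\Sigma\Delta^{(k)}_{m_k}$ to obtain $\Theta$ with $\Gamma_{n+k}\rel_{\mathrm c}\Theta$ and $\Delta^{(k)}_{m_k}\peq^0_\Sigma\Theta$, then apply forth--up confluence of $\mathcal M^{\mathsf{S4I}}_{\mathrm c}$ to $\Gamma_{n+k}\peq_{\mathrm c}\Gamma_{n+k+1}$ and $\Gamma_{n+k}\rel_{\mathrm c}\Theta$ to get $\Theta^\ast$ with $\Theta\peq_{\mathrm c}\Theta^\ast$ and $\Gamma_{n+k+1}\rel_{\mathrm c}\Theta^\ast$. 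Since $\peq_{\mathrm c}$ is inclusion of theories and $\Delta^{(k)}_{m_k}\peq_{\mathrm c}\Theta\peq_{\mathrm c}\Theta^\ast$, the $\Sigma$-fragment of $\Delta^{(k)}_{m_k}$ is contained in that of $\Theta^\ast$; I split on whether this containment is an equality. If it is, then $\Delta^{(k)}_{m_k}\peq^0_\Sigma\Theta^\ast$, hence $\Gamma_{n+k+1}\rel^0_\Sigma\Delta^{(k)}_{m_k}$, and I keep $\Delta^{(k+1)}=\Delta^{(k)}$ with $j_{n+k+1}=m_k$. If it is strict, then appending $\Theta^\ast$ to $\Delta^{(k)}$ yields a member of $W_\Sigma$ (a $\Sigma$-formula lying in $\Theta^\ast$ but not in $\Delta^{(k)}_{m_k}$ witnesses the third defining clause) with $\Gamma_{n+k+1}\rel^0_\Sigma\Theta^\ast$ via $\Theta^\ast$ itself, so I take this extension as $\Delta^{(k+1)}$ with $j_{n+k+1}=m_k+1$. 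After $n'-n$ steps, $\Delta':=\Delta^{(n'-n)}$ satisfies $\Delta\peq_\Sigma\Delta'$ and $\Gamma'\rel^1_\Sigma\Delta'$.

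The main obstacle is precisely this last construction: the extended tuple $\Delta'$ must stay inside $W_\Sigma$, which forces the $\Sigma$-content to grow strictly along each $\peq_{\mathrm c}$-step of $\Delta'$, while the bookkeeping of the non-decreasing matching, together with its top-to-top constraint, must be preserved simultaneously. The dichotomy ``the $\Sigma$-content grows or it does not'' dissolves the tension: when it does not grow we reuse the last entry of $\Delta'$ (collapsing the matching), and when it grows we are entitled to append a genuinely new entry. Everything else — that $\mathcal M^{\mathsf{S4I}}_\Sigma$ is a bi-intuitionistic model, and the elementary properties of $\peq^0_\Sigma$, $\rel^0_\Sigma$ and of initial segments — is routine.
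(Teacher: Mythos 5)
Your proof is correct and follows essentially the same route as the paper's: reduce to $\rel^1_\Sigma$ via the closure lemma, then extend the target sequence entry by entry using forth--up confluence of the canonical model together with the dichotomy on whether the $\Sigma$-fragment grows (your $\Theta^\ast$ is exactly the paper's $\Upsilon$). The only difference is that you also spell out the forth--down case explicitly (by truncating $\Delta'$ at index $j_n$), which the paper leaves implicit.
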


\begin{proof}
By Lemma~\ref{lemmClosure}, it suffices to show that $\rel^1_\Sigma $ is \altterm{forth--up confluent}{forth--up} and \term{forth--down confluent}.
Suppose that $\Phi \peq_\Sigma \Phi' $ and $\Phi \rel^1_\Sigma \Delta$ and write $\Phi' = (\Phi_i)_{i\leq m'}$, so that $\Phi=(\Phi_i)_{i\leq m}$ for some $m\leq m'$, and $\Delta = (\Delta_i)_{i\leq n}$.
We construct $\Delta' = (\Delta_i)_{i\leq n'}$ for some $n'\geq n$ such that $\Phi' \rel_\Sigma \Delta'$ and $\Delta \peq_\Sigma \Delta' $.
Given $k \in [m,m']$, we assume inductively that $(\Delta_i)_{i\leq r}$ have been built so that $(\Phi_i)_{i\leq k} \rel_\Sigma (\Delta_i)_{i\leq r}  $.
The base case with $r=m$ is already given by the assumption that $\Phi \rel_\Sigma  \Delta $.
For the inductive step, assume that $(\Phi_i)_{i\leq k} \rel_\Sigma (\Delta_i)_{i\leq r}  $.
Then in particular, $\Phi_k \rel^0_\Sigma \Delta_r$.
By the definition of $\rel_\Sigma^0$, there is $\Theta \in W_{\mathrm c} $ so that $\Phi_k \rel_{\mathrm c}  \Theta \seq^0_\Sigma \Delta_r$.
By forth--up confluence of $\mathcal M^{\sfi}_{\mathrm c} $, there is $\Upsilon$ such that $\Phi_{k+1} \rel_{\mathrm c}  \Upsilon \seq_{\mathrm c}  \Theta$.
If $ \Delta_r \cap \Sigma =  \Upsilon \cap \Sigma$, we observe that $\Delta_r \peq^0_\Sigma \Upsilon$.
Hence $\Phi_{k+1} \rel^0_\Sigma \Delta_r$, so that $(\Phi_i)_{i\leq k + 1} \rel^1_\Sigma (\Delta_i)_{i\leq r}  $.
In this case, we may simply set $r'=r $.
Otherwise, we set $r'=r+1$  and $\Delta_{r+1} = \Upsilon$.
In this case, we also have that $(\Phi_i)_{i\leq k} \rel^1_\Sigma (\Delta_i)_{i\leq r}  $.

For \term{forth--down} confluence, as above, if $\Gamma \rel^1_\Sigma \Delta$, then there is a non-decreasing sequence $j_1,\ldots j_m $ with $j_m = n$ such that $ \Gamma _i  \rel^0_\Sigma  \Delta _{j_i} $ for $i\leq m$.
Then, any $\Gamma' \peq_\Sigma \Gamma$ is of the form $\Gamma= (\Gamma _0,\ldots,\Gamma _{m' }  )$ with $m'\leq m$, and setting $\Delta'  = (\Delta _0,\ldots,\Delta _{j_{m' }}  )$, we readily see that $\Gamma' \rel^1_\Sigma \Delta' \peq_\Sigma\Delta $.
\end{proof}

\begin{lemma}\label{lemTruthDiamS4I}
Let $\Gamma \in W_\Sigma$.
\begin{enumerate}

\item If $\ps\varphi \in \Sigma$ then $\ps\varphi \in \ell( \Gamma ) $ if and only if there is $\Delta \ler _\Sigma\Gamma $ such that $ \varphi\in \ell( \Delta )$.

\item If $\nec\varphi \in \Sigma$ then $\nec\varphi \in \ell( \Gamma ) $ if and only if for every $\Delta \ler _\Sigma\Gamma $, $ \varphi\in  \ell(  \Delta )$.

\end{enumerate}

\end{lemma}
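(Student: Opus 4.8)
The plan is to establish \Cref{lemTruthDiamS4I} as the $\ps$- and $\nec$-cases of a truth lemma for $\mathcal M^{\mathsf{S4I}}_\Sigma$, parallel to \Cref{lemDiamCS4} and \Cref{lemBoxWit} for $\mathsf{CS4}$: for a tuple $\Gamma = (\Gamma_0,\dots,\Gamma_n)\in W_\Sigma$ the label $\ell^+(\Gamma)$ is $\Gamma_n\cap\Sigma$, so what must be shown is that $\ps\varphi\in\Gamma_n$ (respectively $\nec\varphi\in\Gamma_n$) is governed by the value of $\varphi$ in the top components of the tuples $\Delta$ with $\Gamma\rel_\Sigma\Delta$. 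All the work takes place inside $\mathcal M^{\mathsf{S4I}}_{\mathrm c}$, using two facts from \Cref{sec:completeness}: since $\mathsf{S4I}$ is $\ps$-regular, $\mathcal M^{\mathsf{S4I}}_{\mathrm c}$ is forth--up confluent and has the diamond-witness property ($\ps\chi\in\Phi$ iff there is $\Upsilon$ with $\Phi\rel_{\mathrm c}\Upsilon$ and $\chi\in\Upsilon$); and since $\ref{ax:cd}\in\mathsf{S4I}$, $\mathcal M^{\mathsf{S4I}}_{\mathrm c}$ is forth--down confluent, and the proof of \Cref{lemBoxWit} gives the box-witness property in the form $\nec\chi\notin\Phi$ iff there is $\Upsilon$ with $\Phi\rel_{\mathrm c}\Upsilon$ and $\chi\notin\Upsilon$.

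The key device, which I would isolate as an auxiliary claim, turns a single $\rel_{\mathrm c}$-step above the top of a tuple into a $\rel^1_\Sigma$-successor. Given $\Gamma = (\Gamma_0,\dots,\Gamma_n)$ and a theory $\Theta$ with $\Gamma_n\rel_{\mathrm c}\Theta$, build by downward recursion on $i$, applying forth--down confluence to $\Gamma_i\peq_{\mathrm c}\Gamma_{i+1}\rel_{\mathrm c}\Theta_{i+1}$, an increasing chain $\Theta_0\peq_{\mathrm c}\dots\peq_{\mathrm c}\Theta_n = \Theta$ with $\Gamma_i\rel_{\mathrm c}\Theta_i$ for all $i$. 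Replacing each maximal run of $\peq_{\mathrm c}$-consecutive, $\Sigma$-equal $\Theta_i$ by its first element produces a tuple $\Delta = (\Delta_0,\dots,\Delta_m)\in W_\Sigma$ (the strict-growth requirement holds because the $\Sigma$-parts strictly increase across block boundaries) together with a non-decreasing surjection $i\mapsto j_i$ onto $\{0,\dots,m\}$ with $j_n = m$ and $\Gamma_i\rel^0_\Sigma\Delta_{j_i}$ witnessed by $\Theta_i$; hence $\Gamma\rel^1_\Sigma\Delta$, so $\Gamma\rel_\Sigma\Delta$. Moreover $\Delta_m\peq_{\mathrm c}\Theta$ and $\Delta_m\cap\Sigma = \Theta\cap\Sigma$, so for every $\chi\in\Sigma$ we have $\chi\in\Delta_m$ iff $\chi\in\Theta$. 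The reverse reading — from $\Gamma\rel^1_\Sigma\Delta$ recover $\Theta$ with $\Gamma_n\rel_{\mathrm c}\Theta$, $\Delta_m\peq_{\mathrm c}\Theta$, $\Delta_m\cap\Sigma = \Theta\cap\Sigma$ — is immediate from the $j_n = m$ clause in the definition of $\rel^1_\Sigma$.

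For part~(1), the substantial direction (right to left) first decomposes $\Gamma\rel_\Sigma\Delta$ into a chain $\Gamma = \Delta^{(0)}\rel^1_\Sigma\dots\rel^1_\Sigma\Delta^{(k)} = \Delta$. Across one step, the reverse device together with the diamond-witness property transports $\varphi$ from the top of $\Delta^{(\ell)}$ into the recovered $\Theta$ and then yields $\ps\varphi$ in the top of $\Delta^{(\ell-1)}$ (this uses $\varphi\in\Sigma$, to move $\varphi$ between $\Theta$ and the top of $\Delta^{(\ell)}$); since $\ps\varphi\in\Sigma$ as well, the same step applied to $\ps\varphi$ produces $\ps\ps\varphi$ one level lower, which \ref{ax:trans:dia} collapses back to $\ps\varphi$, and iterating reaches $\ps\varphi\in\Gamma_n$ (the trivial case $\Gamma\rel_\Sigma\Gamma$ with $\varphi\in\Gamma_n$ being handled by \ref{ax:ref:dia}). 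For left to right: from $\ps\varphi\in\Gamma_n$ the diamond-witness property gives $\Theta$ with $\Gamma_n\rel_{\mathrm c}\Theta$ and $\varphi\in\Theta$, and the forward device produces the required $\Delta$ with $\Gamma\rel_\Sigma\Delta$ and $\varphi\in\Delta_m$.

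Part~(2) is dual. For left to right, again decompose into a $\rel^1_\Sigma$-chain; across one step the reverse device gives $\Theta$ with $\Gamma_n\rel_{\mathrm c}\Theta$ and the top of $\Delta^{(\ell)}$ $\Sigma$-equal to $\Theta$, and from $\nec\varphi$ in the top of $\Delta^{(\ell-1)}$ we get $\varphi\in\Theta$ directly from the definition of $\rel_{\mathrm c}$ and, using \ref{ax:trans:box}, also $\nec\varphi\in\Theta$; as $\varphi,\nec\varphi\in\Sigma$, both pass into the top of $\Delta^{(\ell)}$, so we may iterate and conclude $\varphi\in\Delta_m$. For right to left we argue contrapositively: if $\nec\varphi\notin\Gamma_n$, the box-witness property gives $\Theta$ with $\Gamma_n\rel_{\mathrm c}\Theta$ and $\varphi\notin\Theta$, and the forward device yields $\Delta$ with $\Gamma\rel_\Sigma\Delta$ and, since $\varphi\in\Sigma$, $\varphi\notin\Delta_m$, contradicting the hypothesis. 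The step I expect to be the main obstacle is the bookkeeping inside the device and the chain decomposition — keeping the two directions of $\rel^0_\Sigma$, the monotone index maps $j_i$, and the $\Sigma$-equivalence blocks consistent — and, throughout, being careful about exactly where the membership of $\ps\varphi$ and $\nec\varphi$ in $\Sigma$ is used when moving formulas through $\Sigma$-equal theories and up the transitive closure.
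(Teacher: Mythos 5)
Your proposal is correct and follows essentially the same route as the paper: your ``forward device'' is precisely the paper's construction of the chain $(\Theta_i)$ via forth--down confluence of the canonical model followed by keeping the first element of each $\Sigma$-block, and your stepwise transport of $\ps\varphi$ along the $\rel^1_\Sigma$-chain using \ref{ax:trans:dia} is the paper's back--up induction handling the transitive closure. Your $\nec$ case, including the contrapositive use of the witness from the proof of \Cref{lemBoxWit}, is exactly the dual argument the paper leaves to the reader.
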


\begin{proof}
Let $\Gamma= (\Gamma_i)_{i\leq n}$.
For the first claim, we first prove the easier right-to-left direction.
Suppose that $\Delta = (\Delta_i)_{i\leq m} \ler^1 _\Sigma\Gamma $ is such that $ \varphi\in \ell( \Delta )$, meaning that $\varphi \in \Delta_m$.
Then $\Gamma_n \rel^0_\Sigma \Delta_m $, which means that for some $\Theta$, $\Gamma_n \rel_{\mathrm c}  \Theta \seq^0_\Sigma \Delta_m $.
Then $\varphi \in \Theta$; hence $\ps\varphi \in \Gamma_n = \ell(\Gamma)$ by the truth lemma.
If instead $\Delta\ler_\Sigma \Gamma$, then there is a sequence
\[\Gamma = \Upsilon_0 \rel^1_\Sigma \Upsilon_1 \rel^1_\Sigma \ldots \rel^1_\Sigma \Upsilon_n = \Delta,\]
and backwards induction on $i$ shows that $\ps\varphi\in \ell(\Upsilon_i)$, so that in particular $\ps\varphi\in \ell(\Gamma)$.

For the left-to-right direction, suppose that $\ps\varphi \in  \ell( \Gamma ) $.
By backwards induction on $k\leq n$ we prove that there exists some sequence $ (\Theta_i)_{i = k}^n$ such that $ \Gamma_i \rel _{\mathrm c}  \Theta_i $ for each $i\in [k,n]$ and $\Theta_i\peq_{\mathrm c} \Theta_{i+1}$ if $i\in[k,n)$ (interval notation should be interpreted over the natural numbers).
In the base case $k=n$ and, by the truth lemma, there is $\Theta_n \ler_{\mathrm c}  \Gamma_n$ such that $\varphi\in \Theta_n$.
So, suppose that $ (\Theta_i)_{i = k+1}^n$ has been constructed with the desired properties.
Since the canonical model is \term{forth--down confluent}, there is some $\Theta_k$ such that $\Gamma_k \rel_{\mathrm c} \Theta_k\peq_{\mathrm c}  \Theta_{k+1} $, yielding the desired $\Theta_k$.

The problem is that the sequence $\Theta=(\Theta_i)_{i\leq n}$ may not be an element of $W_\Sigma$.
We instead choose a suitable subsequence $\Delta = (\Theta_{j_i})_{i\leq m}$, where $j_k$ is defined by (forward) induction on $k$.
For the base case, we set $j_0 = 0$, and $m_0 = 0$ (meaning that $\Delta$ currently has $m_0+1$ elements).
Now, suppose that $m_k$ has been defined as has been $j_i$ for $i\leq m_k$, in such a way that $(\Theta_{j_i})_{i \leq m_k} \in  W_\Sigma$ and $(\Gamma_i)_{i\leq k} \rel_\Sigma (\Theta_{j_i})_{i \leq m_k}$.
To define $m_{k+1} \geq m_k$ and $(\Theta_{j_i})_{i \leq m_{k+1}}$, we consider two cases.
First assume that there is $\psi\in \Sigma$ such that $ \psi \in  \Theta_{k+1} \setminus \Theta_{j_k}  $.
In this case, setting $m_{k+1} = m_k + 1$ and $j_{m_{k+1}} = k+1 $ we see that $(\Theta_{j_i})_{i \leq m_{k+1}}$ has all desired properties.
In particular, the existence of the formula $\psi$ guarantees that $(\Theta_{j_i})_{i \leq m_{k+1}} \in W_\Sigma$.
Otherwise, set $m_{k+1} = m_k$.
In this case we see that $\Gamma_{k+1} \rel_{\mathrm c}  \Theta_{k+1} \seq^0_\Sigma \Theta_{j_{m_k}}$, so that $\Gamma_{k+1} \rel^0_\Sigma \Theta_{j_{m_k}}$ and thus $(\Gamma_i)_{i\leq k+1} \rel_\Sigma (\Theta_{j_i})_{i \leq m_{k+1}}$.
That $(\Theta_{j_i})_{i \leq m_{k+1}} \in W_\Sigma$ follows from the fact that $m_{k+1} = m_k$ and the induction hypothesis.
The claim then follows by setting $m = m_n$ and $\Delta = (\Theta_{j_i})_{i \leq m  }$.

The second claim is proven similarly, but by contraposition. We leave the details to the reader.
\end{proof}

From Lemmas \ref{lemTruthImpS4I} and \ref{lemTruthDiamS4I}, we immediately obtain the following.

\begin{lemma}\label{lemTruthS4I}
For $\Gamma \in W_\Sigma$ and $\varphi\in \Sigma$, we have $(\mathcal M^{\sfi}_\Sigma,\Gamma) \models\varphi $ if and only if $\varphi\in \ell(\Gamma)$.
\end{lemma}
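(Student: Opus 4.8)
The plan is to prove Lemma~\ref{lemTruthS4I} by structural induction on $\varphi\in\Sigma$, exactly parallel to the truth lemmas already established (\Cref{lem:truth-lemma:regular}, \Cref{lem:truth-lemma}, \Cref{lemTruthCS4}), using the two preparatory lemmas \ref{lemTruthImpS4I} and \ref{lemTruthDiamS4I} to handle the hard connectives. Since $\Sigma$ is closed under subformulas, every subformula encountered in the induction is again in $\Sigma$, so the inductive hypothesis always applies. First I would dispatch the atomic case: for a propositional variable $p\in\Sigma$, by definition $\Gamma\in V_\Sigma(p)$ iff $p\in\Gamma_n^+\cap\Sigma$, which since $p\in\Sigma$ is just $p\in\Gamma_n^+=\ell^+(\Gamma)$; and the case $\varphi=\bot$ is vacuous because $\bot\notin\ell^+(\Gamma)$ (each $\Gamma_i\in W_{\mathrm c}$ is consistent, as $\mathcal M^{\mathsf{S4I}}_{\mathrm c}$ consists of consistent prime theories) and $\bot$ is never satisfied on an infallible model.

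For the Boolean connectives $\wedge$ and $\vee$: recalling that $\ell^+(\Gamma)=\Gamma_n^+\cap\Sigma$ and that $\Gamma_n^+$ is a prime theory, we get $\psi\wedge\chi\in\ell^+(\Gamma)$ iff $\psi\in\ell^+(\Gamma)$ and $\chi\in\ell^+(\Gamma)$ (using closure under conjunction and under subformulas), and similarly $\psi\vee\chi\in\ell^+(\Gamma)$ iff $\psi\in\ell^+(\Gamma)$ or $\chi\in\ell^+(\Gamma)$ (using primality); apply the inductive hypothesis and the clauses for $\wedge,\vee$ in the satisfaction relation. For $\varphi=\psi\imp\chi$, \Cref{lemTruthImpS4I} says precisely that $\psi\imp\chi\in\ell^+(\Gamma)$ iff for all $\Delta\seq_\Sigma\Gamma$ with $\psi\in\ell^+(\Delta)$ we have $\chi\in\ell^+(\Delta)$; by the inductive hypothesis for $\psi$ and $\chi$ this is equivalent to: for all $\Delta\seq_\Sigma\Gamma$, $(\mathcal M^{\mathsf{S4I}}_\Sigma,\Delta)\models\psi$ implies $(\mathcal M^{\mathsf{S4I}}_\Sigma,\Delta)\models\chi$, which is the semantic clause for $\imp$.

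For the modal cases, I would invoke \Cref{lemTruthDiamS4I}. Crucially, by Lemmas~\ref{lemS4IShallow} and~\ref{lemIsS4IModel}, $\mathcal M^{\mathsf{S4I}}_\Sigma$ is forth--up and forth--down confluent, so by \Cref{lemClassical} the $\ps$ and $\nec$ clauses may be evaluated ``classically'': $(\mathcal M^{\mathsf{S4I}}_\Sigma,\Gamma)\models\ps\psi$ iff there is $\Delta\ler_\Sigma\Gamma$ with $(\mathcal M^{\mathsf{S4I}}_\Sigma,\Delta)\models\psi$, and $(\mathcal M^{\mathsf{S4I}}_\Sigma,\Gamma)\models\nec\psi$ iff for all $\Delta\ler_\Sigma\Gamma$, $(\mathcal M^{\mathsf{S4I}}_\Sigma,\Delta)\models\psi$. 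Combining this with \Cref{lemTruthDiamS4I}(1), $\ps\psi\in\ell^+(\Gamma)$ iff there is $\Delta\ler_\Sigma\Gamma$ with $\psi\in\ell^+(\Delta)$, iff (inductive hypothesis) there is $\Delta\ler_\Sigma\Gamma$ with $(\mathcal M^{\mathsf{S4I}}_\Sigma,\Delta)\models\psi$, iff $(\mathcal M^{\mathsf{S4I}}_\Sigma,\Gamma)\models\ps\psi$; and dually with \Cref{lemTruthDiamS4I}(2) for $\nec\psi$. The only subtlety to be careful about is that $\ps\psi\in\Sigma$ (resp.\ $\nec\psi\in\Sigma$) forces $\psi\in\Sigma$, so the inductive hypothesis genuinely applies — but this is immediate from $\Sigma$ being subformula-closed. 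There is no real obstacle remaining here: the genuine content has been front-loaded into Lemmas~\ref{lemTruthImpS4I}, \ref{lemIsS4IModel}, and especially \ref{lemTruthDiamS4I}, so the proof is a routine assembly. I would write it as a short paragraph-by-paragraph induction, explicitly treating only the $\imp$, $\ps$, and $\nec$ cases and noting the others are as in \Cref{lem:truth-lemma:regular}.
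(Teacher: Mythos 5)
Your proof is correct and follows essentially the same route as the paper, which simply states that the lemma follows immediately from Lemmas~\ref{lemTruthImpS4I} and \ref{lemTruthDiamS4I} (together with the classical evaluation of $\ps$ and $\nec$ afforded by Lemmas~\ref{lemIsS4IModel} and \ref{lemClassical}); you have just written out explicitly the routine structural induction that the paper leaves implicit.
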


\begin{lemma}\label{lemS4IShallowComp}
For $\varphi\in \Sigma$, we have $ \varphi \in \sfi$ if and only if $\mathcal M^{\sfi}_\Sigma  \models\varphi$.
\end{lemma}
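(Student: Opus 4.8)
The plan is to obtain Lemma~\ref{lemS4IShallowComp} as a routine consequence of the truth lemma for $\mathcal M^{\mathsf{S4I}}_\Sigma$ (Lemma~\ref{lemTruthS4I}) together with the strong completeness of $\mathsf{S4I}$ (Theorem~\ref{thmComplReg}), in exact analogy with the proof of Lemma~\ref{lemCS4shallowComp}. First I would record, using Theorem~\ref{thmComplReg}, the truth lemma for the canonical model (Lemma~\ref{lem:truth-lemma:regular}), and the fact that $\mathsf{S4I}$ frames are infallible, that $\varphi \in \mathsf{S4I}$ if and only if $\varphi \in \Phi$ holds for every $\Phi \in W_{\mathrm c}$.

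For the forward direction, suppose $\varphi \in \mathsf{S4I}$ and let $\Gamma = (\Gamma_0,\ldots,\Gamma_n)$ be an arbitrary element of $W_\Sigma$. Then $\varphi \in \Gamma_n$, and therefore $\varphi \in \ell^+(\Gamma)$ since $\varphi \in \Sigma$; hence $(\mathcal M^{\mathsf{S4I}}_\Sigma,\Gamma) \models \varphi$ by Lemma~\ref{lemTruthS4I}. Since $\Gamma$ was arbitrary, $\mathcal M^{\mathsf{S4I}}_\Sigma \models \varphi$.

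For the converse, suppose $\varphi \notin \mathsf{S4I}$, so that $\varnothing$ is $\varphi$-consistent. By Lemma~\ref{lem:lindenbaum} there is a consistent prime $\mathsf{S4I}$-theory $\Phi$ with $\varphi \notin \Phi$; thus $\Phi \in W_{\mathrm c}$. The one-element sequence $(\Phi)$ satisfies the conditions defining $W_\Sigma$ vacuously (there is no index below the length), so $(\Phi) \in W_\Sigma$, and $\ell^+((\Phi)) = \Phi \cap \Sigma$ does not contain $\varphi$. Lemma~\ref{lemTruthS4I} then gives $(\mathcal M^{\mathsf{S4I}}_\Sigma,(\Phi)) \not\models \varphi$, and therefore $\mathcal M^{\mathsf{S4I}}_\Sigma \not\models \varphi$. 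No real obstacle arises here: all of the substantive work---the forest-like structure, the bound on $\peq_\Sigma$-chains, the confluence of $\rel_\Sigma$, and the truth lemma---has already been carried out in Lemmas~\ref{lemS4IShallow}, \ref{lemIsS4IModel}, \ref{lemTruthImpS4I}, \ref{lemTruthDiamS4I}, and \ref{lemTruthS4I}; the only point meriting a word of justification is that a singleton tuple always lies in $W_\Sigma$, so that a canonical world witnessing $\varphi \notin \mathsf{S4I}$ can be lifted into the shallow model.
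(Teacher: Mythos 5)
Your proposal is correct and follows essentially the same route as the paper: the forward direction uses that the last component $\Gamma_n$ of any tuple is a prime theory containing all $\mathsf{S4I}$-theorems, so $\varphi\in\ell^+(\Gamma)$ and Lemma~\ref{lemTruthS4I} applies, and the converse lifts a canonical world omitting $\varphi$ into $W_\Sigma$ as a singleton sequence. The only cosmetic difference is that you invoke Lemma~\ref{lem:lindenbaum} explicitly to produce that world, where the paper simply cites the canonical-model construction.
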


\begin{proof}
We know that $ \varphi \in \sfi$ if and only if $ \mathcal M^{\sfi}   \models\varphi$.
Now, given $\Gamma \in W_\Sigma$, we have $\ell(\Gamma)=\Theta \cap \Sigma$, for some \term{prime} set $\Theta$ containing all derivable formulas, so $\varphi \in \ell(\Gamma)$. \Cref{lemTruthS4I} yields that $ \varphi \in \sfi$ implies $(\mathcal M^{\sfi}_\Sigma,\Gamma)   \models\varphi$.
Conversely, if $\varphi \not\in \sfi$ then there is $\Gamma \in W_{\mathrm c} $ such that $\varphi\not\in \Gamma $, which setting $\Gamma'\! =\!(\Gamma)$ to be a singleton sequence yields $(\mathcal M^{\sfi}_\Sigma, \Gamma')   \not \models\varphi$.
\end{proof}

\begin{theorem}[{\altterm{def:sfi}{$\sfi$}} finite frame property]\label{thm:s4i_finite_frame}
The logic \altterm{def:sfi}{$\sfi$} has the finite frame property. That is, if a formula $\varphi$ is \term{falsifiable} (respectively~\term{satisfiable}) on a \altterm{def:sfi}{$\sfi$} frame, then $\varphi$ is \term{falsifiable} (respectively \term{satisfiable}) on a finite \altterm{def:sfi}{$\sfi$} frame. 
\end{theorem}

\begin{proof}
In view of \Cref{thmShallowtoFin}, it suffices to show that \altterm{def:sfi}{$\sfi$} has the \term{shallow}, \term{forest-like} model property.
Fix a formula $\varphi$ and let $\Sigma$ be the set of subformulas of $\varphi$.
By Lemmas \ref{lemS4IShallow} and \ref{lemIsS4IModel}, $\mathcal M^{\sfi}_\Sigma$ is a model based on a \term{shallow}, \term{forest-like} \altterm{def:sfi}{$\sfi$} frame, and by \Cref{lemS4IShallowComp}, $\mathcal M^{\sfi}_\Sigma  \models\varphi$ if and only if $\varphi \in \sfi$, as needed.
\end{proof}

\Cref{lemStrongBisBound} yields a bound on the cardinality of the domain of  $\mathcal M^{\sfi}_\Sigma  $ of $2^{(|\Sigma|+1)|\Sigma|}_{|\Sigma|+1}$.
Since this bound is computable, we obtain decidability of \term{def:sfi}[$\sfi$].

\begin{corollary}[{\altterm{def:sfi}{$\sfi$}} decidability]
The logic \term{def:sfi}[$\sfi$] is decidable.
\end{corollary}

As for other logics, we could bound the complexity of \term{def:sfi}[$\sfi$], although the best we obtain is \textsc{tower}.

\section{Concluding remarks}\label{sec:conclusions}

%\color{blue}

%\comment{B: check this still looks OK}
We have settled the long-standing problems of the finite frame property for \altterm{def:csf}{$\csf$} and the decidability of \altterm{def:gsf}{$\gsf$} and \altterm{def:gsfc}{$\gsfc$}.
We also introduced a logic \altterm{def:sfi}{$\sfi$} closely related to \altterm{def:isf}{$\isf$} that enjoys the finite frame property as well.
The logics we considered correspond to classes of models with combinations of the \term{forth--up confluent}[forth--up], \term{back--up confluent}[back--up], and \altterm{forth--down confluent}{forth--down confluence} properties.
We have already sketched in the discussion regarding \Cref{figS4IRel} why we do not obtain the same result for \altterm{def:isf}{$\isf$}, but there are a handful of other logics that may be defined in this fashion for which our techniques should be applicable.

We could also consider logics with the \term{forth--up confluent}[forth--up confluence] property alone, the basis of what we call \emph{\altterm{regularlogic}{$\ps$-regular}} logics.
We have not provided an axiomatisation for the logic of {reflexive} {transitive} \altterm{regular}{$\ps$-regular} frames, as our completeness proofs rely on either Axiom~\ref{ax:fs} or Axiom~\ref{ax:cd} being available.
Note that this class does not validate $\nec p\to\nec\nec p$. First results on logics of frames satisfying only forth--up confluence were recently presented in \cite{bagageol23}, where the $\mathsf K$ version was axiomatised and shown to be decidable.

Similarly, we may consider the class of \term{forth--down confluent} frames.
While Axiom~\ref{ax:cd} holds on frames that are \term{forth--up} \emph{and} \term{forth--down confluent}, it is unclear how the class of \term{forth--down confluent} frames may be axiomatised.
On the other hand, it seems that the \term{shallow} model construction we have provided for \altterm{def:sfi}{$\sfi$} should readily adapt to frames satisfying only the \term{forth--up confluent}[forth--up confluence] or the \altterm{forth--down confluent}{forth--down confluence} property, so our techniques should be applicable to these classes of frames.
All that is needed is to find the respective axiomatisations.%\comment{B: mention the 2023 KR paper

%M: Philippe has a new one on IML where they study a variant where, as far as I could see uses a weaker version of the CD axiom~\cite{bagageol23}. Maybe it should be referenced as well.}
In total, there are eight logics that could be obtained from combinations of the three confluence frame conditions, and eight more for their \altterm{upward linear}{upward-linear} variants.
The study of \altterm{def:sfi}{$\sfi$}-like logics whose frames satisfy both \altterm{forth--up confluent}{forth--up} and \altterm{forth--down confluent}{forth--down confluence} has recently been expanded to the enriched \emph{bi}-intuitionistic modal language \cite{KR2023-26}.
Note that the decidability results in that paper are not obtained directly via the \term{shallow} model property, given that models can no longer be assumed to be forest-like in the bi-intuitionistic setting.

It is worth noting that G\"odel modal logics are of independent interest (see for example~\cite{Caicedo2010StandardGM,MetcalfeO09}), with the G\"odel variants of the modal logics $\mathsf{K}$ and $\mathsf{S5}$ enjoying the finite model property~\cite{CaicedoMRR13}.
However, the decidability of \altterm{def:gsf}{$\gsf$} and \altterm{def:gsfc}{$\gsfc$} remained challenging problems, since these logics do not enjoy a finite model property for real-valued semantics.
We have shown how birelational semantics do not have this issue, and indeed the logics enjoy the finite frame property for this semantics, and hence are decidable.
We believe that this work could lead to a systematic treatment of other prominent G\"odel modal logics whose decidability was previously unknown, by first providing them with a birelational interpretation and then establishing the finite frame property under this semantics.

Finally, an interesting question left open is that of the complexity of the logics we have studied.
The exponential bound on the size of \altterm{def:gsf}{$\gsf$} and \altterm{def:gsfc}{$\gsfc$} models yields a \textsc{nexptime} upper bound on complexity, but it is possible that validity is in fact \textsc{pspace}-complete since this is the case of G\"odel linear temporal logic defined by similar frame conditions \cite{gtlwollic}.
In the case of \altterm{def:csf}{$\csf$}, the only known lower bound is \textsc{pspace}, inherited from the intuitionistic propositional fragment~\cite{statman1979intuitionistic}.
Closing this gap remains an interesting open problem.
Our conjecture is that \altterm{def:csf}{$\csf$} is indeed \textsc{pspace}-complete.
Meanwhile, the finite models we have produced for \altterm{def:sfi}{$\sfi$} are superexponential in size, but it may be possible to obtain smaller models via a subsequent selection.
In view of the non-primitive-recursive complexity of bimodal logics with similar frame properties~\cite{pml}, it is unclear whether an elementary upper bound may be found.
We expect the results we have presented to be an important step in determining the complexity of these four logics.
\color{black}

\section*{Acknowledgments}
David Fern\'andez-Duque and Brett McLean were partially funded by the SNSF--FWO Lead Agency Grant 200021L\_196176 (SNSF)\slash G0E2121N (FWO).
David Fer\'andez-Duque was also partially funded by the Spanish Ministry of Science and Innovation Grant PID2023-149556NB-I00.
Martin Di\'eguez was partially funded by the research project CTASP (\textit{Région Pays de la Loire}, France).

\end{document}